\documentclass{article}
\usepackage[left=1in, right=1in, top=1in, bottom=1in]{geometry}
\usepackage{enumitem}
\usepackage{authblk}
\usepackage{graphicx}	
\usepackage{amsmath, amsthm, amssymb}
\usepackage{xspace}
\usepackage{comment}
\usepackage{hyperref}
\usepackage[dvipsnames]{xcolor}
\usepackage{algorithm}
\usepackage{algpseudocode}
\usepackage{todonotes}
\usepackage{mathtools}
\usepackage[capitalize]{cleveref}
\usepackage{ifthen}
\usepackage[mathlines]{lineno}

\newcommand{\cc}[1]{\ensuremath{\mathsf{#1}}}
\newcommand{\algprobm}[1]{\textsc{#1}\xspace}

\newcommand{\betacc}[1]{\ifthenelse{\equal{#1}{1}}{\exists^{\log n}}{\exists^{\log^{#1}n}}} 
\newcommand{\alphacc}[1]{\ifthenelse{\equal{#1}{1}}{\forall^{\log n}}{\forall^{\log^{#1}n}}} 

\newcommand{\Soc}{\text{Soc}}
\newcommand{\Fac}{\text{Fac}}
\newcommand{\F}{\mathbb{F}}

\renewcommand{\setminus}{\mysetminus}
\newcommand{\mysetminusD}{\raisebox{.8pt}{\hbox{\tikz{\draw[line width=0.6pt,line cap=round] (3.5pt,0pt) -- (0,5.2pt);}}}}
\newcommand{\mysetminusT}{\mysetminusD}
\newcommand{\mysetminusS}{\raisebox{.5pt}{\hbox{\tikz{\draw[line width=0.45pt,line cap=round] (2.2pt,0) -- (0,3.8pt);}}}}
\newcommand{\mysetminusSS}{\raisebox{.35pt}{\hbox{\tikz{\draw[line width=0.4pt,line cap=round] (1.5pt,0) -- (0,2.8pt);}}}}

\newcommand{\mysetminus}{\mathbin{\mathchoice{\mysetminusD}{\mysetminusT}{\mysetminusS}{\mysetminusSS}}}

\theoremstyle{plain}
\newtheorem{theorem}{Theorem}[section]
\newtheorem{proposition}[theorem]{Proposition}
\newtheorem{corollary}[theorem]{Corollary}
\newtheorem{lemma}[theorem]{Lemma}

\newtheorem{observation}[theorem]{Observation}

\theoremstyle{definition}
\newtheorem{definition}[theorem]{Definition}
\newtheorem{remark}[theorem]{Remark}
\newtheorem{question}[theorem]{Problem}

\newcommand{\wt}{\text{wt}}

\DeclareMathOperator{\Inn}{Inn}
\DeclareMathOperator{\Aut}{Aut}
\DeclareMathOperator{\ncl}{ncl}

\DeclareMathOperator{\rad}{Rad}

\DeclareMathOperator{\poly}{poly}

\DeclareMathOperator{\rk}{rk}

\newcommand*{\ComplexityClass}[1]{\ensuremath{\mathsf{#1}}\xspace}

\newcommand*{\LogSpace}{\ComplexityClass{L}}

\newcommand{\AC}{\ComplexityClass{AC}}
\newcommand{\SAC}{\ComplexityClass{SAC}}

\newcommand{\FOLL}{\ComplexityClass{FOLL}}
\newcommand{\FOPLL}{\ensuremath{\textsf{FOLL}^{O(1)}}}

 %
 %


\title{On the Complexity of Identifying Groups without Abelian Normal Subgroups: Parallel, First Order, and GI-Hardness}

\author[1,2]{Joshua A. Grochow}
\author[3]{Dan Johnson}
\author[3]{Michael Levet}
\affil[1]{Department of Computer Science, University of Colorado Boulder}
\affil[2]{Department of Mathematics, University of Colorado Boulder}
\affil[3]{Department of Computer Science, College of Charleston}

\begin{document}
\maketitle
\begin{abstract}
In this paper, we exhibit an $\textsf{AC}^{3}$ isomorphism test for groups without Abelian normal subgroups (a.k.a. Fitting-free groups), a class for which isomorphism testing was previously known to be in $\mathsf{P}$ (Babai, Codenotti, and Qiao; ICALP '12). Here, we leverage the fact that $G/\text{PKer}(G)$ can be viewed as permutation group of degree $O(\log |G|)$. As $G$ is given by its multiplication table, we are able to implement the solution for the corresponding instance of \algprobm{Twisted Code Equivalence} in $\textsf{AC}^{3}$.

In sharp contrast, we show that when our groups are specified by a generating set of permutations, isomorphism testing of Fitting-free groups is at least as hard as \algprobm{Graph Isomorphism} and \algprobm{Linear Code Equivalence} (the latter being \algprobm{GI}-hard and having no known subexponential-time algorithm).

Lastly, we show that any Fitting-free group of order $n$ is identified by $\textsf{FO}$ formulas (without counting) using only $O(\log \log n)$ variables. This is in contrast to the fact that there are infinite families of Abelian groups that are not identified by \textsf{FO} formulas with $o(\log n)$ variables (Grochow \& Levet, FCT '23).
\end{abstract}

\thispagestyle{empty}

\newpage

\setcounter{page}{1}

\section{Introduction}

The \algprobm{Group Isomorphism} problem (\algprobm{GpI}) takes as input two finite groups $G$ and $H$, and asks if there exists an isomorphism $\varphi : G \to H$. When the groups are given by their multiplication (Cayley) tables, it is known that $\algprobm{GpI}$ belongs to $\textsf{NP} \cap \textsf{coAM}$. The generator-enumeration strategy has time complexity $n^{\log_{p}(n) + O(1)}$, where $n$ is the order of the group and $p$ is the smallest prime dividing $n$. This strategy was independently discovered by Felsch and Neub\"user \cite{FN} and Tarjan (see \cite{MillerTarjan}). The parallel complexity of the generator-enumeration strategy has been gradually improved \cite{LiptonSnyderZalcstein, Wolf, ChattopadhyayToranWagner, TangThesis}. Collins, Grochow, Levet, and Weiß recently improved the bound for the generator enumeration strategy to
\[
\exists^{\log^{2} n} \forall^{\log n} \exists^{\log n}\textsf{DTISP}(\text{polylog}(n), \log(n)),
\]
which can be simulated by depth-$4$ $\textsf{quasiAC}^{0}$ circuits of size $n^{O(\log^4 n)}$ \cite{CGLWISSAC}. Algorithmically, the best known bound for \algprobm{GpI} is $n^{(1/4)\log(n)+O(1)}$, due to Rosenbaum \cite{Rosenbaum2013BidirectionalCD} and Luks \cite{LuksCompositionSeriesIso} (see \cite[Sec. 2.2]{GR16}). Even the impressive body of work on practical algorithms for this problem in more succinct input models, led by Eick, Holt, Leedham-Green and O'Brien (e.\,g., \cite{BEO02, ELGO02, BE99, CH03}) still results in an $n^{\Theta(\log n)}$-time algorithm in the general case (see \cite[Page 2]{WilsonSubgroupProfiles}).

In practice, such as working with computer algebra systems, the Cayley model is highly unrealistic. Instead, the groups are given by generators as permutations or matrices, or as black-boxes. In the setting of permutation groups, \algprobm{GpI} belongs to $\textsf{NP}$ \cite{LuksReduction}. When the groups are given by generating sets of matrices, or as black-boxes, $\algprobm{GpI}$ belongs to $\textsf{Promise}\Sigma_{2}^{p}$ \cite{BabaiSzemeredi}; it remains open as to whether $\textsf{GpI}$ belongs to $\textsf{NP}$ or $\textsf{coNP}$ in such succinct models.  In the past several years, there have been significant advances on algorithms with worst-case guarantees on the serial runtime for special cases of this problem; we refer to \cite{GQCoho, DietrichWilson, GrochowLevetWL} for a survey.

Key motivation for \algprobm{GpI} comes from its relationship to the \algprobm{Graph Isomorphism} problem (\algprobm{GI}). When the groups are given verbosely by their multiplication tables, \algprobm{GpI} is $\textsf{AC}^{0}$-reducible to $\algprobm{GI}$ \cite{ZKT}. On the other hand, there is no reduction from \algprobm{GI} to \algprobm{GpI} computable by $\textsf{AC}$ circuits of depth $o(\log n / \log \log n)$ and size $n^{\text{polylog}(n)}$ \cite{ChattopadhyayToranWagner}. In light of Babai's breakthrough result that $\algprobm{GI}$ is quasipolynomial-time solvable \cite{BabaiGraphIso}, $\algprobm{GpI}$ in the Cayley model is a key barrier to improving the complexity of $\algprobm{GI}$. There is considerable evidence suggesting that $\algprobm{GI}$ is not $\textsf{NP}$-complete \cite{Schoning, BuhrmanHomer, ETH, BabaiGraphIso, GILowPP, ArvindKurur}. As verbose $\algprobm{GpI}$ reduces to $\algprobm{GI}$, this evidence also suggests that $\algprobm{GpI}$ is not $\textsf{NP}$-complete. 

There is a natural divide-and-conquer strategy for \algprobm{GpI}, which proceeds as follows. Given groups $G_1, G_2$, we first seek to decompose the groups in terms of normal subgroups $N_{i} \trianglelefteq G_i$ ($i \in [2]$) and the corresponding quotients $G_{i}/N_{i}$. It is then necessary (but not sufficient) to decide if $N_{1} \cong N_{2}$ and $G_{1}/N_{1} \cong G_{2}/N_{2}$. The solvable radical of a finite group $G$ is the (unique) largest normal solvable subgroup of $G$; recall that solvability can be defined inductively by saying that a group is solvable if it is either Abelian, or if it has a normal Abelian subgroup $N$ such that $G/N$ is solvable. Thus every finite group can be decomposed into its solvable radical $N= \rad(G)$ and the corresponding quotient $G/\rad(G)$. The latter quotient has no Abelian normal subgroups; we will refer to the class of such groups as \emph{Fitting-free}.\footnote{The terms \textit{Fitting-free} or \textit{trivial-Fitting} are common in the group theory literature (see, e.\,g., \cite{CH03, Holt2005HandbookOC}) while the term \textit{semisimple} has been used in the complexity theory literature (see, e.\,g., \cite{BCGQ, BCQ}).} Fitting-free groups are thus a natural class of groups in a variety of settings, and serve not only as a necessary special case for group isomorphism testing, but also as a potentially useful stepping stone towards the general case (e.\,g., \cite{CH03, GQCoho, GR16}).

The class of Fitting-free groups has received significant attention. In the multiplication (Cayley) table model, it took a series of two papers \cite{BCGQ, BCQ} to establish a polynomial-time isomorphism test for this family. In the setting of permutation groups, Cannon and Holt \cite{CH03} exhibited an efficient practical isomorphism test. Fitting-free groups have also been investigated in the context of the Weisfeiler--Leman (WL) algorithm. Grochow and Levet \cite{GLDescriptiveComplexity} showed that this family is identified by the $7$-dimensional $2$-ary WL algorithm using only $7$ rounds. Brachter \cite{BrachterThesis} subsequently showed that this family has $1$-ary WL-dimension $O(\log \log n)$. Recently, Das and Thakkar \cite{DasThakkarMPD} exhibited a Las Vegas polynomial-time algorithm to compute a minimum faithful permutation representation for Fitting-free groups, when they are given succinctly by generating sequences of permutations. 

\noindent \\ \textbf{Main Results.} In this paper, we investigate the complexity of identifying Fitting-free groups. Our first main result is the following.

\begin{theorem} \label{thm:main}
Isomorphism of Fitting-free groups given by their multiplication tables can be decided in $\mathsf{AC}^3$. More specifically, there is a uniform $\mathsf{AC}^3$ algorithm that, given two groups $G,H$ of order $n$, decides if both are Fitting-free, and if they are, correctly decides whether they are isomorphic.
\end{theorem}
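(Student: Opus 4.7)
The strategy is to adapt the Babai--Codenotti--Qiao polynomial-time algorithm for Fitting-free groups to the parallel setting, showing that each of its main phases can be carried out by $\mathsf{AC}^3$ circuits on inputs given as multiplication tables. The plan consists of four phases.

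First, I would verify that the input groups are Fitting-free and compute the socle of each. For a Fitting-free group, the socle $\Soc(G)$ is the direct product of the non-Abelian minimal normal subgroups of $G$, and each minimal normal subgroup is itself a direct product of copies of a single non-Abelian finite simple group. Identifying minimal normal subgroups as normal closures of single elements, grouping them by isomorphism type, and verifying Fitting-freeness by checking that $C_G(\Soc(G))$ is trivial, are each predicates over short tuples from the multiplication table and should be implementable in low-depth $\mathsf{AC}$.

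Second, I would exploit the fact that a Fitting-free group $G$ embeds into $\Aut(\Soc(G))$ via conjugation, so $G$ is determined up to isomorphism by its embedding into $\Aut(\Soc(G)) \cong \prod_j \left( \Aut(T_j) \wr S_{k_j}\right)$, where the $T_j$ are the distinct non-Abelian simple groups appearing in the socle with multiplicities $k_j$. Crucially, $\sum_j k_j = O(\log n)$ since each non-Abelian simple group has order at least $60$. The conjugation action of $G$ on the set of simple direct factors yields a homomorphism $G \to \prod_j S_{k_j}$ whose kernel is $\pker(G)$, so that $G/\pker(G)$ is a permutation group of degree $d = O(\log n)$.

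Third, I would reduce the isomorphism question to an instance of \algprobm{Twisted Code Equivalence}. An isomorphism $G \to H$ decomposes as: a permutation of the simple direct factors compatible with isomorphism types; isomorphisms between corresponding simple factors of $\Soc(G)$ and $\Soc(H)$; and outer-automorphism data consistent with the actions of $G/\pker(G)$ and $H/\pker(H)$. Collecting this data produces a twisted code equivalence instance whose ``twisting group'' is $G/\pker(G)$, living in $S_d$.

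Finally, I would solve the resulting twisted code equivalence instance in $\mathsf{AC}^3$. Because the twisting group acts on only $d = O(\log n)$ coordinates, the number of permutations of the factors is at most $d! = n^{O(\log \log n)}$, and they can be enumerated explicitly in parallel. I expect the main obstacle to lie here: showing that the compatibility checks---together with the arithmetic of composing permutations of $O(\log n)$ letters with automorphisms of the simple factors, and verifying that a candidate map is in fact a homomorphism---can be organized into an $\mathsf{AC}^3$ circuit. Since the groups are given verbosely by their multiplication tables, standard parallel subroutines (iterated multiplication in $S_d$, simulation of the action of a small permutation group, and equality tests on polylogarithmic-length tuples) should fit within an $\mathsf{AC}^1$--$\mathsf{AC}^2$ depth budget, leaving enough headroom for the outer enumeration and the final existential check to land in $\mathsf{AC}^3$.
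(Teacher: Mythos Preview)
Your first three phases track the paper's approach closely: checking Fitting-freeness, computing and decomposing the socle, embedding $G$ into $\Aut(\Soc(G))$, noting that $G/\pker(G)$ is a permutation group of degree $d=O(\log n)$, and reducing isomorphism to an instance of \algprobm{Twisted Code Equivalence} are exactly the steps in Babai--Codenotti--Qiao that the paper parallelizes.

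The gap is in phase four. You propose to enumerate all $d!$ permutations of the simple factors explicitly in parallel. But $d! = (\Theta(\log n))! = n^{\Theta(\log\log n)}$ is quasi-polynomial, not polynomial, so your circuit would have quasi-polynomial size and would land in $\mathsf{quasiAC}^3$, not $\mathsf{AC}^3$. The paper is explicit that this is the main obstacle: merely scaling the domain to $O(\log n)$ and using brute force or generic space-for-depth simulation only yields quasi-polynomial size circuits.

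What the paper does instead is parallelize the \emph{internal structure} of the BCQ algorithm rather than replace it by enumeration. The key ingredients are: (i) a $\poly(\log\log n)$-depth, $\poly(n)$-size implementation of Luks's parallel \algprobm{Coset Intersection} algorithm on domains of size $O(\log n)$, obtained by analyzing his recursion tree and using the Babai--Luks--Seress $\mathsf{NC}$ permutation-group toolkit (order, membership, point stabilizers, Schreier generators) in the small-domain regime; (ii) a dynamic-programming solution to \algprobm{Twisted Code Equivalence} in which each of the $O(\log n)$ stages combines previously computed cosets via binary trees of coset intersections, giving depth $O((\log n)(\log|\Gamma_{\max}|)\poly(\log\log n))$; and (iii) listing permutational isomorphisms of transitive small-domain groups via structure trees and Cameron's bounds on non-giant primitive groups, rather than enumerating $S_d$. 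These are what bring the size down to $\poly(n)$ while keeping depth within $O(\log^3 n)$. Your proposal would need to replace the $d!$ enumeration with this machinery (or something equivalent) to reach $\mathsf{AC}^3$.
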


There are two key motivations for Theorem~\ref{thm:main}. The first comes from \textsc{GI}, whose best known lower-bound is $\textsf{DET}$ \cite{Toran}, which contains $\textsf{NL}$ and is a subclass of $\textsf{TC}^{1}$. There is a large body of work spanning almost 40 years on $\textsf{NC}$ algorithms for \algprobm{GI}, see  \cite{LevetRombachSieger} for a survey. In contrast, the work on $\textsf{NC}$ algorithms for \textsc{GpI} is very new, and $\textsc{GpI}$ is strictly easier than \textsc{GI} under $\textsf{AC}^{0}$-reductions \cite{ChattopadhyayToranWagner}. So it is surprising that we know more about parallelizing subclasses of \textsc{GI} than subclasses of \textsc{GpI}.

The second comes from the fact that complexity-theoretic lower bounds for \algprobm{GpI} remain wide open, even against depth-$2$ $\textsf{AC}$ circuits. In contrast, \algprobm{GpI} is solvable using depth-$4$ $\textsf{quasiAC}^{0}$ circuits \cite{CGLWISSAC}. Isomorphism testing is in $\textsf{P}$ for a number of families of groups, and closing the gap between $\textsf{P}$ and $\textsf{AC}^0$ for special cases is a key step in trying to characterize the complexity of the general \algprobm{GpI} problem. Fitting-free groups are not merely a special case, but an important structural stepping stone on the way to the general problem (see e.g., \cite{CH03,GR16}).

In sharp contrast, in the model of permutation groups we establish the following lower bound.

\begin{theorem} \label{thm:GpIPermutationGroups}
\algprobm{Graph Isomorphism}---and even \algprobm{Linear Code Equivalence}---is polynomial-time reducible to isomorphism testing of Fitting-free groups, when the groups are specified by generating sequences of permutations.
\end{theorem}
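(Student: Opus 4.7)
The plan is to reduce \algprobm{Linear Code Equivalence} directly to isomorphism testing of Fitting-free permutation groups. Since $\algprobm{GI}$ is known to reduce in polynomial time to \algprobm{Linear Code Equivalence}, the hardness for $\algprobm{GI}$ then follows at no extra cost, and no separate construction for graphs is needed.

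Given a linear code $C \subseteq \F_q^n$ specified by a generating set, I would construct a permutation group $G(C)$ as follows. Fix once and for all a non-abelian finite simple group $T$ equipped with an outer automorphism $\sigma$ whose image in $\operatorname{Out}(T)$ has order $q$ (for example, $T = A_5$ with its order-$2$ outer automorphism when $q = 2$; appropriate $\PSL(d, p^e)$ handle general $q$). Let $C$ act on $T^n$ by having $c$ apply $\sigma^{c_i}$ to the $i$-th factor, and set $G(C) := T^n \rtimes C$. Realize $G(C)$ as a subgroup of $\operatorname{Sym}(T \times [n])$: the factor $T_i$ acts by left translation on $T \times \{i\}$, while $c \in C$ acts by $(t, i) \mapsto (\sigma^{c_i}(t), i)$. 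Generators come from generators of $T$ placed in each coordinate together with generators of $C$, yielding a polynomial-time construction on a domain of size $|T| \cdot n$.

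The first step is to verify that $G(C)$ is Fitting-free. Any normal subgroup $N \trianglelefteq G(C)$ meets $T^n$ in a $G(C)$-invariant subgroup, which must be a product of some of the $T_i$ (since $C$ does not permute coordinates and each $T_i$ is non-abelian simple). If $N$ is abelian, this intersection is trivial, so $N$ embeds into the abelian quotient $C$; a short computation using the semidirect product structure then forces $N = 1$. The same reasoning identifies $\Soc(G(C)) = T^n$, with minimal normal subgroups the individual $T_i$.

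The forward direction of the reduction is immediate: a coordinate permutation $\pi$ satisfying $\pi \cdot C_1 = C_2$ lifts to an isomorphism $G(C_1) \to G(C_2)$ by permuting the $T$-factors by $\pi$. The main obstacle is the reverse direction: given an abstract isomorphism $\phi : G(C_1) \to G(C_2)$, produce a coordinate permutation witnessing $C_1 \equiv C_2$. Here $\phi$ maps socle to socle, and since $T$ is non-abelian simple we have $\Aut(T^n) \cong \Aut(T) \wr S_n$, so $\phi|_{T^n}$ determines a permutation $\pi \in S_n$ and a tuple $(\alpha_i) \in \Aut(T)^n$. The induced quotient isomorphism $\bar\phi : C_1 \to C_2$ must be compatible with the conjugation action of each side on $T^n$; projecting this compatibility to the abelian group $\operatorname{Out}(T)^n$, the $\alpha_i$ become trivial and one reads off $\bar\phi(c) = \pi \cdot c$, whence $C_2 = \pi \cdot C_1$. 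The most delicate part of the write-up will be this final $\operatorname{Out}(T)$-level calculation, together with confirming that the chosen $T$ has the $\operatorname{Out}$-structure required for $\pi$ to be uniquely recovered from $\phi$.
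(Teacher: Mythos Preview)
Your construction is essentially the paper's: for $q=2$ both build a group with socle $A_5^n$ and quotient $C$, and both extract the code equivalence from the $S_n$-component of the induced socle automorphism in $\Aut(A_5)\wr S_n$. The paper's permutation realization is slicker---it defines $G_C=\pi^{-1}(C)$ for the coordinate-wise sign map $\pi\colon S_5^n\to\F_2^n$, so $G_C$ already sits in $S_{5n}$ at degree $5n$ rather than $60n$, and the fact that the $S_5^n$ part of $\Aut(A_5)\wr S_n$ leaves $C=\pi(G_C)$ unchanged is immediate from $\pi$ being a homomorphism; your $\operatorname{Out}(T)$ calculation is exactly this observation unpacked, and for $T=A_5$ it goes through cleanly since $\operatorname{Out}(A_5)\cong\Z/2\Z$ is abelian of order $2$.
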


While the hardness afforded by \textsf{GI}-hardness is tempered by Babai's result \cite{BabaiGraphIso} that \algprobm{GI} is in \textsf{QuasiP}, for \algprobm{Linear Code Equivalence}, no algorithm with running time $2^{o(n)}$ is currently known (that it is solvable in time $2^{O(n)}$ was established by Babai in \cite{BCGQ}; recently extended to \textsc{Monomial CodeEq}\footnote{We note that the problems called \textsc{Linear CodeEq} and \textsc{Monomial CodeEq} in the complexity of \textsc{Group Isomorphism} literature are respectively referred to as \textsc{Permutational CodeEq} and \textsc{Linear CodeEq} in the coding literature. In this paper we stick with the former convention.} and with improved randomized and quantum algorithms by Bennett, Bhatia, Biasse, Durisheti, LaBuff, Lavorante, \& Waitkevich \cite{BBBDLLW}, though even those improvements are only in the base of the exponent). \textsc{GI} is known to reduce to \textsc{Linear Code Equivalence} \cite{PR97, LuksReduction}.\footnote{Luks showed a related result using a construction which Miyazaki \cite{Miyazaki} then showed also gave this result about \textsc{Linear Code Equivalence}.}

\begin{remark}
Luks \cite{LuksReduction} indicates that \algprobm{Graph Isomorphism} is polynomial-time reducible to isomorphism testing of permutation groups. In particular, Luks exhibits a reduction from computing the normalizer $N_{G}(H)$ to the \algprobm{Conj-Group} problem, the latter of which takes as input $G, H_1, H_2 \leq \text{Sym}(\Omega)$ and asks if there exists $g \in G$ such that $H_{1}^{g} = H_{2}$. Luks then cited a paper in-preparation (citation [3] in \cite{LuksReduction}), for a reduction from \algprobm{Conj-Group} to isomorphism testing of permutation groups. To the best of our knowledge, this reduction from \algprobm{Conj-Group} to isomorphism testing of permutation groups never appeared in print. As a result, we are not aware of any proof in the literature exhibiting a (polynomial-time) reduction from:
\begin{itemize}
\item \algprobm{Conj-Group} to isomorphism testing of permutation groups, or
\item \algprobm{Graph Isomorphism} to isomorphism testing of permutation groups.
\end{itemize}

\noindent Both reductions are known to experts. In addition to technically filling this second gap in the literature, Theorem~\ref{thm:GpIPermutationGroups} shows hardness of isomorphism testing of Fitting-free permutation groups; isomorphism of Fitting-free  groups was not previously known to be $\textsf{GI}$-hard in the permutation group (nor any succinct) model \cite{WilsonLevetEmail}.
\end{remark}

\begin{remark}
Cannon and Holt \cite{CH03} exhibited an algorithm for isomorphism testing of Fitting-free groups specified by generating sequences of permutations. While their algorithm performs well in practice, they do not formally analyze the runtime complexity.  Theorem~\ref{thm:GpIPermutationGroups} provides the first formal evidence we are aware of that their algorithm is unlikely to run in polynomial-time in the worst case, and that efficient practical algorithms may essentially be the best for which we can hope.
\end{remark}

We now return to the multiplication table model. Determining hard instances of \algprobm{Group Isomorphism} remains a long-standing open problem. Class-$2$ $p$-groups of exponent $p$ have long been considered bottleneck instances of \algprobm{GpI} (see e.g., \cite{BCGQ, GrochowQiaoPgroups}). However, a formal reduction from \algprobm{GpI} to isomorphism testing of class $2$ $p$-groups of exponent $p$ (or any other proper sub-class of groups), remains elusive. Existing evidence \cite{BCGQ, BCQ, DietrichWilson, CGLWISSAC} indicates that groups whose orders are \textit{not} large prime powers admit efficient isomorphism tests.

In the absence of formal reductions, one could ask for a complexity measure $\mathcal{C}$ and classes of groups $\mathcal{X}, \mathcal{Y}$ such that $\mathcal{C}(\mathcal{X}) < \mathcal{C}(\mathcal{Y})$. If we take our complexity measure $\mathcal{C}$ to be the minimum number of variables needed in a first-order formula (without counting) that identifies a group from the given class uniquely up to isomorphism, then previous work \cite{GrochowLevetWL} established that there are infinite families of Abelian groups that are not identified by \textsf{FO} formulas with $o(\log|G|)$ variables. In contrast, we show:

\begin{theorem} \label{thm:DescriptiveComplexityMain}
Every Fitting-free group of order $n$ is identified by a first-order formula using $O(\log \log n)$ variables.
\end{theorem}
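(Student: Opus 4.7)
The plan is to exploit the fact that a Fitting-free group $G$ of order $n$ embeds via conjugation into $\Aut(\Soc(G))$, where the socle decomposes as $\Soc(G) = T_1 \times \cdots \times T_k$ with each $T_i$ a non-Abelian finite simple group. Because each $|T_i| \geq 60$, we have $k = O(\log n)$, and by the Schreier conjecture (via CFSG) $|\Aut(T_i)| \leq |T_i|^{O(1)}$. Thus $G$ is determined up to isomorphism by three pieces of data: the multiset of isomorphism types of the $T_i$, the $G$-action by conjugation on the index set $\{1,\ldots,k\}$, and the element-level information encoding how $G$ sits inside $\Aut(\Soc(G))$. Our goal is to express all of this data as a first-order sentence in $O(\log \log n)$ variables.

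The central technique is a binary-addressing scheme over the factor indices. Since $k = O(\log n)$, each simple factor can be named by a binary string of length $O(\log \log n)$, so we reserve a block of $O(\log \log n)$ ``pointer'' pebbles whose job is to encode such an index; every other quantification in the formula is conditioned on (and reuses) these pointer pebbles. This is combined with the observation that $\Soc(G)$ is the subgroup generated by the non-Abelian minimal normal subgroups (each of which is the normal closure of a single element), and that $\pker(G)$ is the kernel of the induced permutation action on the factors, so that the definitions of the socle, the factor decomposition, and $\pker(G)$ all require only $O(1)$ further variables. It then remains to express, for each pointer value, (a) the isomorphism type of $T_i$, (b) a canonical pair of generators for $T_i$ (using $2$-generation of non-Abelian simple groups), and (c) the conjugation action on these generators of a given element of $G$. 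Each item is writable in $O(1)$ additional variables on top of the pointer block, keeping the total variable count at $O(\log \log n)$.

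The principal obstacle is the elimination of counting. Brachter's bound of $O(\log \log n)$ for $1$-ary Weisfeiler--Leman on Fitting-free groups yields identification in the \emph{counting} first-order logic with $O(\log \log n)$ variables essentially for free; descending to plain $\mathsf{FO}$ requires replacing every cardinality comparison by an explicit list of witnesses, which naively blows up the variable count. What rescues us in the Fitting-free setting (in contrast to the Abelian case treated in \cite{GrochowLevetWL}, where no such bound is possible) is the rigidity provided by the socle: once a generating pair for each simple factor has been pebbled via the binary-addressing scheme, every element of $G$ is uniquely specified by its conjugation action on these generators, so a cardinality claim about such elements translates into an explicit naming of unique witnesses addressable through the pointer pebbles. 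The most delicate step will be distinguishing simple factors that are abstractly isomorphic but play different roles in the $G$-orbit structure; here the faithful embedding $G \hookrightarrow \Aut(\Soc(G))$ lets us identify $G$ by its image inside a well-understood ambient group, with the distinguishing data anchored in the pointer-encoded factor indices.
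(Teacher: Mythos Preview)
Your high-level strategy---exploiting $k = O(\log n)$ simple factors and binary-addressing them with $O(\log k) = O(\log \log n)$ variables---matches the paper's. But your execution diverges from the paper's and has a real gap. In first-order logic over groups, variables range over \emph{group elements}, not bits or indices; you never say what your ``pointer pebbles'' actually are as elements of $G$, nor how a block of $\ell$ group elements singles out the $i$-th simple factor. The paper solves exactly this problem with a concrete construction in the count-free pebble game: Spoiler pebbles seven ``diagonal'' elements $x_j = \prod_{i=1}^{k} x_{i,j}$ (where $(x_{i,1},\dotsc,x_{i,7})$ is a Babai--Kantor--Lubotsky generating set for $S_i$) together with $\ell = \lceil \log_2 k \rceil$ elements $p_1,\dotsc,p_\ell$, where $p_j$ is the product of the $x_i$ over a dyadic half/quarter/\ldots\ pattern of the index set. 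Factor membership is then tested \emph{multiplicatively via weights}: for $a \neq b$ there is some $m$ with $\wt(p_m x_a^{-1}) \neq \wt(p_m x_b^{-1})$, and the Rank Lemma converts weight discrepancies into Spoiler wins. Your proposal has no analogue of this mechanism, and your sentence ``once a generating pair for each simple factor has been pebbled'' reads as $2k = O(\log n)$ pebbles; the paper avoids that blowup precisely by pebbling the seven products $x_j$ rather than per-factor generators.

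A second, smaller point: you invoke $2$-generation of simple groups, but the paper uses seven generators because Babai--Kantor--Lubotsky guarantees $O(\log|S_i|)$-length words, which is what gives the $O(\log \log n)$ \emph{round} bound (and hence quantifier depth); two generators would still give the variable bound but not obviously the round bound. Finally, note the paper's remark after the proof: an even simpler route is that a direct product of $k$ non-Abelian simple groups is $O(\log k)$-generated (Wiegold/Th\'evenaz), so one can pebble $O(\log \log n)$ generators of $\Soc(G)$ outright and then finish with the last three paragraphs of the paper's argument. That alternative is closer in spirit to what you seem to want, and would be a cleaner way to fill the gap in your plan than trying to make a literal ``pointer'' scheme work.
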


Brachter \cite{BrachterThesis} previously showed the analogous result for $\textsf{FO} + \textsf{C}$ (first-order logic \textit{with} counting quantifiers); our result strengthens his by removing the need for counting.

Despite isomorphism testing of Abelian groups in the Cayley table model being vastly simpler for serial or parallel algorithms than isomorphism testing of Fitting-free groups, in more succinct input models the situation is less clear. However, the fact that Fitting-free groups are more easily identified by first-order logic \emph{without counting} than Abelian groups accords well with the idea that when dealing with isomorphism of groups containing Abelian normal subgroups, one must work with automorphism groups that are subgroups of matrix groups \cite{GR16}, which clearly ``require counting'' and are notoriously harder to work with than the more combinatorial subgroups of permutation groups that arise in testing isomorphism of Fitting-free groups \cite{BCGQ, BCQ} (see also \cite[Sec. 4, ``Non-Abelian linear algebra'']{Luks86}).

It is interesting to note that isomorphism testing of Abelian groups can be done using only a single counting gate \cite{ChattopadhyayToranWagner}, and the same was later established for other classes of groups \cite{CollinsLevetWL, CollinsUndergradThesis}, raising the question of a possible complexity hierarchy in the setting of first-order logic based on the number of counting quantifiers used (fixing the total number of variables and/or quantifiers).  We also note that, despite the close relationship between logics and circuits (cf. \cite{VollmerText, GroheVerbitsky}), establishing a circuit complexity analogue of Theorem~\ref{thm:DescriptiveComplexityMain} seems out of reach for current techniques.

\noindent \\ \textbf{Methods.} In order to prove Theorem~\ref{thm:main}, we take advantage of the fact that $G/\text{PKer}(G)$ can be viewed as a permutation group of degree $O(\log |G|)$, acting on $\Soc(G)$, in tandem with a standard suite of $\textsf{NC}$ algorithms for permutation groups \cite{BabaiLuksSeress}. As our permutation degree is $O(\log |G|)$ and $G$ is specified by its multiplication table, we can implement solutions for each subroutine of \cite{BabaiLuksSeress} using an $\textsf{AC}$ circuit of depth $\poly(\log \log |G|))$ ($\FOPLL$). We bring this suite of $\FOPLL$ permutation group algorithms to bear, in order to carefully analyze Luks's parallel algorithm for \algprobm{Coset Intersection} \cite{LuksHypergraphIso}, as well as to efficiently parallelize both the requisite instance of \algprobm{Twisted Code Equivalence} and the reduction to \algprobm{Twisted Code Equivalence} \cite{BCQ, CodenottiThesis}. 

Our analysis is specific to the algorithm at hand. If solutions to these problems used polynomial space and simply-exponential time, just scaling down to permutation groups on domain size $O(\log |G|)$ and using the standard simulation of space by depth would only yield $\poly(\log |G|)$-depth circuits of quasi-polynomial size. In contrast, by carefully analyzing Luks's \algprobm{Coset Intersection} procedure, we are able to get $\mathrm{poly}(\log \log |G|)$ depth and $\poly(|G|)$ size. Similarly, by carefully analyzing the existing \algprobm{Twisted Code Equivalence} algorithm and the corresponding reduction to \algprobm{Twisted Code Equivalence} \cite{BCQ, CodenottiThesis}, we obtain $O(\log^3 |G|)$ depth and size $\poly(|G|)$. 

Our strategy of leveraging a \textit{small} (size $O(\log |G|)$) permutation domain in order to get efficient instances of permutation group algorithms has been a common theme in \algprobm{Group Isomorphism} to obtain polynomial-time isomorphism tests. Crucially, the corresponding \textit{small} instances of \textsf{GI}-hard problems such as \algprobm{Linear Code Equivalence}, \algprobm{Twisted Code Equivalence}, \algprobm{Coset Intersection}, and \algprobm{Set Transporter} can be implemented in time $\poly(|G|)$---see, e.\,g., \cite{BMWGenus2, LewisWilson, IvanyosQ19, QST11, BQ, GQ15, BCGQ, BCQ, GQCoho}. By taking advantage of both a small permutation degree and existing parallel permutation algorithms \cite{BabaiLuksSeress}, we are able to improve the complexity-theoretic upper-bounds for isomorphism testing of Fitting-free groups from $\textsf{P}$ \cite{BCQ} to $\textsf{AC}^{3}$.

\noindent \\ \textbf{Further Related Work.} The work on $\textsf{NC}$ algorithms for \algprobm{GpI} is comparatively nascent compared to that of \algprobm{GI}. Indeed, much of the work involves parallelizing the generator-enumeration strategy (\emph{ibid}), which has yielded bounds of $\textsf{L}$ \cite{TangThesis} for $O(1)$-generated groups. Other families of groups known to admit $\textsf{NC}$ isomorphism tests include Abelian groups \cite{ChattopadhyayToranWagner, GrochowLevetWL, CGLWISSAC}, graphical groups arising from the CFI graphs \cite{WLGroups, CollinsLevetWL, CollinsUndergradThesis}, coprime extensions $H \ltimes N$ where $H$ is $O(1)$-generated and $N$ is Abelian \cite{GrochowLevetWL} (parallelizing a result from \cite{QST11}), groups of almost all orders \cite{CGLWISSAC} (parallelizing \cite{DietrichWilson}), and Fitting-free groups where the number of non-Abelian simple factors of the socle is $O(\log n/\log \log n)$ \cite{GrochowLevetWL} (parallelizing a result from \cite{BCGQ}). Our Theorem~\ref{thm:main} is the parallel analogue of \cite{BCQ}.

\section{Preliminaries}

\subsection{Groups and Codes}
\noindent \textbf{Groups.} The only perhaps unfamiliar terminology we use for groups is the term ``subcoset'' (which is standard in the relevant literature, but perhaps not standard in group theory textbooks). A \emph{subcoset} of a group $G$ is either the empty set or a coset of a subgroup of $G$.

\noindent \\ \textbf{Codes.} Let $\mathbb{F}$ be a field. $\text{GL}_{m}(\mathbb{F})$ denotes the set of $m \times m$ invertible matrices over $\mathbb{F}$. A \textit{linear code} of length $m$ is a subspace $U \leq \mathbb{F}^{m}$. A $d \times m$ matrix $A$ over $\mathbb{F}$ \textit{generates} the code $U$ if the rows of $A$ span $U$. Let $U,W$ be $d$--dimensional codes of length $m$ over $\mathbb{F}$, generated by $d \times m$ matrices $A, B$, respectively. Then $U$ and $V$ are \emph{equivalent} if and only there exists a permutation matrix $P \in \text{GL}_{m}(\mathbb{F})$ and a matrix $T \in \text{GL}_{d}(\mathbb{F})$ such that $B = TAP$.

\subsection{Computational Complexity} \label{sec:Complexity}
We assume that the reader is familiar with standard complexity classes such as $\textsf{P}, \textsf{NP}, \textsf{L}$, and $\textsf{NL}$. For a standard reference on circuit complexity, see \cite{VollmerText}. We consider Boolean circuits using the gates \textsf{AND}, \textsf{OR}, \textsf{NOT}, and \textsf{Majority}, where $\textsf{Majority}(x_{1}, \ldots, x_{n}) = 1$ if and only if $\geq n/2$ of the inputs are $1$. Otherwise, $\textsf{Majority}(x_{1}, \ldots, x_{n}) = 0$. In this paper, we will consider $\textsf{DLOGTIME}$-uniform circuit families $(C_{n})_{n \in \mathbb{N}}$. For this,
one encodes the gates of each circuit $C_n$ by bit strings of length $O(\log n)$. Then the circuit family $(C_n)_{n \geq 0}$
is called \emph{\textsf{DLOGTIME}-uniform}  if (i) there exists a deterministic Turing machine that computes for a given gate $u \in \{0,1\}^*$
of $C_n$ ($|u| \in O(\log n)$) in time $O(\log n)$ the type of gate $u$, where the types are $x_1, \ldots, x_n$, \textsf{NOT}, \textsf{AND}, \textsf{OR}, or \textsf{Majority} gates,
and (ii) there exists a deterministic Turing machine that decides for two given gates $u,v \in \{0,1\}^*$ of $C_n$ ($|u|, |v| \in O(\log n)$) and a binary encoded integer $i$ with $O(\log n)$ many bits in time $O(\log n)$ whether $u$ is the $i$-th input gate for $v$.

\begin{definition}
Fix $k \geq 0$. We say that a language $L$ belongs to (uniform) $\textsf{NC}^{k}$ if there exist a (uniform) family of circuits $(C_{n})_{n \in \mathbb{N}}$ over the $\textsf{AND}, \textsf{OR}, \textsf{NOT}$ gates such that the following hold:
\begin{itemize}
\item The $\textsf{AND}$ and $\textsf{OR}$ gates take exactly $2$ inputs. That is, they have fan-in $2$.
\item $C_{n}$ has depth $O(\log^{k} n)$ and uses (has size) $n^{O(1)}$ gates. Here, the implicit constants in the circuit depth and size depend only on $L$.

\item $x \in L$ if and only if $C_{|x|}(x) = 1$. 
\end{itemize}
\end{definition}

\noindent The complexity class $\AC^{k}$ is defined analogously as $\textsf{NC}^{k}$, except that the $\textsf{AND}, \textsf{OR}$ gates are permitted to have unbounded fan-in.
That is, a single $\textsf{AND}$ gate can compute an arbitrary conjunction, and a single $\textsf{OR}$ gate can compute an arbitrary disjunction. The class $\SAC^k$ is defined analogously, in which the $\textsf{OR}$ gates have unbounded fan-in but the $\textsf{AND}$ gates must have fan-in $2$.
The complexity class $\textsf{TC}^{k}$ is defined analogously as $\AC^{k}$, except that our circuits are now permitted $\textsf{Majority}$ gates of unbounded fan-in.
We also allow circuits to compute functions by using multiple output gates. 

For every $k$, the following containments are well-known:
\[
\textsf{NC}^{k} \subseteq \SAC^k \subseteq  \AC^{k} \subseteq \textsf{TC}^{k} \subseteq \textsf{NC}^{k+1}.
\]

\noindent In the case of $k = 0$, we have that:
\[
\textsf{NC}^{0} \subsetneq \AC^{0} \subsetneq \textsf{TC}^{0} \subseteq \textsf{NC}^{1} \subseteq \LogSpace \subseteq \textsf{NL} \subseteq \textsf{SAC}^{1} \subseteq \AC^{1}.
\]

\noindent We note that functions that are $\textsf{NC}^{0}$-computable can only depend on a bounded number of input bits. Thus, $\textsf{NC}^{0}$ is unable to compute the $\textsf{AND}$ function. It is a classical result that $\AC^{0}$ is unable to compute \algprobm{Parity} \cite{FSS}. The containment $\textsf{TC}^{0} \subseteq \textsf{NC}^{1}$ (and hence, $\textsf{TC}^{k} \subseteq \textsf{NC}^{k+1}$) follows from the fact that $\textsf{NC}^{1}$ can simulate the unbounded fan-in \textsf{Majority} gate.

The complexity class $\textsf{FOLL}$ is the set of languages decidable by uniform $\textsf{AC}$ circuits of depth $O(\log \log n)$ and polynomial-size \cite{BKLM}. We will use a slight generalization of this: we use $\FOPLL$ to denote the class of languages $L$ decidable by uniform $\textsf{AC}$ circuits of depth $O((\log \log n)^c)$ for some $c$ that depends only on $L$. It is known that $\textsf{AC}^{0} \subsetneq \FOPLL \subsetneq \textsf{AC}^{1}$, the former by a simple diagonalization argument on top of Sipser's result \cite{SipserBorel}, and the latter because the \textsf{Parity} function is in $\mathsf{AC}^1$ but not $\FOPLL$ (nor any depth $o(\log n / \log \log n)$). \FOPLL\ cannot contain any complexity class that can compute \algprobm{Parity}, such as $\mathsf{TC}^0, \mathsf{NC}^1, \mathsf{L}, \mathsf{NL}$, or $\mathsf{SAC}^1$, and it remains open whether any of these classes contain \FOPLL.

\subsection{Permutation Group Algorithms}

We will consider the permutation group model, in which groups are specified succinctly by a sequence of permutations from $S_{m}$. The computational complexity for the permutation group model will be measured in terms of $m$. We recall a standard suite of problems with known $\textsf{NC}$ solutions in the setting of permutation groups. In our setting, $m$ will be \textit{small} ($m \in O(\text{polylog}(n))$) relative to the overall input size $n$, which will yield $\FOPLL$ bounds for these problems.

\begin{lemma} \label{PermutationGroupsNC}
Let $c \in \mathbb{Z}^{+}$, and let $m \in O(\log^c n)$. Let $G \leq S_{m}$ be given by a sequence $S$ of generators. The following problems are in $\FOPLL$ relative to $n$, that is, they have uniform $\mathsf{AC}$ circuits of depth $\poly(\log \log n)$ and $\poly(n)$ size:
\begin{enumerate}[label=(\alph*)]
\item Compute the order of $G$.
\item Decide whether a given permutation $\sigma$ is in $G$; and if so, exhibit a word $\omega$ such that $\sigma = \omega(S)$. 

\item Find the kernel of any action of $G$.
\item Find the pointwise stabilizer of $B \subseteq [m]$. 
\item Find the normal closure of any subset of $G$.

\item Compute $\FOPLL$-efficient Schreier generators. In particular, we may take the corresponding chain of subgroups $G = G_0 \geq G_1 \geq \cdots \geq G_r = 1$ such that $r \in O(\log n)$ and $G_{i}$ is the point-stabilizer of the first $i$ points. 

\item Given a list of generators $S$ for $G$, compute a minimal (non-redundant) of generators $S'$ for $G$ of size $\poly(m)$. 
\item If $H \leq S_{m}$ and we are given a bijection $f : \text{dom}(G) \to \text{dom}(H)$, decide whether $f$ is a permutational isomorphism of $G$ and $H$.
\end{enumerate}
\end{lemma}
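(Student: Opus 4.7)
The lemma is essentially a ``parameter substitution'' into the classical $\NC$ suite of permutation-group algorithms of Babai, Luks, and Seress \cite{BabaiLuksSeress} (together with standard consequences of their framework). Each of their parallel routines takes an input of size $\poly(m)$ and produces its output via a uniform circuit of depth $\poly(\log m)$ and size $\poly(m)$. Substituting $m \in O(\log^{c} n)$ converts these to depth $\poly(\log \log n)$ and size $\poly(\log n) \leq \poly(n)$, which is precisely the $\FOPLL$ bound relative to $n$. The plan is therefore: for each item (a)--(h), identify the relevant BLS primitive (or an easy reduction to one), substitute the parameter, and verify that the composed family is \textsf{DLOGTIME}-uniform with respect to $n$.

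\textbf{Itemization.} Items (a) order, (c) kernel of an action, (d) pointwise stabilizer, and (e) normal closure appear directly among the BLS $\NC$ primitives. Items (b) and (f) both follow from their $\NC$ construction of a base and strong generating set: the pointwise-stabilizer chain $G = G_0 \geq G_1 \geq \cdots \geq G_r = 1$ with $r \leq m$ together with the associated Schreier generators is produced by this construction, and membership testing for $\sigma \in G$ reduces to parallel sifting down the chain, from which the witness word $\omega$ can be read off. For (g), the BSGS already supplies a generating set of size $\poly(m)$, and irredundancy is enforced by invoking (a) in parallel to compute $|\langle S \mysetminus \{s\} \rangle|$ for each candidate $s$ and then keeping only generators that are non-redundant at their level of the stabilizer chain. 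For (h), a bijection $f : [m] \to \operatorname{dom}(H)$ is a permutational isomorphism iff $f s f^{-1} \in H$ for every $s$ in a generating set of $G$ and $f^{-1} \tau f \in G$ for every $\tau$ in a generating set of $H$; each such check is a single invocation of (b).

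\textbf{Main obstacle.} The chief point of care is uniformity: we need \textsf{DLOGTIME}-uniformity with respect to $n$, while the BLS circuits are naturally uniform with respect to $m$. Because $m \in O(\log^c n)$, a gate label of a BLS subcircuit of size $\poly(m)$ fits in $O(\log \log n)$ bits, and the arithmetic and table lookups required to decode gate types and wirings are comfortably within \textsf{DLOGTIME} relative to $n$, even allowing a $\poly(\log n)$ blow-up when an input bit of a BLS subcircuit must be read from the appropriate entry of the underlying size-$n$ table. The bulk of the technical work is thus careful bookkeeping: composing the uniform BLS circuits, re-indexing their gate labels in terms of $n$, and checking that the composed depth and size bounds remain within $\poly(\log\log n)$ and $\poly(n)$ respectively.
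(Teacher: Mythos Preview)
Your proposal is correct and follows essentially the same approach as the paper: cite Babai--Luks--Seress for (a)--(g) as $\NC$ primitives in $m$, reduce (h) to membership testing via (b), and substitute $m \in O(\log^c n)$ to obtain the $\FOPLL$ bounds. The paper's proof is in fact terser than yours---it does not spell out the uniformity bookkeeping you discuss---so your added care there is welcome but not strictly needed for matching the paper.
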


\begin{proof}
Each of these problems is known to be solvable using a circuit of depth $O(\text{polylog}(m))$ and size $O(\poly(m))$: see \cite{BabaiLuksSeress} for (a)--(g). For (h), we apply (b) to check membership of the $f$-images of the generators of $G$ in $H$, and vice-versa. The $\FOPLL$ bound follows from the fact that $m \in O(\log^c n)$. 
\end{proof}

\subsection{Weisfeiler--Leman} \label{sec:WLPrelims}

We begin by recalling the Weisfeiler--Leman algorithm for graphs, which computes an isomorphism-invariant coloring. Let $\Gamma$ be a graph, and let $k \geq 2$ be an integer. The $k$-dimensional Weisfeiler--Leman, or $k$-WL, algorithm begins by constructing an initial coloring $\chi_{0} : V(\Gamma)^{k} \to \mathcal{K}$, where $\mathcal{K}$ is our set of colors, by assigning each $k$-tuple a color based on its (marked) isomorphism type. That is, two $k$-tuples $(v_{1}, \ldots, v_{k})$ and $(u_{1}, \ldots, u_{k})$ receive the same color under $\chi_{0}$ iff the map $v_i \mapsto u_i$ (for all $i \in [k]$) is an isomorphism of the induced subgraphs $\Gamma[\{ v_{1}, \ldots, v_{k}\}]$ and $\Gamma[\{u_{1}, \ldots, u_{k}\}]$ and for all $i, j$, $v_i = v_j \Leftrightarrow u_i = u_j$. 

For $r \geq 0$, the coloring computed at the $r$th iteration of Weisfeiler--Leman is refined as follows. For a $k$-tuple $\overline{v} = (v_{1}, \ldots, v_{k})$ and a vertex $x \in V(\Gamma)$, define
\[
\overline{v}(v_{i}/x) = (v_{1}, \ldots, v_{i-1}, x, v_{i+1}, \ldots, v_{k}).
\]
The coloring computed at the $(r+1)$st iteration, denoted $\chi_{r+1}$, stores the color of the given $k$-tuple $\overline{v}$ at the $r$th iteration, as well as the colors under $\chi_{r}$ of the $k$-tuples obtained by substituting a single vertex in $\overline{v}$ for another vertex $x$. We examine this multiset of colors over all such vertices $x$. This is formalized as follows:
\begin{align*}
\chi_{r+1}(\overline{v}) = &( \chi_{r}(\overline{v}), \{\!\!\{ ( \chi_{r}(\overline{v}(v_{1}/x)), \ldots, \chi_{r}(\overline{v}(v_{k}/x) ) \bigr| x \in V(\Gamma) \}\!\!\} ),
\end{align*}
where $\{\!\!\{ \cdot \}\!\!\}$ denotes a multiset.

The \textit{count-free} variant of WL considers the set rather than the multiset of colors at each round. Precisely:
\begin{align*}
\chi_{r+1}(\overline{v}) = &( \chi_{r}(\overline{v}), \{ ( \chi_{r}(\overline{v}(v_{1}/x)), \ldots, \chi_{r}(\overline{v}(v_{k}/x) ) \bigr| x \in V(\Gamma) \} ).
\end{align*}

\noindent Note that the coloring $\chi_{r}$ computed at iteration $r$ induces a partition of $V(\Gamma)^{k}$ into color classes. The Weisfeiler--Leman algorithm terminates when this partition is not refined, that is, when the partition induced by $\chi_{r+1}$ is identical to that induced by $\chi_{r}$. The final coloring is referred to as the \textit{stable coloring}, which we denote $\chi_{\infty} := \chi_{r}$.

Brachter and Schweitzer introduced three variants of WL for groups \cite{WLGroups}. We will restrict attention to WL Version I, which we will now recall. WL Version I is executed directly on the groups, where $k$-tuples of group elements are initially colored. Two $k$-tuples $(g_{1}, \ldots, g_{k})$ and $(h_{1}, \ldots, h_{k})$ receive the same initial color iff (a) for all $i, j, \ell \in [k]$, $g_{i}g_{j} = g_{\ell} \iff h_{i}h_{j} = h_{\ell}$, and (b) for all $i, j \in [k]$, $g_{i} = g_{j} \iff h_{i} = h_{j}$. Refinement is performed in the classical manner as for graphs. Namely, for a given $k$-tuple $\overline{g}$ of group elements,
\begin{align*}
\chi_{r+1}(\overline{g}) = &( \chi_{r}(\overline{g}), \{\!\!\{ ( \chi_{r}(\overline{g}(g_{1}/x)), \ldots, \chi_{r}(\overline{g}(g_{k}/x) ) \bigr| x \in G \}\!\!\} ).
\end{align*}

Grohe \& Verbitsky \cite{GroheVerbitsky} previously showed that for fixed $k$, the classical $k$-dimensional Weisfeiler--Leman algorithm for graphs can be effectively parallelized. Precisely, each iteration of the classical counting WL algorithm (including the initial coloring) can be implemented using a logspace uniform $\textsf{TC}^{0}$ circuit, and each iteration of the \textit{count-free} WL algorithm can be implemented using a logspace uniform $\textsf{AC}^{0}$ circuit. A careful analysis shows that this parallelization holds under $\textsf{DLOGTIME}$-uniformity. As they mention (\cite[Remark~3.4]{GroheVerbitsky}), their implementation works for any first-order structure, including groups. In particular, the initial coloring of count-free WL Version I is $\textsf{DLOGTIME}$-uniform $\textsf{AC}^{0}$-computable, and each refinement step is $\textsf{DLOGTIME}$-uniform $\textsf{AC}^{0}$-computable.

\subsection{Pebbling Game}

The classical counting Weisfeiler--Leman algorithm is equivalent to classical Ehrenfeucht--Fra\"iss\'e bijective pebble games \cite{Hella1989, Hella1993, ImmermanLander1990, CFI}. These games are often used to show that two graphs $X$ and $Y$ cannot be distinguished by $k$-WL. There exists an analogous pebble game for count-free WL \cite{ImmermanLander1990, CFI}. We will restrict attention to the count-free WL Version I pebble game for groups. The count-free $(k+1)$-pebble game consists of two players: Spoiler and Duplicator, as well as $(k+1)$ pebble pairs $(p, p^{\prime})$. Spoiler wishes to show that the two groups $G$ and $H$ are not isomorphic. Duplicator wishes to show that the two groups are isomorphic. Each round of the game proceeds as follows.
\begin{enumerate}
\item Spoiler picks up a pebble pair $(p_{i}, p_{i}^{\prime})$.
\item Spoiler places one of the pebbles on some group element (either $p_{i}$ on some element of $G$ or $p_{i}'$ on some element of $H$). 
\item Duplicator places the other pebble on some element of the other group.
\item The winning condition is checked. This will be formalized later.
\end{enumerate}

Let $v_{1}, \ldots, v_{m}$ be the pebbled elements of $G$ at the end of step 1, and let $v_{1}^{\prime}, \ldots, v_{m}^{\prime}$ be the corresponding pebbled vertices of $H$. Spoiler wins precisely if the map $v_{\ell} \mapsto v_{\ell}^{\prime}$ does not satisfy both of the following: (a) $v_{i}v_{j} = v_{k} \iff v_{i}'v_{j}' = v_{k}'$ for all $i, j, k \in [m]$, and (b) $v_{i} = v_{j} \iff v_{i}' = v_{j}'$ for all $i, j \in [m]$. Duplicator wins otherwise. Spoiler wins, by definition, at round $0$ if $G$ and $H$ do not have the same number of elements. Two $k$-tuples $\bar{u} \in G^k, \bar{v} \in H^k$ are not distinguished by the first $r$ rounds of the count-free $k$-dimensional WL Version I, if and only if Duplicator wins the first $r$ rounds of the Version I count-free $(k+1)$-pebble game \cite{ImmermanLander1990, CFI, GrochowLevetWL}.

\subsection{Logics} \label{sec:Logics}

We recall the central aspects of first-order logic. We have a countable set of variables $\{x_{1}, x_{2}, \ldots, \}$. Formulas are defined inductively. As our basis, $x_{i} = x_{j}$ is a formula for all pairs of variables. Now if $\varphi, \psi$ are formulas, then so are the following: $\varphi \land \psi, \varphi \vee \psi, \neg{\varphi}, \exists{x_{i}} \, \varphi,$ and $\forall{x_{i}} \, \varphi$. In order to define logics on groups, it is necessary to define a relation that relates the group multiplication. We recall the Version I logic introduced by Brachter \& Schweitzer \cite{WLGroups}. Here, we add a ternary relation $R$ where $R(x_{i}, x_{j}, x_{\ell}) = 1$ if and only if $x_{i}x_{j} = x_{\ell}$ in the group. In keeping with the conventions of \cite{CFI}, we refer to the first-order logic with relation $R$ as $\mathcal{L}^{I}$ and its $k$-variable fragment as $\mathcal{L}^{I}_{k}$. If furthermore we restrict the formulas to have quantifier depth at most $r$, we denote this fragment as $\mathcal{L}^{I}_{k,r}$. Given a formula $\varphi$ with $k$ free (=unquantified) variables, and a $k$-tuple $\bar{g} \in G^k$, we say that $(G,\bar{g})$ \emph{satisfies} or \emph{is a model of} $\varphi$, denoted $(G, \bar{g}) \models \varphi$, if, after plugging in $g_1,\dotsc,g_k$ into the unquantified variables, the resulting formula is true in $G$. Two $k$-tuples $\bar{g} \in G^k, \bar{h} \in H^k$ are distinguished by the first $r$ rounds of the count-free $k$-WL Version I if and only if there exists a formula $\varphi \in \mathcal{L}^{I}_{k+1,r}$ such that $(G, \bar{g}) \models \varphi$ and $(H, \bar{h}) \not \models \varphi$ (see e.g., \cite{ImmermanLander1990, CFI, GrochowLevetWL}).

\section{Coset Intersection for small-domain groups in parallel}
In this section, we consider the \algprobm{Coset Intersection} problem, which takes $G, H \leq S_{m}$ and $x, y \in S_{m}$, and asks for $Gx \cap Hy$. Precisely, we will establish the following.

\begin{proposition} \label{prop:CosetIntersection}
Fix $n \in \mathbb{Z}^{+}$, and let $m \in O(\log n)$. Let $G, H \leq S_{m}$ and $x, y \in S_{m}$. We can compute $Gx \cap Hy$ using an $\textsf{AC}$ circuit of depth $\poly(\log \log n)$ and size $\poly(n)$.
\end{proposition}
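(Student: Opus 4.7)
The plan is to parallelize Luks's divide-and-conquer algorithm~\cite{LuksHypergraphIso} for \algprobm{Coset Intersection} and verify that when the permutation degree $m$ is only $O(\log n)$, every permutation-group primitive invoked by the algorithm becomes $\FOPLL$-computable via \Lem{PermutationGroupsNC}. After the constant-depth reduction $Gx \cap Hy = \bigl(G \cap H(yx^{-1})\bigr)x$, it suffices to compute intersections of the form $G \cap Hz$ for a single coset $Hz$.

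Luks's procedure recurses on the structure of $G$. If $G$ is intransitive, we partition $[m]$ into $G$-orbits $\Omega_1, \dotsc, \Omega_t$, spawn one independent subproblem per orbit, and recombine the resulting subcosets by intersection inside the direct product $\prod_i \mathrm{Sym}(\Omega_i)$. If $G$ is transitive, we compute a minimal $G$-invariant block system $\mathcal{B}$ and reduce to the primitive action on $\mathcal{B}$: enumerate in parallel branches a transversal of the block-action kernel $G_{(\mathcal{B})}$ in $G$, and for each representative launch a recursive call inside $G_{(\mathcal{B})}$. The base case is a pointwise stabilizer. Every per-node operation---orbit decomposition, minimal block system, pointwise stabilizer, Schreier generator construction, membership testing, and sifting to a minimal generating set---is provided by \Lem{PermutationGroupsNC} and therefore runs in $\FOPLL$ relative to $n$.

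Luks's analysis bounds the recursion tree by depth $\poly(\log m)$ (each recursive call either properly splits the action or descends a nontrivial block system) and total number of leaves by $2^{O(m)}$, with subproblems at a common level independent and hence parallelizable. Specializing to $m = O(\log n)$, the recursion depth becomes $\poly(\log \log n)$ and the leaf count $2^{O(\log n)} = \poly(n)$. The circuit that realizes the algorithm spawns all sibling subproblems of a given layer as parallel sub-circuits, so the overall depth is at most $\poly(\log \log n)$ recursion layers each contributing $\poly(\log \log n)$ work, giving $\poly(\log \log n)$ total depth and $\poly(n)$ total size.

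The main obstacle is the combining step: intersecting per-orbit subcosets and enumerating in parallel over all coset representatives of a block-action kernel. The former reduces to computing the intersection of subgroups of a direct product given via Schreier--Sims data, which we carry out with a constant-depth projection followed by a sifting call from \Lem{PermutationGroupsNC}(g). The latter is handled by writing out the at most $2^{O(m)}$ transversal representatives as parallel branches of the circuit, which fits within the $\poly(n)$ size budget precisely because $m \in O(\log n)$. The output is a concise generating set for $Gx \cap Hy$.
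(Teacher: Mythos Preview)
Your proposal describes the classical Luks orbit/block recursion for string isomorphism, but that is \emph{not} the algorithm from \cite{LuksHypergraphIso} that the paper parallelizes, and the recursion you sketch does not work as stated. The core gap is the intransitive case: when $G$ has orbits $\Omega_1,\dotsc,\Omega_t$, the subproblems ``per orbit'' are \emph{not} independent. The coset $Hz$ need not decompose along $G$'s orbit partition, so knowing the restriction of $G\cap Hz$ to $\Omega_1$ constrains what is allowed on $\Omega_2,\dotsc,\Omega_t$; you cannot simply project to each orbit, solve, and intersect inside $\prod_i\mathrm{Sym}(\Omega_i)$. (Indeed, even computing the intersection of two subgroups of a direct product is itself an instance of the problem you are trying to solve, so the ``constant-depth projection followed by sifting'' claim is circular.) This dependency is exactly why the sequential Luks recursion processes orbits one after another, and why Luks had to devise a different mechanism in \cite{LuksHypergraphIso} to get a parallel algorithm.

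A secondary issue is the transitive case: enumerating a full transversal of the block-action kernel $G_{(\mathcal B)}$ in $G$ gives $[G:G_{(\mathcal B)}]$ branches, which can be as large as $(|\mathcal B|)!$; for $m=\Theta(\log n)$ this is $n^{\Theta(\log\log n)}$, not $\poly(n)$, so your $2^{O(m)}$ leaf bound is unjustified. The paper instead follows Luks's reduction to his Problem~I/Problem~II framework: one passes to $L\le\mathrm{Sym}(\Gamma)\times\mathrm{Sym}(\Delta)$ acting on $\Gamma\times\Delta$ and recurses by halving a rectangle $\Theta=\Phi\times\Psi$, giving recursion depth $O(\log|\Phi|+\log|\Psi|)=O(\log\log n)$ with each node doing a point-stabilizer computation and a controlled coset union. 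That recursion is what actually yields the $\poly(\log\log n)$ depth and $\poly(n)$ size.
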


To prove Proposition~\ref{prop:CosetIntersection}, we carefully analyze Luks' parallel procedure \cite{LuksHypergraphIso}. Our main contribution is to precisely determine the circuit depth. We do so by taking advantage of the fact that we have a \textit{small} permutation group, in tandem with a suite of existing parallel algorithms for permutation groups (Lemma~\ref{PermutationGroupsNC}). Our analysis is specific to the algorithm at hand; if Luks's algorithm used polynomial space and simply-exponential time, just scaling down to permutation groups on domain size $O(\log n)$ and using the standard simulation of space by depth would only yield $\poly(\log n)$-depth circuits of quasi-polynomial size. In contrast, by carefully analyzing Luks's algorithm we are able to get $\mathrm{poly}(\log \log n)$ depth and $\poly(n)$ size. We now turn to the details.

\begin{proof}
Luks \cite[Corollary~3.2]{LuksHypergraphIso} showed that \algprobm{Coset Intersection} reduces to the following problem, which he calls \algprobm{Problem I}. His reduction is short and readily seen to be implementable by a linear-size, constant-depth circuit.

\begin{definition}
Luks's \algprobm{Problem I} is defined as follows:
\begin{itemize}
    \item \textsf{Instance:} $L \leq \text{Sym}(\Gamma) \times\text{Sym}(\Delta)$, $z \in \text{Sym}(\Gamma \times \Delta)$, and $\Pi \subseteq \Gamma \times \Delta$.
    \item \textsf{Solution:} $(Lz)_{\Pi} = \{ x \in Lz : \Pi^{x} = \Pi\}$.
\end{itemize}
\end{definition}

Thus, it suffices to show that \algprobm{Problem I} can be solved in the stated depth and size. This is the content of the following Proposition~\ref{prop:CosetIntersection2}, which will complete the proof.
\end{proof}

\noindent Luks \cite[Proposition~6.1]{LuksHypergraphIso} showed that \algprobm{Problem I} admits a parallel solution; taking his Prop.~6.1 as a black box and scaling down the parameters to $\log n$ would yield circuits of $\poly(n)$ size and $\poly(\log n)$-depth. By careful analysis of the depth complexity of his algorithm we will show the stronger result that when $|\Gamma|, |\Delta| \in O(\log n)$, we can solve \algprobm{Problem I} using an $\textsf{AC}$ circuit of depth $\poly(\log \log n)$ and size $\poly(n)$. 

\begin{proposition}[cf. {\cite[Proposition~6.1]{LuksHypergraphIso}}] \label{prop:CosetIntersection2}
Suppose that $|\Gamma|, |\Delta| \in O(\log n)$. Then \algprobm{Problem I} can be solved using an $\textsf{AC}$ circuit of depth $\poly(\log \log n)$ and size $\text{poly}(n)$.
\end{proposition}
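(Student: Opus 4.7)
The strategy is to open up Luks's algorithm from \cite[Proposition~6.1]{LuksHypergraphIso} and account separately for the depth contributed by the recursion tree versus the per-node work, under the standing assumption $|\Gamma|, |\Delta| \in O(\log n)$. The reason the black-box version only yields $\poly(\log n)$ depth is that every invocation of a permutation-group subroutine is charged its generic $\textsf{NC}$-depth in the domain size; here we get a double savings, since the domain is only $O(\log n)$ (so each subroutine costs $\FOPLL$-depth by \Lem{PermutationGroupsNC}), and the divide-and-conquer tree itself only has $O(\log \log n)$ levels when measured against $n$.

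The recursion is the standard Luks splitting on $L \leq \text{Sym}(\Gamma) \times \text{Sym}(\Delta)$: if $L$ acts intransitively on $\Gamma$, split $\Pi$ along the orbits and recurse on each piece, then merge the resulting subcosets; if $L$ acts transitively on $\Gamma$ with a nontrivial block system of block-size $\geq 2$, lift to the action on the block system, then descend into its kernel. Each step shrinks the effective domain on one side by a factor of at least $2$, so the tree has depth $O(\log |\Gamma| + \log |\Delta|) = O(\log \log n)$, with the role of $\Gamma$ and $\Delta$ treated symmetrically. The fact that sibling subproblems at a given level are independent lets us evaluate them all in parallel rather than sequentially.

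At each node of this tree the needed operations---orbit computation, block-system computation, kernel of an induced action, point and set stabilizers, normal closure, membership testing, and the Schreier chain of length $O(\log n)$---are exactly the primitives collected in \Lem{PermutationGroupsNC}, each running in $\FOPLL$ depth and $\poly(n)$ size. Multiplying $O(\log \log n)$ levels by the $\poly(\log \log n)$ depth used at each level gives the claimed $\poly(\log \log n)$ total depth. The size bound $\poly(n)$ is inherited from Luks's own analysis, which already established it at this parameter range; our depth refinement does not introduce any new gates, only reorganizes when they fire.

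The main obstacle is the merge step after an intransitive split, where the solutions $(L_i z)_{\Pi_i}$ on the individual orbits must be combined into $(Lz)_\Pi$. A naive implementation would re-invoke full \algprobm{Coset Intersection} here, which would blow up the depth multiplicatively and, compounded over the $O(\log \log n)$ levels, would yield something like $\poly(\log \log n)^{\log \log n}$ rather than $\poly(\log \log n)$. Following Luks's original argument, the merge must instead be flattened into a single parallel traversal of the Schreier-generator chain from \Lem{PermutationGroupsNC}(f), of length $O(\log n)$, with an $\FOPLL$ membership test at each rung executed in parallel across the chain; this keeps each merge within $\poly(\log \log n)$ depth and so preserves the bound across the whole recursion.
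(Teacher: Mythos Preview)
Your high-level accounting---$O(\log\log n)$ recursion levels times $\FOPLL$ work per level---is the right shape, but the recursion you describe is not Luks's algorithm from \cite[Proposition~6.1]{LuksHypergraphIso}, and your merge step does not work as stated.

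Luks does not recurse on the orbit/block structure of $L$ acting on $\Gamma$. He introduces \algprobm{Problem II}, which augments \algprobm{Problem I} with a rectangular $L$-invariant window $\Theta = \Phi \times \Psi \subseteq \Gamma \times \Delta$ and asks for $(Lz)_\Pi[\Theta] := \{x \in Lz : (\Pi \cap \Theta)^x = \Pi \cap \Theta^x\}$. The recursion halves $\Theta$: choose $\Phi_1 \subsetneq \Phi$ with $|\Phi_1| = |\Phi|/2$ (or halve $\Psi$ if $|\Phi|=1$), set $M = \text{Stab}_L(\Theta_1)$, take a transversal $T$ of $L/M$, and compute each $(Mtz)[\Theta] = ((Mtz)[\Theta_1])[\Theta_2]$ as two \emph{sequential} calls on half-size windows. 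That is what forces recursion depth $O(\log|\Phi| + \log|\Psi|) = O(\log\log n)$. Your orbit/block recursion, by contrast, does not obviously halve anything in the transitive-with-blocks case: ``lift to the block system, then descend into the kernel'' leaves the kernel still acting on all of $\Gamma$, so your $O(\log\log n)$ depth claim is unjustified for that branch.

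The merge step is where your argument actually breaks. After the parallel calls over $t \in T$, the results $(Mtz)[\Theta]$ are right cosets of the common subgroup $M[\Theta]$, and their union must be expressed as a single coset with a $\poly(n)$-size generating set. The paper does this with a binary tree of depth $\log|T| \in O(\log\log n)$, invoking generator reduction (\Lem{PermutationGroupsNC}(g)) at every internal node to keep the generator count under control. Your proposed fix---``a single parallel traversal of the Schreier-generator chain with an $\FOPLL$ membership test at each rung''---does not accomplish this: membership tests decide whether a given permutation lies in a subgroup, but they do not by themselves produce a generating set for a union of cosets, nor bound its size. You have correctly identified that a naive merge would compound the depth, but the mechanism you propose to avoid it is not the right one.
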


\begin{proof}
We carefully analyze \cite[Proposition~6.1]{LuksHypergraphIso}. Without loss of generality, we may assume that $|\Gamma|, |\Delta|$ are powers of $2$. Otherwise, we can augment $\Gamma \subseteq \Gamma', \Delta \subseteq \Delta'$, and then do the following:
\begin{itemize}
\item let $z$ act trivially on $(\Gamma' \times \Delta') \setminus (\Gamma \times \Delta)$ and, 
\item for $(x,y) \in L$, let $x$ act trivially on $\Gamma' \setminus \Gamma$ and $y$ act trivially on $\Delta' \setminus \Delta$. 
\end{itemize}

\noindent As $\Pi \subseteq \Gamma \times \Delta$, which is stabilized by $Lz$, augmenting $\Gamma$ and $\Delta$ does not change $(Lz)_{\Pi}$. 

To accommodate recursion, Luks introduced the following \algprobm{Problem II}.

\begin{definition}
Luks's \algprobm{Problem II} is defined as follows:
\begin{itemize}
    \item \textsf{Instance:} $L \leq \text{Sym}(\Gamma) \times \text{Sym}(\Delta)$, 
    $z \in \text{Sym}(\Gamma \times \Delta)$, 
    $\Pi \subseteq \Gamma \times \Delta$, and $\Theta = \Phi \times \Psi \subseteq \Gamma \times \Delta$ with $|\Theta|$ a power of $2$ such that $\text{Stab}_{L}(\Theta) = L$.

    \item \textsf{Solution:} $(Lz)_{\Pi}[\Theta] = \{ x \in Lz : (\Pi \cap \Theta)^x = \Pi \cap \Theta^x \}$.
\end{itemize}
\end{definition}

\noindent \algprobm{Problem I} is a special case in which $\Theta = \Gamma \times \Delta$. We now proceed to analyze the depth of Luks's algorithm for \algprobm{Problem II} in our small-domain setting.

Luks \cite[Proposition~3.1]{LuksHypergraphIso} established that the output to \algprobm{Problem II} will be either $\emptyset$ or a right coset of $L_{\Pi}[\Theta]$. Following Luks, as $\Pi$ is fixed throughout, we will write $(Ny)[\Theta]$ instead of $(Ny)_{\Pi}[\Theta]$ for any subcoset $Ny$ of $\text{Sym}(\Gamma \times \Delta)$. We have the following cases.
\begin{itemize}
\item \textbf{Case 1:} Suppose that $|\Pi \cap \Theta| \neq |\Pi \cap \Theta^{z}|$. These intersections can be computed by an $\mathsf{AC}$ circuit of depth 2 and $\poly(\log n)$ size; their sizes can then be compared using a single threshold gate with $\poly(\log n)$ many inputs, which can then be simulated by an $\mathsf{AC}$ circuit of depth $O(\log \log n)$ and size $\poly(\log n)$.  Then $(Lz)[\Theta] = \emptyset$ (note that $\Theta^{x} = \Theta^z$ for all $x \in Lz$). 

\item \textbf{Case 2:} Suppose that both $\Pi \cap \Theta$ and $\Pi \cap \Theta^z$ are empty. As in the previous case, these intersections can be computed by an $\mathsf{AC}$ circuit of depth 2 and $\poly(\log n)$ size; checking that they are empty then increases the depth by only 1. Then $(Lz)[\Theta] = Lz$. Thus, from this point forward, we may assume without loss of generality that $\Pi \cap \Theta$ and $\Pi \cap \Theta^z$ are non-empty and the same size.

\item \textbf{Case 3:} Suppose that $|\Pi \cap \Theta| = 1$. Luks established in the proof of \cite[Proposition~3.1]{LuksHypergraphIso} that the problem in this case reduces to computing a point-stabilizer. This is $\FOPLL$-computable, by Lemma~\ref{PermutationGroupsNC}. 

\item \textbf{Case 4:} Suppose that Cases 1--3 do not hold. Then we have that $|\Pi \cap \Theta| > 1$. If $|\Phi| > 1$, then fix some $\Phi_{1} \subsetneq \Phi$ with $|\Phi_{1}| = |\Phi|/2$, and let $\Phi_{2} := \Phi \setminus \Phi_{1}$. For $i \in [2]$, define $\Theta_{i} := \Phi_{i} \times \Psi$. If instead $|\Phi| = 1$, then fix some $\Psi_{1} \subsetneq \Psi$ with $\Psi_{1} = |\Psi|/2$, and let $\Psi_{2} := \Psi \setminus \Psi_{1}$. We then take, for $i \in [2]$, $\Theta_{i} := \Phi \times \Psi_{i}$.

Without loss of generality, suppose that $\Theta_{1} = \Phi_{1} \times \Psi$. Let $M = \text{Stab}_{L}(\Theta_{1})$. In the proof of \cite[Proposition~6.1]{LuksHypergraphIso}, Luks notes that it would be prohibitive to enumerate $2^{\Phi \times \Psi}$, and so we determine the orbit $\mathcal{O}$ of $\Phi_{1}$ only under the action of $L$ on the first coordinate. By Lemma~\ref{PermutationGroupsNC}(f), we can compute a set $S$ of strong generators for $M$, together with a transversal $T$ of $L/M$, in $\FOPLL$. Using $S$ and $T$, we can compute Schreier generators for $M$. We need to ensure that the number of Schreier generators is $\text{poly}(n)$. Luks \cite[Proposition~6.1]{LuksHypergraphIso} shows a stronger result that we can obtain $\text{poly}(|\Gamma \times \Delta|) = O(\text{polylog}(n))$ generators. This can be accomplished in $\FOPLL$ using Lemma~\ref{PermutationGroupsNC}(g). 

Now observe that:
\[
L = \bigcup_{t \in T} Mt.
\]

As $\Theta_{1}$ is stabilized by $M$, we have that $Mtz[\Theta_{1}]$ is a coset of $L/M$, provided $Mtz[\Theta_{1}]$ is non-empty. So we can proceed recursively for each $t$:
\[
(Mtz)[\Theta] = ((Mtz)[\Theta_{1}])[\Theta_{2}].
\]

Each of the $|T|$ recursive calls can be executed in parallel. The calls to problems on $\Theta_1$ and $\Theta_2$ have to be made serially. However, there are only two such problems, each of half the size of $\Theta$, so the recursion tree will look like a binary tree of height $O(\log |\Theta|) = O(\log \log n)$. 

Finally, observe that:
\[
(Lz)[\Theta] = \bigcup_{t \in T} (Mtz)[\Theta].
\]

In particular, each non-empty $Mtz[\Theta]$ is a coset of $M[\Theta]$, the subgroup of $M$ stabilizing $\Pi \cap \Theta$. Note that when combining cosets, we need to ensure that the number of group generators is $\text{poly}(n)$. To accomplish this, we use Lemma~\ref{PermutationGroupsNC}(g). As our permutation domain has size $O(\log^2 n)$, each application of Lemma~\ref{PermutationGroupsNC}(g) is $\FOPLL$-computable. We use a binary tree circuit of depth $\log |T| \in O(\log \log n)$ to compute $(Lz)[\Theta]$ from the $(Mtz)[\Theta]$ ($t \in T$). This yields a circuit of depth $\poly(\log \log n)$ to compute $(Lz)[\Theta]$.

Luks \cite[Proposition~6.1]{LuksHypergraphIso} showed that the recursion depth is $O(\log |\Phi| + \log |\Psi|) = O(\log \log n)$. We have shown that each recursive step is computable using an $\textsf{AC}$ circuit of depth $\poly(\log \log n)$ and size $\poly(n)$. Thus, our circuit has depth $\poly(\log \log n)$ and size $\text{poly}(n)$, as desired. \qedhere
\end{itemize}
\end{proof}

\section{Twisted Code Equivalence over small domains in parallel}

Throughout this section up to and including Section~\ref{sec:FF}, we frequently cite both the published extended abstract Babai--Codenotti--Qiao \cite{BCQ} and Codenotti's thesis \cite{CodenottiThesis}. As an extended abstract, \cite{BCQ} omitted some details of proofs, whereas \cite{CodenottiThesis} contains additional exposition and full details. 

We first recall the definition of \algprobm{Twisted Code Equivalence} from \cite{BCQ, CodenottiThesis}. 

\begin{definition}[{\cite{BCQ,CodenottiThesis}} Twisted equivalence of codes]
Let $\Gamma$ be an alphabet, and let $\mathcal{A} \subseteq \Gamma^{A}, \mathcal{B} \subseteq \Gamma^{B}$ be codes of length $n$ over $\Gamma$. Let $G$ be a group acting on $\Gamma$. Given a bijection $\pi : A \to B$ and a function $g : B \to G$, we say that $\psi = (\pi, g)$ is a $G$-\textit{twisted equivalence} of the codes $\mathcal{A}$ and $\mathcal{B}$ if $\mathcal{A}^{\psi} = \mathcal{B}$. We denote the set of all $G$-twisted equivalences of $\mathcal{A}$ and $\mathcal{B}$ as $\text{EQ}_{G}(\mathcal{A}, \mathcal{B})$. 

We will now generalize the notion of twisted equivalence to codes over multiple alphabets. Let $\Gamma_{1}, \ldots, \Gamma_{r}$ be our disjoint, finite alphabets. A string of \textit{length} $(k_1, \ldots, k_r)$ over $(\Gamma_1, \ldots, \Gamma_r)$ is a set of maps $x_i : A_i \to \Gamma_i$, denoted collectively as $x$, where $|A_i| = k$. A code of length $(k_1, \ldots, k_r)$ has total length $n := \sum_{i=1}^{r} x_i$. The set of all strings of length $(k_1, \ldots, k_r)$ over $(\Gamma_1, \ldots, \Gamma_r)$ is $\prod_{i=1}^{r} \Gamma_{i}^{A_{i}}$. A \textit{code} of length $(k_1, \ldots, k_r)$ with domain $(A_1, \ldots, A_r)$ is a subset $\mathcal{A} \subseteq \prod_{i=1}^{r} \Gamma_{i}^{A_{i}}.$

For each $i \in [r]$, let $G_i \leq \text{Sym}(\Gamma_i)$ be a group acting on $\Gamma_i$, and let $\mathcal{G} := (G_{1}, \ldots, G_{r})$. Given bijections $\pi_{i} : A_i \to B_i$ and functions $g : B_i \to G_i$ ($i \in [r]$), let $\psi_{i} := (\pi_i, g_i)$ and $\psi := (\psi_1, \ldots, \psi_r)$. For $a = (a_1, \ldots, a_r) \in \prod_{i=1}^{r} \Gamma_{i}^{A_{i}}$, define $a^{\psi} := (a_{1}^{\psi_{1}}, \ldots, a_{r}^{\psi_{r}})$. Now a $\mathcal{G}$-\textit{twisted equivalence} of two codes $\mathcal{A} \subseteq \prod_{i=1}^{r} \Gamma_{i}^{A_{i}}$ and $\mathcal{B} \subseteq \prod_{i=1}^{r} \Gamma_{i}^{B_{i}}$ is a $\psi$ such that $\mathcal{A}^{\psi} = \mathcal{B}$. The set of all $\mathcal{G}$-twisted equivalences of $\mathcal{A} \to \mathcal{B}$ is denoted $\text{EQ}_{\mathcal{G}}(\mathcal{A}, \mathcal{B})$. 
\end{definition}

\begin{definition}[{\cite{BCQ}}]
The \algprobm{Twisted Code Equivalence} problem takes as input two codes $\mathcal{A} \subseteq \prod_{i=1}^{r} \Gamma_{i}^{A_{i}}$ and $\mathcal{B} \subseteq \prod_{i=1}^{r} \Gamma_{i}^{B_{i}}$, as well as a sequence of groups $\mathcal{G} = (G_1, \ldots, G_r)$, with $G_{i} \leq \text{Sym}(\Gamma_i)$ given by a faithful permutation representation, and decides as output whether $\text{EQ}_{\mathcal{G}}(\mathcal{A}, \mathcal{B}) \neq \emptyset$.
\end{definition}

\begin{theorem}[cf. {\cite[Thm.~5]{BCQ}=\cite[Thm.~4.2.1]{CodenottiThesis}}] \label{thm:TwistedCodeEquivalence}
Let $m \in O(\log n)$. Let $\mathcal{A} \subseteq \prod_{i=1}^{r} \Gamma_{i}^{A_{i}}$ and $\mathcal{B} \subseteq \prod_{i=1}^{r} \Gamma_{i}^{B_{i}}$ be codes of total length $m$. For each $i \in [r]$, let $G_i$ be a group acting on $\Gamma_i$ given by its multiplication table.\footnote{The formulation in \cite{BCQ, CodenottiThesis} assumes that the $G_i$ are given by faithful permutation representations. For isomorphism testing of Fitting-free groups, we will be given the $G_i$ by their multiplication tables.} Let $G_{\max}$ and $\Gamma_{\max}$ such that $|G_{\max}| = \max_{i \in [r]} |G_{i}|$, and $|\Gamma_{\max}| = \max_{i \in [r]} |\Gamma_i|$. Then $\text{EQ}_{(G_1, \ldots, G_r)}(\mathcal{A}, \mathcal{B})$ can be found using an $\textsf{AC}$ circuit of depth  $O( (\log n) \cdot (\log |\Gamma_{\max}|) \cdot \poly(\log \log n))$ and size $\poly(n)$.
\end{theorem}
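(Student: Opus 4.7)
The plan is to carefully parallelize the serial algorithm for \algprobm{Twisted Code Equivalence} from \cite{BCQ,CodenottiThesis}. That algorithm represents $\text{EQ}_{\mathcal{G}}(\mathcal{A},\mathcal{B})$ as a subcoset of a wreath-like permutation group and refines this subcoset step by step, each refinement being an instance of \algprobm{Coset Intersection}. Our main leverage is the hypothesis $m \in O(\log n)$: the permutation groups actually handed to \algprobm{Coset Intersection} during the algorithm will have natural domains of size $O(\log n)$, so \Prop{prop:CosetIntersection} and \Lem{PermutationGroupsNC} apply with depth $\poly(\log \log n)$ and size $\poly(n)$ per call.

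I would organize the algorithm as two nested recursions. The outer recursion processes the coordinates of the codes in turn; after step $i$ it maintains the subcoset of equivalences consistent with the first $i$ coordinates. Since the total length $m$ is $O(\log n)$, this outer loop contributes $O(\log n)$ layers of depth. The inner recursion handles the fact that at a single coordinate the alphabet $\Gamma_i$, and hence the action group $G_i$, may be large: following \cite{CodenottiThesis}, we descend along a stabilizer/block chain for $G_i$ of length $O(\log |G_i|) \leq O(\log |\Gamma_{\max}|)$, and at each level we impose one new constraint on the current candidate subcoset by intersecting it with a subgroup of $\text{Sym}(m)$ dictated by that layer of the chain. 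Each such intersection is an instance of \Prop{prop:CosetIntersection} on a domain of size $O(\log n)$, costing depth $\poly(\log \log n)$ and size $\poly(n)$. After every intersection I would apply \Lem{PermutationGroupsNC}(g) to trim the generating set down to size $\poly(m) = \poly(\log n)$, thereby controlling the total circuit size.

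The main obstacle is justifying that all invocations of \algprobm{Coset Intersection} really are on $O(\log n)$-sized domains, even though $|\Gamma_{\max}|$ may be as large as $n$. The key point is that the twisting data $g_i : B_i \to G_i$ need not be materialized as an explicit element of the large wreath product $\text{Sym}(m) \ltimes G_i^{m}$; instead, the $G_i$-part of the refinement is processed one layer of the stabilizer chain at a time, and at each layer only the surviving candidates for the domain permutation in $\text{Sym}(m)$ need to be intersected. Multiplying the $O(\log n)$ outer iterations, the $O(\log |\Gamma_{\max}|)$ inner iterations, and the $\poly(\log \log n)$ depth per coset intersection yields the claimed depth $O((\log n)(\log |\Gamma_{\max}|) \poly(\log \log n))$, with total circuit size $\poly(n)$.
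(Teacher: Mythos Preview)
Your high-level accounting of the depth---$O(\log n)$ outer stages times $O(\log |\Gamma_{\max}|)$ inner stages times $\poly(\log\log n)$ per coset intersection---matches the final bound, but the algorithm you sketch is not the one in \cite{CodenottiThesis}, and as described it does not work. The serial algorithm of \cite[Thm.~4.2.1]{CodenottiThesis} is \emph{not} a single subcoset refined coordinate-by-coordinate. It is a dynamic program whose table is indexed by \emph{pairs of partial strings} $(y,z)$ (where $y$ is a partial string actually occurring in $\mathcal{A}$, and $z$ in $\mathcal{B}$), with entry $I(y,z)$ recording the twisted equivalences between the restriction of $\mathcal{A}_y$ to $A\setminus\mathrm{dom}(y)$ and the restriction of $\mathcal{B}_z$ to $B\setminus\mathrm{dom}(z)$. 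Indexing by partial strings is exactly what pins down \emph{which codewords go to which}; without that, your phrase ``subcoset of equivalences consistent with the first $i$ coordinates'' has no well-defined meaning, since a code equivalence permutes coordinates arbitrarily and the constraint imposed at a coordinate depends on the full correspondence of codewords.

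Concretely, the paper's stage that shortens $(y,z)$ by one position uses the identity
\[
I(y,z)\;=\;\bigcup_{s\in B_i}\ \bigcup_{g\in G_i}\ \bigcap_{\substack{\gamma\in\Gamma_i\\ \mathcal{A}_{y(t,\gamma)}\neq\emptyset}} I\bigl(y(t,\gamma),\,z(s,\gamma^{g^{-1}})\bigr),
\]
and the $\log|\Gamma_{\max}|$ factor in the depth comes from evaluating the \emph{intersection over $\gamma\in\Gamma_i$} by a balanced binary tree of calls to \Prop{prop:CosetIntersection}, of height $\log|\Gamma_i|$. It does not come from a stabilizer chain inside $G_i$; indeed your inequality $O(\log|G_i|)\le O(\log|\Gamma_{\max}|)$ is false in general (faithfulness only gives $\log|G_i|\le |\Gamma_i|\log|\Gamma_i|$). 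The unions over $s$ and $g$ are then collapsed by a second binary tree, merging cosets and trimming generators with \Lem{PermutationGroupsNC}(g). Your worry about ``materializing the twist in a large wreath product'' is misplaced: the $G_i$-component is handled by the explicit union over $g\in G_i$, and the coset intersections themselves live in the wreath product acting on the $m$ positions, where $m\in O(\log n)$, so \Prop{prop:CosetIntersection} applies directly.
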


\begin{proof}
We follow the details of \cite[Theorem~4.2.1]{CodenottiThesis}. For subsets $U_{i} \subseteq A_{i}$, we call the functions $y : \bigcup_{i=1}^{r} U_{i} \to \bigcup_{i=1}^{r} \Gamma_{i}$ mapping $U_{i}$ into $\Gamma_{i}$ the \textit{partial strings} over $A = (A_1, \ldots, A_r)$, and we call the tuple $(|U_1|, \ldots, |U_r|)$ the \textit{length} of $y$. For every partial string $y$ over $A$, let $\mathcal{A}_{y}$ be the set of strings in $\mathcal{A}$ that are extensions of $y$. The analogous definitions hold for $\mathcal{B}$. Let $\mathcal{G} = (G_1, \ldots, G_r)$.

We construct a dynamic programming table with an entry for each pair $(y, z)$ of partial strings, $y$ over $A$ and $z$ over $B$, such that $\mathcal{A}_{y} \neq \emptyset$ and $B_{z} \neq \emptyset$. For each such pair, we store the set $I(y, z)$ of twisted equivalences of the restriction of $\mathcal{A}_{y}$ to $A \setminus \text{dom}(y)$ with the restriction of $\mathcal{B}_{Z}$ to $B \setminus \text{dom}(z)$. The sets $I(y,z)$ are either empty or cosets, and so we store them by generators and a coset representative. Note  that the condition $\mathcal{A}_{y} \neq \emptyset$ and $\mathcal{B}_{y} \neq \emptyset$ implies that only partial strings actually appearing in $\mathcal{A}$ and $\mathcal{B}$ will be recorded.

We proceed in $m$ stages. We will show that each stage can be implemented using an $\textsf{AC}$ circuit of depth $O( (\log |\Gamma_{\max}|) \cdot \poly(\log \log n))$ and size $\poly(n)$. As $m \in O(\log n)$, we obtain an $\textsf{AC}$ circuit of depth 
$O((\log n) \cdot (\log |\Gamma_{\max}|) \cdot \poly(\log \log n))$ and size $\text{poly}(n)$.

At the initial stage, we consider each pair $(y, z)$ of full strings, with $y$ over $\mathcal{A}$ and $z$ over $\mathcal{B}$. Our final stage will involve considering $\text{dom}(y) = \text{dom}(z) = \emptyset$; at which point, we will have constructed all $\mathcal{A} \to \mathcal{B}$ twisted $\mathcal{G}$-equivalences. When $y, z$ are full strings, we have $|\mathcal{A}_{y}| = 1$ and $|\mathcal{B}_{z}| = 1$. So $|I(y, z)| \leq 1$, and thus $I(y, z)$ is trivial to compute. It follows that we can compute in $\textsf{AC}^{0}$ all such $I(y,z)$, where again $y, z$ are full strings. 

Now fix $0 < \ell \leq m$, and suppose that for all $y', z'$ of total length greater than $\ell$ that we have computed $I(y', z')$. We will consider, in parallel, all pairs of partial strings $(y, z)$ of length $\ell$. Fix such $y, z$. To construct $I(y, z)$, we augment the domain of $y$ by one index $t \in A$, and the domain of $z$ by one index $s \in B$. For each $i \in [r]$, we proceed as follows in parallel. Fix  $t \in A_i$, and consider in parallel all possible choices of $s \in B_i$. For each $s \in B_i$ and $g \in G_i$, we will find the strings in $I(y, z)$ that move $s \to t$ and act on the symbol in position $\gamma$ by $g$. Precisely, let $y(t, \gamma)$ be the partial string extending $y$ by $\gamma$ at position $t$, and define $z(s, \gamma^{g^{-1}})$ analogously. We have that:
\[
I(y, z) = \bigcup_{s \in B_i} \bigcup_{g \in G_i} \bigcap_{ \substack{\gamma \in \Gamma_i \\ \mathcal{A}_{y(t, \gamma)} \neq \emptyset}} I( y(t, \gamma), z(s, \gamma^{g^{-1}})).
\]

\noindent If $z(s, \gamma^{g^{-1}}) \in \mathcal{B}$, then we can look up the value of $I(y(t,\gamma), z(s, \gamma^{g^{-1}}))$ in the table. If for some $\gamma, g$, $z(s, \gamma^{g^{-1}}) \not \in \mathcal{B}$, then $I(y(t,\gamma), z(s, \gamma^{g^{-1}})) = \emptyset$.

We may compute:
\[
\bigcap_{ \substack{\gamma \in \Gamma_i \\ \mathcal{A}_{y(t, \gamma)} \neq \emptyset}} I( y(t, \gamma), z(s, \gamma^{g^{-1}}))
\]
using a binary tree circuit of depth $O(\log |\Gamma_i|)$. At each node of the binary tree, we use a call to \algprobm{Coset Intersection}. As our permutation domain is of size $O(\log n)$, we have by Proposition~\ref{prop:CosetIntersection} that each call to \algprobm{Coset Intersection} is computable using an $\textsf{AC}$ circuit of depth $\poly(\log \log n)$ and size $\poly(n)$. This yields an $\textsf{AC}$ circuit of depth $O(\log(|\Gamma_i|) \cdot \poly(\log \log n))$.

We may now compute $I(y, z)$ by taking a union over all $s \in B_i$ and all $g \in G_i$. Now we must take care to keep the number of generators to be at most $\poly(n)$. We can  accomplish this with Lemma~\ref{PermutationGroupsNC}(g). As our permutation domain has size $O(\log n)$, each application of Lemma~\ref{PermutationGroupsNC}(g) is $\FOPLL$-computable. Now we use a binary tree circuit of depth $O(\log |B_i| + \log |\Gamma_i|) \subseteq O(\log \log n + \log |\Gamma_{\max}|)$, where at each node, we merge two cosets and then apply an instance of Lemma~\ref{PermutationGroupsNC}(g). Thus, taking the union can be accomplished using an $\textsf{AC}$ circuit of depth $O(\poly(\log \log n) + (\log |\Gamma_{\max}| ) \cdot \poly(\log \log n)) = O((\log |\Gamma_{\max}| ) \cdot \poly(\log \log n))$. 

Thus, the total cost to compute $I(y, z)$ (given the $I(y', z')$ for all pairs of partial strings of total length greater than $\ell$) is an $\textsf{AC}$ circuit of depth 
$O( (\log |\Gamma_{\max}|) \cdot \poly(\log \log n)).$ 

The result now follows.
\end{proof}

\section{Permutational Isomorphisms}
Two subgroups $G,H \leq S_n$ are considered \emph{permutationally isomorphic} if there is a bijection $\varphi \colon [n] \to [n]$ such that $G^{\varphi} := \{\varphi g \varphi^{-1} : g \in G\} = H$. Note that a permutational isomorphism induces an isomorphism of the abstract groups, namely the homomorphism that maps $g \in G$ to $\varphi g \varphi^{-1}$. The set of permutational isomorphisms $G \to H$ is denoted $\text{PISO}(G,H)$.

\begin{proposition}[cf. {\cite[Proposition~1.2.3]{CodenottiThesis}}] \label{prop:ListPermIsoTransitive}
Let $G, H \leq S_{n}$ be transitive permutation groups, and let $\phi \in \text{Iso}(G, H)$. We can list the (at most $n$) permutational isomorphisms $\pi \in \text{PISO}(G, H)$ corresponding to $\phi$ in $\textsf{NL}$.
\end{proposition}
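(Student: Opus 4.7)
The plan is to use transitivity of $G$ to parametrize $\text{PISO}(G,H)_\phi$ (those $\pi$ corresponding to the given $\phi$) by the image of a single base point. Fix any $a_0 \in [n]$. If $\pi g \pi^{-1} = \phi(g)$ for all $g \in G$, then $\pi(g \cdot a_0) = \phi(g) \cdot \pi(a_0)$, and transitivity of $G$ makes this equation determine $\pi$ on all of $[n]$ once $b := \pi(a_0) \in [n]$ is fixed. This immediately yields the ``at most $n$'' bound and reduces the listing problem to enumerating the $n$ candidates $b \in [n]$, constructing the corresponding $\pi_b$, and verifying its validity.

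For each candidate $b$, I set $\pi_b(a) := \phi(g_a) \cdot b$, where $g_a \in G$ is any element taking $a_0$ to $a$. Comparing two choices $g_a, g_a'$ and noting that $g_a^{-1} g_a' \in \text{Stab}_G(a_0)$, one sees $\pi_b$ is well-defined iff $\phi(\text{Stab}_G(a_0)) \subseteq \text{Stab}_H(b)$; matching indices then force equality. When $\pi_b$ is well-defined, the relation $\pi_b(g \cdot a) = \phi(g) \cdot \pi_b(a)$ follows from well-definedness applied to $g \cdot g_a$ and $g_{g \cdot a}$ (both send $a_0$ to $g \cdot a$), making $\pi_b$ a permutational isomorphism implementing $\phi$; bijectivity follows from transitivity of $H$ and surjectivity of $\phi$. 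Hence it suffices to check well-definedness.

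To execute this in $\NL$, I use that finding a witness $g_a$ amounts to finding a word in the generators of $G$ whose action carries $a_0$ to $a$, i.e., a path from $a_0$ to $a$ in the Schreier graph of $G$ acting on $[n]$; this is standard $s$-$t$ reachability, hence in $\NL$. Given such a word $s_{i_1}, \ldots, s_{i_k}$, the value $\phi(s_{i_1}) \cdots \phi(s_{i_k}) \cdot b$ is obtained in streaming fashion by maintaining only the running image of $b$ in $O(\log n)$ bits. The well-definedness condition becomes the statement that, for every $a \in [n]$, only a single $c \in [n]$ is reachable from $(a_0, b)$ in the product action graph on $[n] \times [n]$ whose edges are labeled by generators of $G$ acting on the first coordinate and by their $\phi$-images acting on the second. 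This is a $\co\NL = \NL$ predicate by Immerman--Szelepcs\'enyi.

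The main subtlety to watch is that the nondeterminism used to locate $g_a$ must not corrupt the output: if $b$ is an invalid candidate, different witness paths could yield different values of $\pi_b(a)$. I resolve this by running the well-definedness check first; if it passes, the unique reachable $c$ is $\pi_b(a)$ and I emit it, while if it fails I reject $b$. Iterating over all $n$ choices of $b$ and outputting each resulting valid $\pi_b$ then produces the desired list of size at most $n$ in $\NL$.
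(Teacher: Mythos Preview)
Your proposal is correct and follows essentially the same approach as the paper: both parametrize the permutational isomorphisms corresponding to $\phi$ by the image of a fixed base point (yielding at most $n$ candidates), recover each candidate via $\NL$ path-finding in the Schreier graph, and verify validity. The paper phrases this as colored-graph isomorphism between the Schreier graphs $X(G)$ and $X(H)$ and checks the resulting bijection directly rather than separately testing well-definedness via Immerman--Szelepcs\'enyi, but the underlying computation is the same.
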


Note that here the domain size is $n$, rather than $m \in O(\log n)$; that is, we are not working in the ``small domain'' setting in this section.

\begin{proof}
We first recall the procedure in the proof of \cite[Proposition~1.2.3]{CodenottiThesis}, and then show that it can be implemented in $\textsf{NL}$.
Let $S$ be the given set of generators for $G$. Then $\phi(S)$ generates $H$. We construct two directed, edge-colored graphs $X(G), X(H)$, over the vertex set $[n]$. For each $\sigma \in S$, we add to $X(G)$ an edge $(i, j)$ colored by $\sigma$ (we abuse notation by labeling the colors according to the generators) if $\sigma(i)=j$. Similarly, for each $\sigma \in S$, we add to $X(H)$ an edge $(i, j)$ colored by $\sigma$ if $\phi(\sigma)(i)=j$. We have from the proof of \cite[Proposition~1.2.3]{CodenottiThesis} that colored graph isomorphisms between $X(G)$ and $X(H)$ are precisely the permutational isomorphisms of $G$ and $H$.

In the proof of \cite[Proposition~1.2.3]{CodenottiThesis}, Codenotti decides isomorphism (constructively) between $X(G)$ and $X(H)$ in the following manner. Fix $x \in X(G), y \in X(H)$. Now apply BFS to $X(G)$ (resp. $X(H)$) starting from $x$ (resp. $y$). We assign to each vertex $v \in X(G)$ (resp. $v' \in X(H)$) the unique sequence of edge colors from $x$ (resp. $y$) to $v$ (resp. $v'$). If the labelings do not yield a bijection, we reject. Otherwise, we check whether this bijection is indeed an isomorphism between the two graphs.

We claim that this procedure can be implemented in $\textsf{NL}$. For fixed $x$, we consider each $v \in X(G)$. Now we can non-deterministically guess an $x \to v$ path, and label $v$ by the sequence of edge colors along this path. We proceed analogously for $X(H)$. Now we check in $\textsf{AC}^{0}$ whether (i) these labels induce a bijection, and (ii) if so, whether this bijection is a colored graph isomorphism of $X(G)$ and $X(H)$. Note that we need not check whether the labeling we guessed was witnessed by a breadth-first traversal of $X(G)$ and $X(H)$, for if $\phi$ is a permutational isomorphism, then any such path will satisfy the above check, and we only need a set of paths that span the graph to determine $\phi$ completely.

To enumerate all possible permutational isomorphisms corresponding to $\phi$, we need to try each choice for $x$ as the root. As there are at most $n$ permutational isomorphisms, we can enumerate all permutational isomorphisms in $\textsf{NL}$, as desired.
\end{proof}

\section{Permutational isomorphism of transitive permutation groups on small domains in parallel}

\subsection{Additional preliminaries: primitivity and structure trees} \label{sec:perm_prelim}
We refer to the textbook by Dixon \& Mortimer \cite{DixonMortimer} for many of the concepts around permutation groups that we briefly review here. A permutation group $G \leq S_m$ is \emph{transitive} if the permutation domain $[m]$ is a single $G$-orbit. An equivalence relation $\sim$ on $[m]$ is $G$-invariant if $x \sim y \Leftrightarrow x^g \sim y^g$ for all $x,y \in [m], g \in G$. The discrete equivalence---in which all elements are pairwise inequivalent---and indiscrete equivalence---in which all elements are equivalent---are considered trivial. The equivalence classes of any $G$-invariant equivalence relation are called \emph{blocks} of $G$. The singleton sets and the whole set $[m]$ are considered trivial blocks; any other blocks, if they exist, are non-trivial. $G$ is \emph{primitive} if it has no non-trivial blocks, or equivalently, has no non-trivial $G$-invariant equivalence relations. 

Equivalently, a block for $G$ is a subset $\Delta \subseteq [m]$ such that for every $g \in G$, $\Delta^g$ is either disjoint from $\Delta$ or equal to $\Delta$. A \emph{system of imprimitivity} for $G$ is a collection of blocks that partition $[m]$ (equivalently, the equivalence classes of a $G$-invariant equivalence relation). If $G$ is transitive, every system of imprimitivity arises as $\{\Delta^g : g \in G\}$ for some block $\Delta$. A block is \emph{minimal} if it is non-trivial and contains no proper subset that is also a non-trivial block. Two distinct minimal blocks can intersect in at most one point (standard exercise). An \emph{orbital} of $G$ is an orbit of $G$ on $[m] \times [m]$, the set of ordered pairs of elements of $[m]$. If we think of an orbital as a set of edges for a directed graph on vertex set $[m]$, then each connected component of this graph is a block for $G$. 

The following definitions tend to arise more in the setting of algorithms for permutation groups; see, e.\,g., \cite{Seress} for a textbook treatment.

\begin{definition}
An \textit{invariant tree} for a transitive group $G \leq \text{Sym}(\Omega)$ is a rooted tree whose set of leaves is $\Omega$ and to which the $G$-action extends as tree automorphisms. Note that such an extension is necessarily unique.
\end{definition}

\begin{definition}
An invariant tree for $G \leq \text{Sym}(\Omega)$ is a \textit{structure tree} if every internal node has at least $2$ children; and for every internal node $u$ of the tree, the action of $\text{Stab}_{G}(u)$ on the set of children of $u$ is primitive.
\end{definition}

An invariant tree corresponds to a hierarchy of block systems, and a structure tree corresponds to a maximal such hierarchy. Each node $v$ is labeled by the set $B(v)$ of leaves in the subtree rooted at $v$. The $k$th layer of a structure tree is the set of all nodes at distance $k$ from the root. The depth of the structure tree is $d$ if the leaves are on the $d$th layer. It follows from the definition of structure tree (and the transitivity of $G$) that all leaves must be on the same layer. 

Now note that every node corresponds to a block $B$ in a system of imprimitivity, and the blocks corresponding to nodes on the same layer are those in the $G$-orbit of $B$. This implies that if $u, v$ are nodes on the same layer, we have that $\text{Stab}_G(u) \cong \text{Stab}_G(v)$. In particular, a given layer of the tree is determined by any one block on that layer (by taking the orbit of that block).

\subsection{Parallel algorithms for small-domain structure trees and primitive groups}
Babai, Codenotti, and Qiao \cite[Lemma~2]{BCQ}=\cite[Lemma~1.2.11]{CodenottiThesis} showed that if $G \leq \text{Sym}(\Omega)$ is transitive of degree $m$, the number of structure trees is at most $m^{2\log m}$, and that they can be listed in time $m^{2\log(m) + O(1)}$. We show that this result can be done effectively in parallel.

\begin{lemma}[cf. {\cite[Lemma~2]{BCQ}=\cite[Lemma~1.2.11]{CodenottiThesis}}] \label{lem:EnumerateStructureTrees}
Let $G \leq \text{Sym}(\Omega)$ be a transitive permutation group of degree $m$. We can list all structure trees of $G$ using an $\textsf{SAC}$ circuit of depth $O(\log^2 m)$ and size $m^{O(\log m)}$.
\end{lemma}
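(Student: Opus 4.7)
The plan is to use a top-down recursive description of structure trees. Every structure tree corresponds uniquely to a maximal chain of refining $G$-invariant partitions
\[
\{\Omega\} = \mathcal{B}_0 \prec \mathcal{B}_1 \prec \cdots \prec \mathcal{B}_d = \{\{\omega\} : \omega \in \Omega\},
\]
in which each refinement step replaces each block $B \in \mathcal{B}_{i-1}$ by a minimal block system for the action of $\text{Stab}_G(B)$ on $B$. Since each refinement at least doubles the number of blocks in play, $d \leq \log_2 m$. For a transitive permutation group on $k$ points, the number of minimal block systems is at most $k-1$: each such system is determined by its block through a fixed point $x$, which is in turn the minimal block containing $x$ together with some other point $y$. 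Hence there are at most $m^{O(\log m)}$ structure trees in total, matching the claimed size bound.

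The circuit enumerates every candidate structure tree and verifies each in parallel. Each candidate is specified by the sequence of minimal-block-system choices it makes along the tree. Verification walks down the candidate level by level: at each level $i$ and for each block $B \in \mathcal{B}_{i-1}$, the circuit computes $\text{Stab}_G(B)$ (generated by a point stabilizer $G_x$ for $x \in B$ together with coset representatives moving $x$ to each other point of $B$) and its restricted action on $B$, then checks that the chosen refinement is a minimal block system for this action. A minimal block system on $B$ containing a specified pair $(x,y)$ is the smallest $\text{Stab}_G(B)$-invariant equivalence relation on $B$ that identifies $x$ with $y$; it is computable as a reachability problem on the pair-graph on $B \times B$ whose edges come from symmetry, transitivity, and the $G$-action on pairs. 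Reachability lies in $\textsf{SAC}^1$ (depth $O(\log m)$, size $\poly(m)$), and minimality among the $O(m)$ candidate minimal block systems (one per choice of second point $y$) is a single parallel comparison of block sizes.

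The depth analysis is then: $O(\log m)$ levels of the structure tree, each contributing $O(\log m)$ SAC-depth for the stabilizer, restricted-action, and minimality work, giving total depth $O(\log^2 m)$. The total size is $m^{O(\log m)}$, absorbed into an outermost unbounded-fan-in OR over candidates. Note that $m$ is \emph{not} assumed small in this section, so we are in the general NC regime for permutation groups rather than the $\FOPLL$ regime used elsewhere; the needed primitives from \Lem{PermutationGroupsNC} and \cite{BabaiLuksSeress} (point stabilizers, setwise stabilizer of a block via coset representatives, orbit of a pair under $G$) all remain in $\textsf{SAC}^1$ in $m$. The main obstacle is keeping each per-level computation inside the $O(\log m)$ SAC budget rather than relying on the generic $\textsf{NC}^2$ simulation of $\textsf{NL}$; this relies on expressing block-system computations as explicit reachability on the induced pair-graph, exploiting the containment $\textsf{NL} \subseteq \textsf{SAC}^1$.
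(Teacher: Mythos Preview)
Your proposal is correct and follows essentially the same high-level strategy as the paper: both bound the depth of a structure tree by $\lceil \log_2 m \rceil$, both observe that each level admits at most $O(m^2)$ choices (hence $m^{O(\log m)}$ candidates total), and both argue that the per-level work---finding minimal blocks---reduces to graph reachability and thus sits in $\textsf{NL} \subseteq \textsf{SAC}^1$, yielding total depth $O(\log^2 m)$.

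The paper's route is somewhat simpler in two respects. First, rather than working top-down with the full chain of partitions, the paper observes (by transitivity) that a structure tree is determined by the nested chain of blocks $B_\ell \subsetneq B_{\ell-1} \subsetneq \cdots \subsetneq B_0 = \Omega$ through a \emph{single fixed leaf}; this removes any need to track what happens at every block of every level. Second, and more substantively, the paper never computes $\text{Stab}_G(B)$ or any point stabilizer: it finds the minimal blocks through a fixed point by taking, for each orbital $\Sigma_j$ of $G$, the connected component of $x$ in the orbital digraph $X_j(\Omega, \Sigma_j)$, and then selecting the inclusion-minimal such components. This is pure graph connectivity on $G$'s given generators. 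Your detour through $\text{Stab}_G(B)$ is not wrong, but it is unnecessary: blocks of $G$ contained in $B$ are exactly the blocks of $\text{Stab}_G(B)$ acting on $B$ (if $C \subseteq B$ and $C^g \cap C \neq \emptyset$ then $B^g = B$, so $g \in \text{Stab}_G(B)$), so your pair-graph reachability can be run with $G$'s generators directly. This sidesteps the question---which you leave somewhat open---of whether point-stabilizer computation from \cite{BabaiLuksSeress} actually fits inside depth $O(\log m)$ rather than merely $\poly(\log m)$.
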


\begin{proof}
We follow the strategy in the proof of \cite[Lemma~1.2.11]{CodenottiThesis}. Recall that every layer of a structure tree is determined by a single block on that layer. Thus, a structure tree of depth $\ell$ is determined by a nested sequence of blocks 
\[
B_{\ell} \subseteq B_{\ell-1} \subseteq \cdots \subseteq B_{0} = \Omega,
\] 
where $B_{\ell}$ consists of exactly one element of $\Omega$. As the $B_{i}$ are blocks of imprimitivity, $|B_{i}| \geq 2 \cdot |B_{i+1}|$, and so $\ell \leq \lceil \log_{2}(m) \rceil$. In order to obtain $B_{i}$ from $B_{i+1}$, we augment $B_{i}$ with a minimal block. As two minimal blocks intersect either trivially or in exactly one point, there are at most $\binom{m}{2}$ such minimal blocks and so at most $\binom{m}{2}$ such choices to obtain $B_{i}$ from $B_{i+1}$.

To construct all structure trees, we try every possible nested sequence of blocks. We now turn to finding all possible minimal blocks for a fixed layer. To do so, we consider the orbitals of $G$ (see Section~\ref{sec:perm_prelim}). For every orbital $\Sigma_j$ of $G$, construct a directed graph $X_j(\Omega, E_j)$, where $(x,y) \in E_j$ precisely if $(x,y) \in \Sigma_j$. We can obtain the minimal blocks containing $x \in \Omega$ in the following way. For each orbital digraph $X_j$, consider the connected component containing $x$. These form a set system of at most $m-1$ sets. The minimal sets in this system are precisely the minimal blocks containing $x$. We can compute the connected components of the $X_j$ as well as identify the minimal such components containing $x$ in $\textsf{NL} \subseteq \textsf{SAC}^{1}$ (where the complexity is measured with respect to $m$). Thus, for a fixed $B_{i+1}$, we can compute all possible $B_i$ in $\textsf{SAC}^{1}$. It follows that we can enumerate all structure trees with a $\textsf{SAC}$ circuit of depth $O(\log^2 m)$ and size $m^{O(\log m)}$.
\end{proof}

We will use the following result of Cameron. If $G \leq S_m$, then $G = A_m$ and $G = S_m$ are both referred to as ``giant'' subgroups; all other primitive permutation groups are called non-giant.

\begin{theorem}[{\cite[Theorem~6.1]{Cameron}}] \label{thm:Cameron}
Let $G \leq S_{m}$ be a primitive group. We have the following:
\begin{enumerate}[label=(\alph*)]
\item If $G$ is not a giant, 
and if $m$ is sufficiently large, then $|G| \leq m^{\sqrt{m}}$.
\item For every $\epsilon > 0$, there exists a $c(\epsilon)$ such that for every $m$, if $G \leq S_{m}$ and $|G| > 2^{m^{\epsilon}}$, then $G$ can be generated by $c(\epsilon)$ generators.
\end{enumerate}
\end{theorem}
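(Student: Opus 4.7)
The plan is to proceed via the O'Nan--Scott theorem, which classifies primitive permutation groups by the structure of their socle $N = \Soc(G)$. In each resulting case I would bound $|G|$ (for part (a)) and the minimum number of generators (for part (b)). The O'Nan--Scott cases are: (i) \emph{affine}, where $N$ is elementary abelian and $G \leq \mathrm{AGL}_d(\mathbb{F}_p)$ with $m = p^d$; (ii) \emph{almost simple}, where $N = T$ is non-abelian simple and $T \leq G \leq \Aut(T)$; and (iii) the product-like types (simple and compound diagonal, product action, twisted wreath), where $N = T^k$ with $k \geq 2$ isomorphic non-abelian simple factors.

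For part (a), in the affine case, $|G| \leq m \cdot |\GL_d(\mathbb{F}_p)| \leq m \cdot m^d \leq m^{1 + \log_2 m}$, which is far below $m^{\sqrt m}$. In the almost simple case, I would invoke the Classification of Finite Simple Groups together with bounds on indices of core-free maximal subgroups of almost simple groups to conclude that, away from the giants (acting as $A_m$ or $S_m$ on $m$ points), $|G| \leq m^{O(\log m)}$. In the product/diagonal/wreath types, the action is on a power $\Delta^k$ (or a closely related construction), so $m \geq |\Delta|^k \geq 5^k$ (using that non-abelian simple groups have minimal faithful primitive degree at least $5$) and $|G| \leq k! \cdot |\Aut(T)|^k$; combining the almost simple bound for $T$ acting on $\Delta$ with $m \geq |\Delta|^k$ then yields $|G| \leq m^{\sqrt m}$ comfortably.

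For part (b), I would use the fact---a consequence of CFSG due to Aschbacher and Guralnick---that every finite non-abelian simple group is $2$-generated. In the affine case, $|G| > 2^{m^\epsilon}$ forces $d$ to grow, and $\GL_d(\mathbb{F}_p)$ is generated by a bounded number of elements. In the almost simple case, $G$ is sandwiched between $T$ and $\Aut(T)$, and $|\mathrm{Out}(T)|$ is bounded with bounded generation. In the product-like cases, the lower bound on $|G|$ together with $m \geq |\Delta|^k$ forces $k \leq c(\epsilon)$; then $G$ is generated by the $O(1)$ generators of $T$ together with at most $2$ generators for the $S_k$ permuting the factors, giving a bound depending only on $\epsilon$.

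The main obstacle is case (ii) of part (a), the almost simple case, which genuinely requires the full strength of CFSG and subsequent work (e.g.\ Liebeck's refinements) on indices of maximal subgroups of almost simple groups. Indeed, Cameron's bound is essentially sharp for the action of $S_r$ on $k$-subsets of $[r]$, where $m = \binom{r}{k}$ can be much smaller than $r!$, so any proof has to explicitly rule out such near-giant behavior outside of the giant cases. Part (b) also ultimately rests on the $2$-generation of simple groups, itself a CFSG-based result, so the entire theorem is contingent on the classification.
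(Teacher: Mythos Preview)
The paper does not prove this theorem; it is quoted from Cameron \cite[Theorem~6.1]{Cameron} and used as a black box, so there is no proof in the paper to compare against.

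Your outline via the O'Nan--Scott theorem is indeed the standard route to Cameron's bound and is correct in spirit. Two places deserve tightening. In part~(b), affine case, saying ``$\GL_d(\F_p)$ is boundedly generated'' does not by itself bound the number of generators of the point stabilizer $G_0 \leq \GL_d(\F_p)$, which need not be all of $\GL_d$. The cleaner observation is that your own bound $|G| \leq m^{1+\log_2 m}$ from part~(a) makes the hypothesis $|G| > 2^{m^{\epsilon}}$ vacuous for large $m$, so only finitely many affine cases remain. In the product-like cases, ``$O(1)$ generators of $T$ plus $2$ generators for $S_k$'' generates the full wreath product, but $G$ is only a subgroup of it containing $\Soc(G) = T^k$; bounded generation of $G$ itself does not follow immediately. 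Cameron's actual argument uses his sharper structure theorem---that any primitive $G$ with $|G| > m^{c \log \log m}$ is sandwiched between $(A_r)^k$ and $S_r \wr S_k$ in product action on $k$-tuples of $\ell$-subsets of $[r]$---from which bounded generation is then direct once $k$ is bounded in terms of $\epsilon$.
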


When $m \in O(\log n)$, we have that $m^{\sqrt{m}} \in \poly(n)$. We use this to parallelize \textit{small} instances of \cite[Lemma~1.2.6]{CodenottiThesis}. We begin with the following helper lemma.

\begin{lemma} \label{lem:ListPrimitive}
Let $m \in O(\log n)$, and let $G \leq S_m$. We can list the elements of $G$ with an $\textsf{AC}$ circuit of depth $O((\log |G|) \cdot (\log \log n))$ and size $|G|\poly(n)$. If furthermore $G$ is primitive and not a giant, then its elements can be listed by a circuit of size $\poly(n)$ and depth $O(\sqrt{\log n} (\log \log n)^2)$.
\end{lemma}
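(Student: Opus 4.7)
I would invoke \Lem{PermutationGroupsNC}(f) to compute, in $\FOPLL$, a strong generating set for $G$ together with a subgroup chain $G = G_0 \geq G_1 \geq \cdots \geq G_r = 1$ of length $r \in O(\log n)$ and transversals $U_i$ for $G_i$ in $G_{i-1}$. By the standard Schreier factorization, every element $g \in G$ has a unique representation $g = u_1 u_2 \cdots u_r$ with $u_i \in U_i$, and $|G| = \prod_i |U_i|$. To enumerate $G$, I would list all $|G|$ such tuples $(u_1, \ldots, u_r) \in U_1 \times \cdots \times U_r$ in parallel and, for each, compute the corresponding product via a balanced binary tree of permutation multiplications. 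Since each $u_i \in S_m$ is encoded in $O(m \log m) = O(\log n)$ bits and the composition of two permutations in $S_m$ is $\ACz$-computable (for each $j \in [m]$, look up $s(j)$ and then $u(s(j))$), the tree has depth $O(\log r) = O(\log\log n)$ and total size $|G|\cdot\poly(\log n)$. Combined with the $(\log\log n)^{O(1)}$ depth and $\poly(n)$ size from \Lem{PermutationGroupsNC}(f), this yields the stated depth $O((\log|G|)(\log\log n))$---bounding the $\FOPLL$ setup term $(\log\log n)^{O(1)}$ by $(\log|G|)(\log\log n)$ for non-trivial $G$, with trivial $G$ handled by inspection---and size $|G|\cdot\poly(n)$.

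\textbf{Part (2).} For primitive, non-giant $G \leq S_m$, Cameron's bound (\Thm{thm:Cameron}(a)) gives $|G| \leq m^{\sqrt m}$ once $m$ is above an absolute constant, so $\log|G| = O(\sqrt{m}\log m) = O(\sqrt{\log n}\cdot \log\log n)$. Substituting into part~(1) immediately yields depth $O(\sqrt{\log n}\,(\log\log n)^2)$. For size, $|G| \leq 2^{O(\sqrt{\log n}\log\log n)} = n^{o(1)}$ since $\sqrt{\log n}\log\log n = o(\log n)$, so $|G|\cdot\poly(n) = \poly(n)$. For the finitely many values of $m$ below Cameron's threshold, $|G| \leq m!$ is an absolute constant, and both bounds hold trivially.

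\textbf{Main obstacle.} The only real care required is cleanly accounting for the $\FOPLL$ setup cost against the tree-multiplication depth and handling the small-$m$ regime of Cameron's bound; I do not foresee a deeper obstacle.
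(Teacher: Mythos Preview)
Your proof is correct but takes a genuinely different route from the paper's. The paper uses Wolf's iterated-squaring procedure: starting from $T_0 = S \cup \{1\}$, it forms $T_{i+1} = T_i \cdot T_i$, which stabilizes at $G$ after $O(\log|G|)$ rounds; each round multiplies all pairs, and since permutation multiplication on a domain of size $m \in O(\log n)$ sits in $\mathsf{DSPACE}(O(\log\log n)) \subseteq \FOLL$, this yields depth $O((\log|G|)(\log\log n))$ directly. Your approach instead uses the strong-generating-set data from \Lem{PermutationGroupsNC}(f) to write every group element uniquely as a product of at most $r \leq m$ transversal elements, then evaluates all $|G|$ such products in parallel via balanced binary trees of depth $O(\log m) = O(\log\log n)$.

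Your method actually achieves depth $(\log\log n)^{O(1)}$, which is \emph{stronger} than the lemma's stated bound whenever $\log|G|$ exceeds any fixed power of $\log\log n$; the awkwardness you flag in absorbing the $\FOPLL$ setup term into $O((\log|G|)(\log\log n))$ for small $|G|$ is real, but it is equally present in the paper's proof (which also begins with a $\FOPLL$ order computation) and is harmless in every downstream application, where an additive $(\log\log n)^{O(1)}$ is always explicitly carried or dominated. Two small notes: your claim that permutation composition in $S_m$ is $\ACz$ (relative to input size $n$) is correct and in fact tighter than the $\FOLL$ bound the paper invokes; and $m\log m = O(\log n \cdot \log\log n)$ rather than $O(\log n)$, though this does not affect your argument. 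Part~(2) is identical to the paper's derivation from \Thm{thm:Cameron}.
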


In principle, a group $G \leq S_m$ could have order as large as $m!$, which is asymptotically $n^{\Theta(\log \log n)}$ if $m \in \Theta(\log n)$. However, we will only need to apply this lemma to $G \leq S_m$ that are primitive and not giant, so by Theorem~\ref{thm:Cameron}(1), such a group has size at most $m^{\sqrt{m}} \leq \poly(n)$.

\begin{proof}
We begin by computing $|G|$ in $\FOPLL$ (Lemma~\ref{PermutationGroupsNC}). Now let $S$ be the set of generators we are given for $G$. Wolf \cite{Wolf} showed that we can write down $\langle S \rangle$ in $O(\log |G|)$ rounds. We recall Wolf's procedure here to show how we will use it for the result. Let $T_{0} = S \cup \{1\}$. Now for $i \geq 0$, we construct $T_{i+1}$ by computing $xy$, for each $x, y \in T_{i}$; by a straightforward induction, $T_i$ consists of all group elements expressible as words in $S$ of length at most $2^i$. Our procedure terminates when $|T_{i}| = |G|$. 

We note that Wolf is considering quasigroups given by their multiplication tables, in which case we can multiply two elements in $\textsf{AC}^{0}$. However, we are dealing with permutation groups, where multiplication is $\textsf{DSPACE}(\log m)$-computable \cite{COOK1987385}. As $m \in O(\log n)$, we can multiply two permutations in $\mathsf{DSPACE}(O(\log \log n)) \subseteq \textsf{FOLL}$. Thus, given $T_{i}$, we can compute $T_{i+1}$ in $\FOLL$. The first part of the result follows.

To see the furthermore, now suppose that $G$ is primitive and not $S_m$ nor $A_m$. We have by Theorem~\ref{thm:Cameron} that $|G| \leq m^{\sqrt{m}} = (\log n)^{O(\sqrt{\log n})}$.  Note that $O(\log |G|) \subseteq O( \sqrt{\log n} \cdot (\log \log n))$. Thus, when $G$ is a non-giant primitive our total circuit depth becomes $O(\sqrt{\log n} \cdot (\log \log n)^2)$ and the size is $|G|\poly(n) \leq 2^{O(\sqrt{\log n} \log \log n)}\poly(n) \leq 2^{O(\log n)}\poly(n) = \poly(n)$.
\end{proof}

A \emph{permutational automorphism} of $G \leq S_m$ is a permutational isomorphism $G \to G$. We denote the set (actually, group) of permutational automorphisms of $G$ by $\text{PAut}(G)$.

\begin{lemma}[cf. {\cite[Lemma~1]{BCQ}, \cite[Lemma~1.2.6]{CodenottiThesis}}] \label{lem:PrimitivePIso}
Let $m \in O(\log n)$. Suppose $G \leq S_{m}$ is a primitive group other than $S_{m}$ or $A_{m}$. Then:
\begin{enumerate}[label=(\alph*)]
\item $|\text{PAut}(G)| \leq \exp(\tilde O(\sqrt{m})) \leq \poly(n)$, and
\item For every $H \leq S_{m}$, we can list $\text{PISO}(G, H)$ in $\mathsf{AC}^1$.
\end{enumerate}
\end{lemma}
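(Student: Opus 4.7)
For part (a), my plan is to identify $\text{PAut}(G)$ with the normalizer $N := N_{S_m}(G)$ and argue that $N$ itself is a primitive non-giant subgroup of $S_m$, so Theorem~\ref{thm:Cameron}(a) applies. Primitivity of $N$ will be immediate: any $N$-invariant partition of $[m]$ is automatically $G$-invariant (since $G \leq N$) and hence trivial. To rule out $N \in \{S_m, A_m\}$, I will invoke that for $m \geq 5$ the nontrivial normal subgroups of $S_m$ and $A_m$ are (using simplicity of $A_m$) only $A_m$ and $S_m$ themselves, which are excluded by hypothesis; small $m$ is handled trivially. Cameron then yields $|N| \leq m^{\sqrt{m}} = \exp(\tilde O(\sqrt{m}))$, and with $m \in O(\log n)$ this becomes $n^{o(1)} \leq \poly(n)$.

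For part (b), the plan is to enumerate $\text{PISO}(G,H)$ via abstract isomorphisms: every $\varphi \in \text{PISO}(G,H)$ induces an abstract isomorphism by conjugation, so I will first enumerate $\text{Iso}(G,H)$ and then apply Proposition~\ref{prop:ListPermIsoTransitive} to lift each abstract isomorphism to the (at most $m$) permutational isomorphisms realizing it. Proposition~\ref{prop:ListPermIsoTransitive} costs $\NL$ in the permutation domain, which for $m \in O(\log n)$ sits inside $\NSPACE(O(\log \log n)) \subseteq \FOLL$, so the lifting phase is essentially free; the actual work is enumerating $\text{Iso}(G,H)$ in $\mathsf{AC}^1$.

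To enumerate $\text{Iso}(G,H)$, the first step is to extract a small generating set $S = \{g_1, \ldots, g_k\} \subseteq G$ via a case split on $|G|$: Theorem~\ref{thm:Cameron}(b) yields $k = c(1/3) = O(1)$ when $|G| > 2^{m^{1/3}}$, while in the complementary regime $|G| \leq 2^{m^{1/3}}$ a greedy base gives $k = O(m^{1/3})$. In both cases $|H|^k \leq m^{\sqrt{m} \cdot k} \leq \poly(n)$ by Cameron. A valid $S$ can be located by testing all $\poly(n)$ subsets of the appropriate size against Lemma~\ref{PermutationGroupsNC}(a)--(b). I will then try all $|H|^k$ candidate assignments $\phi \colon g_i \mapsto h_i$ in parallel. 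For each, I will list the elements of $G$ via Lemma~\ref{lem:ListPrimitive}, extracting from Wolf's doubling a balanced depth-$O(\log|G|)$ expression $w_g$ in $S$ for every $g \in G$; evaluate $\tilde{\phi}(g) := w_g(h_1, \ldots, h_k)$ as a balanced binary tree of permutation multiplications on $[m]$, each multiplication being $\FOPLL$-computable since $m = O(\log n)$; and verify in parallel over all $(g,g') \in G \times G$ that $\tilde{\phi}(gg') = \tilde{\phi}(g)\tilde{\phi}(g')$ and that $\tilde{\phi}$ is bijective. Taking the union of the permutational lifts of every surviving $\phi$ then produces $\text{PISO}(G,H)$.

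The hard part will be keeping every intermediate quantity polynomial in $n$; the naive bound $k = \poly(m)$ from Lemma~\ref{PermutationGroupsNC}(g) would already give $|H|^k = n^{\omega(1)}$, so the case split on $|G|$ above is essential. A small side check is that $H$ must be transitive (else $\text{PISO}(G,H) = \emptyset$ since $G$ has a single orbit) and that $G \cong H$ as abstract groups (else $\text{Iso}(G,H) = \emptyset$); both are routine in $\FOPLL$ via Lemma~\ref{PermutationGroupsNC} and the verification step above, respectively, so the algorithm simply outputs $\emptyset$ in those cases.
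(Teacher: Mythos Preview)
Your proposal is correct and follows essentially the same architecture as the paper: enumerate $\text{Iso}(G,H)$ via generator enumeration with a case split on $|G|$ governed by Theorem~\ref{thm:Cameron}(b), then lift each abstract isomorphism to permutational isomorphisms via Proposition~\ref{prop:ListPermIsoTransitive} (which becomes $\FOLL$ on a domain of size $O(\log n)$). The only noteworthy differences are cosmetic: for (a) you give a self-contained normalizer argument where the paper simply cites \cite{BCQ,CodenottiThesis}; for (b) you use threshold $\epsilon=1/3$ rather than the paper's $\epsilon=1/2$ (the latter makes the arithmetic slightly cleaner, since in the small regime $|G|^{\log|G|}\le 2^m$ directly), and you spell out the isomorphism check via Wolf's doubling and balanced evaluation whereas the paper invokes Tang's $\mathsf{L}$ algorithm on the multiplication tables as a black box---both land in $\mathsf{AC}^1$ for the same reason.
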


\begin{proof}
We proceed as follows.
\begin{enumerate}[label=(\alph*)]
\item By \cite[Lemma~1]{BCQ},\cite[Lemma~1.2.6]{CodenottiThesis} and the fact that $m \in O(\log n)$, we have that $|\text{PAut}(G)| \leq \exp(\tilde{O}(\sqrt{m})) \in \poly(n)$.

\item \textbf{Step 1.} We first compute $|G|$ and $|H|$ in $\FOPLL$ (Lemma~\ref{PermutationGroupsNC}). If $|G| \neq |H|$, then $\text{PISO}(G, H) = \emptyset$. So suppose $|G| = |H|$. By Lemma~\ref{lem:ListPrimitive}, we can list $G, H$ using an $\textsf{AC}$ circuit of depth $O(\sqrt{\log n} \cdot (\log \log n)^2)$ and size $\poly(n)$. In particular, we now have $G, H$ given by their multiplication tables.

\textbf{Step 2.} Here we split into cases depending on $|G|$.  By Theorem~\ref{thm:Cameron}(b) with $\epsilon = 1/2$, there exists a constant $c$ such that if $|G| > 2^{m^{1/2}}$, then $G$ can be generated by $c$ elements. In this case, $|\text{Iso}(G, H)| \leq |G|^{c} \in\poly(n)$, and we may (using that $m \in O(\log n))$, list $\text{Iso}(G, H)$ in $\textsf{L}$ \cite{TangThesis} using the generator-enumeration strategy. 

Suppose instead that $|G| < 2^{m^{1/2}}$. In this case, we only know that $G$ might require at most $\log |G|$ many generators. Observe that $\log |G| < m^{1/2} \in O((\log n)^{1/2})$. Now as $|G| = |H|$, we have that:
\begin{align*}
|\text{Iso}(G,H)| &\leq |G|^{\log |G|} \\
&\leq (2^{m^{1/2}})^{m^{1/2}} \\
&= 2^{m} \\ 
&\in 2^{O( \log n)} 
\end{align*}

\noindent which is $\poly(n)$. Thus, we can again use generator-enumeration to list $\text{Iso}(G, H)$ in $\textsf{L}$ \cite{TangThesis}.

\textbf{Step 3.} Now for each $\phi \in \text{Iso}(G, H)$ in parallel, we use Proposition~\ref{prop:ListPermIsoTransitive} to enumerate the at most $n$ permutational isomorphisms between $G$ and $H$ that correspond to $\phi$. This step is $\textsf{NSPACE}(\log m)$-computable. As $m \in O(\log n)$, this yields a bound of $\textsf{FOLL}$ (by considering each $\phi \in \text{Iso}(G, H)$ in parallel) for listing all permutational isomorphisms.

\textbf{Calculating the depth.} Since each of the three steps can be done sequentially, their depths add, resulting in a total depth of $O(\sqrt{\log n} (\log \log n)^2 + \log n + \log \log n) \leq O(\log n)$. As each step was also polynomial size, the whole circuit is polynomial size, thus giving us $\mathsf{AC}^1$ as claimed. \qedhere
\end{enumerate}
\end{proof}

\subsection{Parallel algorithms for small-domain transitive permutation groups}

In this section, we adapt \cite[Theorem~3]{BCQ}=\cite[Theorem~5.2.1]{CodenottiThesis} to the setting where the degree of the permutation group $G$ is \textit{small} (size $O(\log n)$) and $|G| \leq n$. This is precisely the setting we will need for $\textsf{NC}$ isomorphism testing of Fitting-free groups.

\begin{theorem}[cf. {\cite[Theorem~3]{BCQ}=\cite[Theorem~5.2.1]{CodenottiThesis}}] \label{thm:5.2.1} \label{thm:TransitivePermIso}
Let $m \in O(\log n)$, and let $G, H \leq S_{m}$ be transitive. Then:  
\begin{enumerate}[label=(\alph*)]
\item $|\text{PAut}(G)| \leq |G| \cdot \poly(n)$, and 
\item $\text{PISO}(G, H)$ can be listed with an $\textsf{AC}$ circuit of depth $O( (\log n) (\log \log n)^2 )$ and size $\poly(n)$.
\end{enumerate}
\end{theorem}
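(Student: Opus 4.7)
The plan is to follow the blueprint of Babai--Codenotti--Qiao \cite{BCQ}, proceeding via structure trees and parallelizing each step using the small-domain tools already developed. Every structure tree of $G$ has depth at most $\log m = O(\log \log n)$, all structure trees can be enumerated in parallel by Lemma~\ref{lem:EnumerateStructureTrees}, and primitive permutational isomorphism is handled by Lemma~\ref{lem:PrimitivePIso}.

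For part (a), I would fix a structure tree $\mathcal{T}$ for $G$ and let $\text{PAut}_\mathcal{T}(G) \leq \text{PAut}(G)$ denote its setwise stabilizer. Since every element of $\text{PAut}(G)$ carries $\mathcal{T}$ to some structure tree of $G$, one obtains $|\text{PAut}(G)| \leq |\text{PAut}_\mathcal{T}(G)| \cdot |\{\text{structure trees of } G\}|$, and the second factor is at most $m^{O(\log m)} \leq \poly(n)$ by Lemma~\ref{lem:EnumerateStructureTrees}. A $\mathcal{T}$-preserving permutational automorphism is determined by its induced action on the children of each internal node, and that action must normalize the primitive action at that node. In every case, the index of the primitive action in its normalizer inside the symmetric group on those children is at most $\poly(n)$: at giants it is at most $2$, and at non-giants Lemma~\ref{lem:PrimitivePIso}(a) gives $\exp(\tilde O(\sqrt{k})) \leq \poly(n)$, where $k$ is the number of children of the node in question. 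Multiplying these indices across the at most $m = O(\log n)$ internal nodes yields $|\text{PAut}_\mathcal{T}(G)| \leq |G| \cdot \poly(n)$, and hence $|\text{PAut}(G)| \leq |G| \cdot \poly(n)$.

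For part (b), note that $\text{PISO}(G,H)$ is either empty or a coset of $\text{PAut}(G)$, so part (a) guarantees $|\text{PISO}(G,H)| \leq \poly(n)$ and enumeration is output-bounded. I would list all structure trees of $G$ and of $H$ in parallel via Lemma~\ref{lem:EnumerateStructureTrees} and, for every ordered pair $(\mathcal{T}_G, \mathcal{T}_H)$ of compatible shape, compute the permutational isomorphisms mapping $\mathcal{T}_G$ to $\mathcal{T}_H$ by descending layer by layer from the root: at each matched pair of internal nodes, Lemma~\ref{lem:PrimitivePIso}(b) returns the permutational isomorphisms between the two primitive actions on children, and these layer-$\ell$ candidates are then glued to the maps chosen at layer $\ell - 1$ by intersecting the induced subcosets of $S_m$ via the small-domain \algprobm{Coset Intersection} algorithm of Proposition~\ref{prop:CosetIntersection}. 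After descending all $O(\log \log n)$ layers, one unions the resulting cosets across all tree pairs and prunes to a minimal generating set via Lemma~\ref{PermutationGroupsNC}(g).

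For the complexity, structure-tree enumeration costs $O((\log \log n)^2)$ depth and $\poly(n)$ size. The dominant per-layer cost is the $O(\log n)$-depth call to Lemma~\ref{lem:PrimitivePIso}(b), together with $\poly(\log \log n)$-depth coset intersections and $\FOPLL$ generator-set pruning; serialized over the $O(\log \log n)$ layers this is $O((\log n)(\log \log n))$, and a further $O(\log \log n)$ factor arises from the binary-tree union across the $\poly(n)$ structure-tree pairs, producing the stated $O((\log n)(\log \log n)^2)$ depth at $\poly(n)$ size. The main obstacle I anticipate is keeping the intermediate cosets of $\poly(n)$ size so that the circuit does not blow up; this is precisely where part (a) is leveraged, both directly to bound the number of candidate maps at each non-giant primitive via Lemma~\ref{lem:PrimitivePIso}(a) and indirectly to justify each strong-generator pruning step of Lemma~\ref{PermutationGroupsNC}(g).
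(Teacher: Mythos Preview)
Your proposal has the right scaffolding (structure trees, layer-by-layer descent, Lemma~\ref{lem:EnumerateStructureTrees}), but there is a genuine gap: you rely entirely on Lemma~\ref{lem:PrimitivePIso}(b) to list the permutational isomorphisms at each node, yet that lemma explicitly excludes the giant case $T \in \{A_k, S_k\}$. When the primitive action on a level is a giant, the set of local permutational isomorphisms on the $k$ children is all of $S_k$, and with $\ell$ conjugate nodes at that level the naive number of extensions is $(k!)^{\ell}$; for $k\ell = m$ this is $2^{\Theta(m \log k)}$, which is $n^{\Theta(\log\log n)}$ in the worst case and cannot be enumerated in $\poly(n)$ size. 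Invoking \algprobm{Coset Intersection} does not rescue this without further argument, because you have not exhibited the candidate set as a subcoset of a small group. The paper does not use \algprobm{Coset Intersection} here at all; instead Theorem~\ref{thm:5.2.7} reduces to Lemma~\ref{lem:5.2.6}, whose Case~2 analyzes the kernel $K$ of the block action: if $K(i)=\{1\}$ then Lemma~\ref{lem:5.2.5} gives at most one extension of each $\pi$, and otherwise $[K,K]$ is a subdirect product of $A_k^{\ell}$ and Lemma~\ref{lem:5.2.2} bounds and lists the extensions by $2^{\ell}\,|K| \leq |G|\cdot\poly(n)$. This kernel-based treatment of giants is the missing idea.

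Your part~(a) accounting also does not close: bounding each node's contribution by ``at most $\poly(n)$'' and multiplying over $O(\log n)$ internal nodes yields $n^{O(\log n)}$, not $\poly(n)$. The correct argument (which the paper simply cites from \cite{BCQ}) bounds the contribution of a node with $k$ children by $c^{k}$ for an absolute constant $c$ (using $\exp(\tilde O(\sqrt{k})) \leq c^{k}$ at non-giants and $\leq 2$ at giants), so the product over a single layer is $c^{\sum_i k_i}$, which telescopes to $c^{O(m)} = \poly(n)$ over the $O(\log m)$ layers; combined with $|G|$ for the inner automorphisms this gives $|G|\cdot c^{O(m)}$.
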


To prove Theorem~\ref{thm:TransitivePermIso}, we follow the strategy in \cite[Section~4.2]{BCQ} and \cite[Section~5.2]{CodenottiThesis}. Let $T$ be a structure tree of $G$, and let $U$ be a structure tree of $H$. We say that $T$ and $U$ are \textit{compatible} if their depth is the same; and for every $\ell$, the primitive groups arising on level $\ell$ in $G$ and $H$ (as actions of the stabilizers of a node on level $\ell$ on the children of that node) are permutationally isomorphic. For non-giant primitive groups, we use Lemma~\ref{lem:PrimitivePIso} to bound the number of permutational isomorphisms and list them in parallel. For giant primitive groups, we will need more careful structural analysis.

At the inductive step, we will consider the following situation, which corresponds to one layer of the structure tree. Let $G \leq \text{Sym}(\Omega), H \leq \text{Sym}(\Delta)$ with $m := |\Omega| = |\Delta| \in O(\log n)$. Let $\{ \Omega_1, \ldots, \Omega_{\ell}\}$ be a system of imprimitivity for $G$, and let $K$ be the kernel of the action on the set of blocks. Let $G^{*}$ be the action on the set of blocks. Similarly, let $\{ \Delta_1, \ldots, \Delta_{\ell} \}$ be a system of imprimitivity for $H$, let $L$ be the kernel of the $H$-action on the set of blocks, and let $L^{*}$ be the $H$-action on the set of blocks. Note that $K$ is the kernel of the restriction homomorphism $G \to G^{*}$. Throughout this section, we will use $k$ to be the size of the blocks $\Omega_{i}$ and $\Delta_{j}$. That is, for all $i, j$, we have $k = |\Omega_{i}| = |\Delta_{j}|$.

We say that $\phi \in \text{PISO}(G, H)$ \textit{extends} $\pi \in \text{PISO}(G^{*}, H^{*})$ if $\phi$ acts as $\pi$ on the set of blocks; that is, if $\phi(\Omega_i) = \Delta_{i^{\pi}}$ for all $i \in [\ell]$. Let:
\[
\text{PISO}(G, H; \pi) = \{ \phi \in \text{PISO}(G, H) \mid \phi \text{ extends } \pi \}.
\]

Let $G(i) := G_{\{\Omega_i\}}^{\Omega_{i}}$ and $K(i) := K^{\Omega_{i}}$ be the restrictions of the groups $G$ and $K$ respectively to the block $\Omega_i$. Define $H(i)$ and $L(i)$ analogously.

For the case when $K(i)$ is a giant, Babai, Codenotti, and Qiao \cite[Section~4.2]{BCQ} used \cite[Lemma~5.2.2]{CodenottiThesis} to bound the number of extensions and show that they can be listed efficiently. We adapt this lemma as follows.

To state the lemma and for its proof, we need a few additional preliminaries. Given groups $G_1, \dotsc, G_\ell$, a \emph{subdirect product} is a subgroup $G \leq G_1 \times \dotsb \times G_\ell$ such that the projection of $G$ onto the $i$-th factor $\pi_i \colon G \to G_i$ is surjective for all $i$. If $P_i \leq \text{Sym}(\Omega_i)$ are permutation groups that are each permutationally isomorphic to some group $T \leq \text{Sym}(\Omega)$, a \emph{permutation diagonal} \cite[Section~4.1]{BCQ} \cite[Def.~1.2.13]{CodenottiThesis} is a list of permutational isomorphisms $\varphi_i \in \text{PISO}(T, P_i)$. Then $f_{ij} := \varphi_i^{-1} \varphi_j \in \text{PISO}(P_i, P_j)$ are referred to as the bijections defining this diagonal.

\begin{lemma}[cf. {\cite[Section~4.2]{BCQ},\cite[Lemma~5.2.2]{CodenottiThesis}}] \label{lem:5.2.2}
Let $m = |\Omega|=|\Delta| \in O(\log n)$, and  $\{ \Omega_{1}, \ldots, \Omega_{\ell} \}$ and $\{ \Delta_{1}, \ldots, \Delta_{\ell}\}$, and $k=|\Omega_i|=|\Delta_j|$ as defined above. Suppose that $k \geq 5$, with $k \neq 6$. Suppose that $K \leq \prod_{i=1}^{\ell} \text{Alt}(\Omega_i)$ and $L \leq \prod_{i=1}^{\ell} \text{Alt}(\Delta_i)$ are subdirect products. Given $\pi \in S_{\ell}$, we can list all $\leq 2^\ell \cdot |K|$ permutational isomorphisms $\phi \in \text{PISO}(K, L; \pi)$ by uniform $\cc{AC}$ circuits of depth $O(\log |K| \log \log n + (\log \log n)^{O(1)})$ and size $|K|\poly(n)$.
\end{lemma}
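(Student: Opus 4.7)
The strategy is to exploit the structure theorem for subdirect products of powers of a non-abelian finite simple group. Since $k \geq 5$ and $k \neq 6$, each $\text{Alt}(\Omega_{i})$ is simple and $\text{Aut}(\text{Alt}(k)) = S_k$, realized by conjugation in $S_k$. Consequently, the standard Goursat-style description shows that $K \leq \prod_{i=1}^{\ell} \text{Alt}(\Omega_{i})$ is determined by a partition $\mathcal{P}_K$ of $[\ell]$ together with, for each class $C \in \mathcal{P}_K$, a collection of linking bijections $\sigma^{K}_{i,j}\colon \Omega_i \to \Omega_j$ $(i,j \in C)$ such that $K \cap \prod_{i \in C} \text{Alt}(\Omega_i)$ is exactly the full diagonal defined by the $\sigma^{K}_{i,j}$. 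In particular $|K| = (k!/2)^{|\mathcal{P}_K|}$, and $L$ admits a parallel description $(\mathcal{P}_L, \sigma^L)$.

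First, I would compute $\mathcal{P}_K$ and the linkings $\sigma^K$ in $\FOPLL$. Two indices $i,j$ lie in the same class of $\mathcal{P}_K$ iff the projection of $K$ onto $\text{Alt}(\Omega_i) \times \text{Alt}(\Omega_j)$ is a proper subgroup (necessarily a diagonal), which can be decided by comparing orders via Lemma~\ref{PermutationGroupsNC}(a); the linking bijection is then read off from how a non-identity element of that projection acts on the two blocks. All $\binom{\ell}{2}$ pairs are treated in parallel, and the resulting equivalence classes are computed by a transitive-closure step that fits in $\FOPLL$ because $\ell \leq m \in O(\log n)$. I would carry out the analogous computation for $L$. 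A necessary condition for $\text{PISO}(K,L;\pi)$ to be non-empty is that $\pi$ carries $\mathcal{P}_K$ bijectively onto $\mathcal{P}_L$ with matching class sizes; this too is checked in $\FOPLL$.

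The main structural step is the enumeration of candidates. Fix a transversal $\{i_C : C \in \mathcal{P}_K\}$. For every choice of seed bijections $\phi_{i_C}\colon \Omega_{i_C} \to \Delta_{\pi(i_C)}$ (one per class), the requirement that $\phi$ carry $K$ to $L$ forces, for every $i \in C$, the bijection $\phi_i$ to be the unique map satisfying $\phi_i \circ \sigma^K_{i_C,i} = \sigma^L_{\pi(i_C),\pi(i)} \circ \phi_{i_C}$; and by the hypothesis $\text{Aut}(\text{Alt}(k)) = S_k$, every one of the $k!$ possible seeds $\phi_{i_C}$ conjugates $\text{Alt}(\Omega_{i_C})$ to $\text{Alt}(\Delta_{\pi(i_C)})$. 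Hence $|\text{PISO}(K,L;\pi)| = (k!)^{|\mathcal{P}_K|} = 2^{|\mathcal{P}_K|}\,|K| \leq 2^\ell |K|$, and the complete list of candidates is produced in parallel: each candidate is indexed by a seed tuple $(\phi_{i_C})$, the within-class extensions are $\FOPLL$ computations, and the classes combine independently.

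To certify correctness of each candidate and to meet the stated depth, I would list all elements of $K$ by Wolf's doubling procedure, as in the proof of Lemma~\ref{lem:ListPrimitive}: $O(\log|K|)$ rounds, each a permutation product on a domain of size $m \in O(\log n)$ computable in $\FOLL$, for a total depth of $O(\log|K|\cdot\log\log n)$ and size $|K|\poly(n)$. For each candidate $\phi$ and each $g \in K$ in parallel, Lemma~\ref{PermutationGroupsNC}(b) decides whether $\phi^{-1} g \phi \in L$ in $\FOPLL$; an unbounded-fan-in $\textsf{AND}$ aggregates these checks. Combining the phases gives depth $O(\log|K|\log\log n + (\log\log n)^{O(1)})$ and size $|K|\poly(n)$, using $2^\ell \leq 2^m \leq \poly(n)$ to absorb the factor counting seed choices. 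The main obstacle is Step 1: realizing the Goursat-style decomposition and the corresponding linkings within the $\FOPLL$ budget, since every subsequent step reduces to an invocation of the small-domain parallel primitives already collected in Lemma~\ref{PermutationGroupsNC} and Lemma~\ref{lem:ListPrimitive}.
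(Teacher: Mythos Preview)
Your proposal is correct and follows essentially the same approach as the paper: both decompose $K$ and $L$ into direct products of permutation diagonals, enumerate seed bijections on a transversal of the diagonal classes, extend each seed to a full candidate $\phi$ via the linking formula $\phi_j = g_{\pi(i),\pi(j)}\phi_i f_{i,j}^{-1}$, and then verify. The only differences are implementation details---the paper obtains the diagonal decomposition by first listing all of $K$ and $L$ (Lemma~\ref{lem:ListPrimitive}) and then invoking a $\mathsf{DSPACE}(\log|K|)$ direct-product decomposition on the resulting Cayley tables, whereas you compute $\mathcal{P}_K$ by pairwise projection-order comparisons in $\FOPLL$; and the paper verifies each candidate via Lemma~\ref{PermutationGroupsNC}(h) on generators, while you list $K$ and check $\phi^{-1}g\phi \in L$ elementwise---but both routes arrive at the stated depth and size.
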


\begin{proof}
As mentioned in \cite[Section~4.2]{BCQ}, and proved in full in \cite[Lemma~5.2.2]{CodenottiThesis}, we have that $|\text{PISO}(K, L; \pi)| \leq 2^\ell \cdot |K|$. We start by computing, in $\FOPLL$ (Lemma~\ref{PermutationGroupsNC}), $|K|$ and $|L|$. For if $|K| \neq |L|$, then $\text{PISO}(K,L) = \emptyset$ and we are done. So suppose $|K| = |L|$. We can (by Lemma~\ref{lem:ListPrimitive}) write down $K$ and $L$ with an $\textsf{AC}$ circuit of depth $O((\log |K|)(\log \log n))$ and size $|K|\poly(n)$.

Now, by \cite[Fact~1.2.15]{CodenottiThesis}, $K$ and $L$ are direct products of permutation diagonals. As abstract groups, we have that $K$ and $L$ are isomorphic to direct products of alternating groups $A_n$ with $n \geq 5, n \neq 6$. Since such alternating groups are simple, the direct product decomposition of $K$ (resp., $L$) is unique. For such groups, in $\textsf{DSPACE}(\log |K|)$ we can compute the direct product decomposition \cite[Lem.~6.11 in the arXiv version]{GrochowLevetWL}. Each abstract direct factor of $K$ corresponds to a permutation diagonal direct factor of $K$ as a permutation group. Thus in $\textsf{DSPACE}(\log|K|)$ we have decomposed $K$ (resp., $L$) into a direct product of permutation diagonals. In $\textsf{DSPACE}(\log n)$, we may also check that the decomposition of $K$ and $L$ have the same numerical parameters: for each $i$, the number of direct factors of $K$ with $i$ orbits is the same as that in $L$, for otherwise $\text{PISO}(K,L) = \emptyset$ and we are done. $\cc{DSPACE}(O(\log|K|))$ can be simulated by $\cc{AC}$ circuits of depth $O(\log |K|)$ and size $O(|K|)$.

Let $\sim_{K}$ be the equivalence relation on $[\ell]$ corresponding to the partition into permutation diagonals of $K$, and define $\sim_{L}$ analogously for $L$. For $i \sim_{K} j$, let $f_{ij} : \Omega_i \to \Omega_j$ defining the permutational diagonal; and for $i \sim_{L} j$, define $g_{ij} : \Delta_i \to \Delta_j$ analogously. Now let $R \subseteq [\ell]$ be a set of representatives of the $\sim_K$ equivalence classes. Any list of bijections $(\phi_{i} : \Omega_{i} \to \Delta_{\pi(i)})_{i \in R}$ uniquely defines the remaining $\phi_{j}$, and therefore all of $\phi$, using the formula
\[
\phi_{j} = g_{\pi(i), \pi(j)} \phi_{i} f_{i,j}^{-1}
\]
 (see the proof of \cite[Lemma~5.2.2]{CodenottiThesis}).
Once we have decomposed $K$ (resp., $L$) into a direct product of permutational diagonals, we can enumerate all possible lists $(\phi_i)_{i \in R}$ in parallel with an $\textsf{AC}$ circuit of constant depth. As there are $2^{\ell} \cdot |K|$ such lists \cite[Lemma~5.2.2]{CodenottiThesis}, and each list takes $|R| \cdot \log(k!) \leq \ell k \log k \leq m \log k$ bits to write down, our circuit has size at most $m \log(k) 2^\ell |K| \leq m \log(m) 2^m |K|$. As $m \in O(\log n)$, the size of our circuit is $|K|\poly(n)$.

Now given such a list $(\phi_i)_{i \in R}$, we can construct the remaining $\phi_{j}$ in $\textsf{L}$ using the formula for $\phi_j$ above. Since $\phi \colon \Omega \to \Delta$ is just the concatenation of the $\phi_j$, we can write down $\phi$ using additional constant depth. By Lemma~\ref{PermutationGroupsNC}, we can test whether $\phi$ is a permutational isomorphism in $\FOPLL$. Since the depths of the various parts of this algorithm add, we get total depth $O(\log |K| \log \log n + \log |K| + c + (\log \log n)^c) \leq O(\log |K| \log \log n + \poly(\log \log n))$. The result now follows.
\end{proof}

We now recall the following results, which we will use when $G(i)$ is a giant, but $K(i)$ is trivial.

\begin{observation}[{\cite[Observation~5.2.3]{CodenottiThesis}}] \label{obs:5.2.3}
If $G$ acts transitively on the blocks $\Omega_{1}, \ldots, \Omega_{\ell}$, and there exists some $i$ such that $K(i) = \{1\}$, then $K = \{1\}$. 
\end{observation}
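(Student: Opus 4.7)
The key structural fact to exploit is that $K$, being the kernel of the action $G \to \text{Sym}(\{\Omega_1,\dots,\Omega_\ell\})$ on the block system, is normal in $G$. This lets us transport the triviality of the $\Omega_i$-restriction to all other blocks via conjugation, using the transitivity of $G$ on the blocks.

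Concretely, the plan is as follows. First, I would fix an arbitrary $k \in K$ and an arbitrary block $\Omega_j$, and aim to show that $k$ acts trivially on $\Omega_j$. By the assumed transitivity of $G$ on $\{\Omega_1,\dots,\Omega_\ell\}$, choose $g \in G$ with $g(\Omega_i) = \Omega_j$. Then for any $x \in \Omega_j$ write $x = g(y)$ with $y \in \Omega_i$, and compute
\[
k(x) = k(g(y)) = g\bigl((g^{-1} k g)(y)\bigr).
\]
Since $K \trianglelefteq G$, the element $g^{-1} k g$ lies in $K$; since $y \in \Omega_i$ and $K(i) = \{1\}$ by hypothesis, we have $(g^{-1} k g)(y) = y$. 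Hence $k(x) = g(y) = x$. As $j$ and $x \in \Omega_j$ were arbitrary, $k$ acts trivially on $\Omega = \bigsqcup_j \Omega_j$, so $k = 1$, and therefore $K = \{1\}$.

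There is no real obstacle here: the argument is a one-line application of normality plus transitivity, and no subtle case analysis or further structural results are required. The only thing worth being careful about is the orientation convention (whether one writes $\Omega_i^g$ or $g(\Omega_i)$, and whether $K$ normal means $g^{-1}Kg = K$ or $gKg^{-1} = K$), since the blocks are permuted by $G$ and we are passing between a pointwise claim on $\Omega_i$ and a pointwise claim on $\Omega_j$ via conjugation; but both conventions yield the same three-line computation above.
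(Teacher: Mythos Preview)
Your proof is correct and is exactly the standard argument: normality of $K$ in $G$ together with transitivity of $G$ on the blocks lets you conjugate the trivial action on $\Omega_i$ to a trivial action on every $\Omega_j$. The paper itself does not prove this observation but simply cites it from \cite{CodenottiThesis}, and your argument is the natural one-line proof behind that citation.
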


\begin{lemma}[{\cite[Lemma~5.2.4]{CodenottiThesis}}] \label{lem:5.2.4}
If the system of imprimitivity $\{\Omega_1, \ldots, \Omega_{\ell}\}$ is maximal and two distinct points $x \neq y$ belong to the same block $\Omega(i)$, and $G(i) \neq \{1\}$, then $\text{Stab}_{G}(x) \neq \text{Stab}_{G}(y)$. 
\end{lemma}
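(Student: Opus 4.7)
The plan is to argue by contradiction: suppose $\text{Stab}_G(x) = \text{Stab}_G(y)$ for distinct $x, y \in \Omega(i)$, and derive a contradiction with maximality. The first step is to descend from $G$ to the faithful quotient $G(i)$. Since $x \in \Omega(i)$, every element of $\text{Stab}_G(x)$ preserves $\Omega(i)$ setwise, so $\text{Stab}_G(x) \leq G_{\{\Omega(i)\}}$, and likewise for $y$. The restriction $\rho \colon G_{\{\Omega(i)\}} \twoheadrightarrow G(i)$ satisfies $\rho^{-1}(\text{Stab}_{G(i)}(u)) = \text{Stab}_G(u)$ for each $u \in \{x, y\}$, so the assumption translates directly to $\text{Stab}_{G(i)}(x) = \text{Stab}_{G(i)}(y)$ in the faithful $G(i)$-action on $\Omega(i)$.

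Next I would introduce the $G(i)$-invariant equivalence relation $\sim$ on $\Omega(i)$ defined by $u \sim v$ iff $\text{Stab}_{G(i)}(u) = \text{Stab}_{G(i)}(v)$; conjugation sends $\text{Stab}_{G(i)}(u)$ to $\text{Stab}_{G(i)}(gu)$, so $\sim$ is $G(i)$-invariant, and its equivalence classes form a block system for the action of $G(i)$ on $\Omega(i)$. The assumption $x \sim y$ with $x \neq y$ forces at least one non-singleton class. If some class is a proper nonempty subset of $\Omega(i)$, then transporting this block structure through the transitive $G$-action on $\{\Omega_1, \ldots, \Omega_\ell\}$ yields a matching block system on each $\Omega_j$, and together they form a $G$-invariant system on $\Omega$ strictly refining $\{\Omega_1, \ldots, \Omega_\ell\}$, which contradicts maximality.

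The remaining subcase is that $\sim$ is indiscrete on $\Omega(i)$: every point of $\Omega(i)$ has the same $G(i)$-stabilizer, which is normal in $G(i)$ and fixes $\Omega(i)$ pointwise, hence trivial by faithfulness; so $G(i)$ would act regularly on $\Omega(i)$. This is where the hard part lies. Under a regular action, $G(i)$-invariant block systems on $\Omega(i)$ correspond bijectively to subgroups of $G(i)$ via cosets, so combining $G(i) \neq \{1\}$ with maximality forces $G(i)$ to have no proper nontrivial subgroup, i.e.\ $G(i)$ is cyclic of prime order acting regularly. Ruling out this primitive-regular configuration is the main technical obstacle; I expect to handle it by invoking the specific structure-tree convention under which ``maximal'' is defined in the thesis (possibly after a slight strengthening of $G(i) \neq \{1\}$ implicit in the surrounding context, or a direct appeal to the regularity of $\sim$'s extension to $\Omega$), and this is the step that will require the most care.
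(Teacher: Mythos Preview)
The paper gives no proof of this lemma; it is quoted verbatim from Codenotti's thesis and used as a black box in the proof of Lemma~\ref{lem:5.2.5}. So there is no argument in the paper to compare yours against.

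Your reduction from $\text{Stab}_G$ to $\text{Stab}_{G(i)}$ via the restriction $\rho\colon G_{\{\Omega_i\}}\twoheadrightarrow G(i)$ is correct (indeed $\rho^{-1}(\text{Stab}_{G(i)}(u))=\text{Stab}_G(u)$, and surjectivity of $\rho$ lets you cancel $\rho^{-1}$). The block argument---that a proper non-trivial $G(i)$-invariant equivalence class inside $\Omega_i$ lifts to a $G$-block strictly refining $\{\Omega_j\}$, contradicting maximality in the sense ``$\Omega_i$ is a minimal block''---is also correct.

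The case you flag as ``the main technical obstacle'' is a genuine gap, but it is a gap in the \emph{statement} as quoted here, not in your reasoning. Take $G=S_3$ in its left-regular representation on six points and let the block system be the cosets of a subgroup of order~$2$. These blocks are minimal, the core is trivial so $K=\{1\}$, and $G(i)\cong\mathbb{Z}/2\neq\{1\}$; yet every point stabilizer in $G$ is trivial, so $\text{Stab}_G(x)=\text{Stab}_G(y)$ for all $x,y$. If one instead reads ``maximal'' as ``maximal blocks'', the regular representation of $S_4$ with the block system of $S_3$-cosets gives the same phenomenon (maximal subgroup, trivial core, $G(i)\cong S_3\neq\{1\}$, all point stabilizers trivial). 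Thus the lemma cannot hold under the hypotheses literally written; the thesis presumably carries an additional standing assumption---most plausibly that $G$ does not act regularly on $\Omega$, which is exactly what your residual case amounts to---that was dropped in the citation. Modulo that missing hypothesis your argument is complete.
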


\begin{lemma}[cf. {\cite[Lemma~5.2.5]{CodenottiThesis}}] \label{lem:5.2.5}
Suppose there exists an $i$ s.t. $G(i) \neq \{1\}$ and $K(i) = \{1\}$. Then there exists at most one extension for a given $\pi \in \text{PISO}(G^{*}, H^{*})$. Furthermore, if such an extension exists, we can find it in $\FOPLL$.
\end{lemma}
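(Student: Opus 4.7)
The plan is to reduce the construction of an extension $\phi \in \text{PISO}(G, H; \pi)$ to a pointwise stabilizer-matching problem that Lemma~\ref{lem:5.2.4} forces to have at most one solution, and then to run the matching in parallel across the small domain $\Omega$ using the permutation-group toolkit of Lemma~\ref{PermutationGroupsNC}. First I would apply Observation~\ref{obs:5.2.3} to conclude $K = \{1\}$, so that the restriction homomorphism $\rho_G \colon G \to G^{*}$ is an isomorphism of abstract groups. Any extension $\phi$ must conjugate $K$ onto $L$, so we must have $L = \{1\}$ and $\rho_H \colon H \to H^{*}$ an isomorphism as well; this can be tested upfront in $\FOPLL$ via Lemma~\ref{PermutationGroupsNC}(a), and otherwise we output ``no extension exists.'' Then $\pi$ induces an abstract group isomorphism $\pi' := \rho_H^{-1} \circ c_\pi \circ \rho_G \colon G \to H$, where $c_\pi(\tau) := \pi^{-1} \tau \pi$ is conjugation on $\text{Sym}(\{1, \ldots, \ell\})$. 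Any $\phi \in \text{PISO}(G, H; \pi)$ induces $\pi'$ by conjugation on $G$, so $\phi$ must satisfy $\text{Stab}_H(\phi(x)) = \pi'(\text{Stab}_G(x))$ for every $x \in \Omega$.

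For uniqueness, because $G$ acts transitively on the blocks, each restriction $G(j)$ is conjugate to $G(i)$ and hence non-trivial. Working inside a structure tree, the system of imprimitivity is maximal, so Lemma~\ref{lem:5.2.4} applies to every block and guarantees that distinct points of $\Delta_{\pi(j)}$ have distinct $H$-stabilizers. Therefore at most one $y \in \Delta_{\pi(j)}$ satisfies $\text{Stab}_H(y) = \pi'(\text{Stab}_G(x))$, forcing $\phi(x) = y$ and giving at most one extension.

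For the algorithmic half, I would compute $\phi$ in parallel over $x \in \Omega$: locate the block $\Omega_j$ containing $x$; compute a generating set for $\text{Stab}_G(x)$ via Lemma~\ref{PermutationGroupsNC}(d); transport each such generator through $\pi'$ by conjugating its $G^{*}$-image by $\pi$ and pulling back via $\rho_H^{-1}$, which is realized as a parallel table lookup among the $\leq n$ elements of $H$; and in parallel over $y \in \Delta_{\pi(j)}$ select the unique $y$ whose $H$-stabilizer coincides with this transported subgroup (subgroup equality reducing to membership checks via Lemma~\ref{PermutationGroupsNC}(b) after matching orders via Lemma~\ref{PermutationGroupsNC}(a)). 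Finally, I would verify that the assembled $\phi$ really is a permutational isomorphism by Lemma~\ref{PermutationGroupsNC}(h); failure means no extension exists. Every sub-step is $\FOPLL$-computable because $m \in O(\log n)$, and they compose in constantly many layers, yielding the claimed bound. The main obstacle I anticipate is the implementation of $\rho_H^{-1}$: although abstractly it is an isomorphism, realizing it as an $\FOPLL$-computable function requires scanning all $h \in H$ in parallel to identify the unique one restricting to a given $\tau \in H^{*}$. Since $|H| \leq n$ and evaluating $\rho_H(h)$ is just the projection of $h$ to its action on blocks, this lookup fits in $\FOPLL$ and the rest of the argument is a routine application of the tools in Lemma~\ref{PermutationGroupsNC}.
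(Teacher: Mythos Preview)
Your proposal is correct and follows essentially the same approach as the paper: reduce to $K=L=\{1\}$ via Observation~\ref{obs:5.2.3}, use the induced abstract isomorphism (the paper's $\sigma$, your $\pi'$) together with Lemma~\ref{lem:5.2.4} to pin down $\phi(x)$ as the unique $y$ with matching stabilizer, and implement the stabilizer-matching in parallel via Lemma~\ref{PermutationGroupsNC}. You are in fact slightly more careful than the paper in two places: you explicitly note that transitivity makes every $G(j)$ non-trivial so Lemma~\ref{lem:5.2.4} applies in every block, and you add a final verification via Lemma~\ref{PermutationGroupsNC}(h), which the paper omits but is needed since the candidate $\phi$ could fail to be a permutational isomorphism when no extension exists. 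One small remark on your ``main obstacle'': rather than scanning all of $H$ to realize $\rho_H^{-1}$, you can avoid it entirely by comparing $c_\pi(\rho_G(\text{Stab}_G(x)))$ with $\rho_H(\text{Stab}_H(y))$ inside $H^{*}$, or by using the word output of Lemma~\ref{PermutationGroupsNC}(b) in $H^{*}$ and replaying that word on the generators of $H$; either way stays comfortably within $\FOPLL$ without the $|H|\leq n$ assumption.
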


\begin{proof}
Babai, Codenotti, and Qiao (see the proof of \cite[Lemma~5.2.5]{CodenottiThesis}) established that there exists at most one extension for a given $\pi \in \text{PISO}(G^{*}, H^{*})$. In order to observe the $\FOPLL$ bound, we recall their argument. By Observation~\ref{obs:5.2.3}, we have that $K = \{1\}$. Thus, $G \cong G^{*}$. If $L \neq \{1\}$, no extension of $\pi$ exists. So without loss of generality, suppose that $L = \{1\}$. Thus, $L \cong L^{*}$. Then we have that for every $x \in \Omega_{i}$ and every $\phi \in \text{PISO}(G, H; \pi)$, there exists $y \in \Delta_{\pi(i)}$ such that $\phi(\text{Stab}_{G}(x)) = \text{Stab}_{H}(y)$. By Lemma~\ref{lem:5.2.4}, there in fact exists a unique such $y$. Thus, given $\pi$, we have a unique extension $\phi$.

We now turn to constructing $\phi$. Let $\sigma : G \to H$ be the group isomoprhism induced by $\pi$. We consider the $\Omega_i$ in parallel. For each such $i \in [m]$, check in parallel all $y \in \Delta_{\pi(i)}$ to see if $\sigma(\text{Stab}_{x}(G)) = \text{Stab}_{H}(y)$. By Lemma~\ref{PermutationGroupsNC}, we can compute $\text{Stab}_{G}(x)$ and $\text{Stab}_{H}(y)$ in $\FOPLL$. Then we can check if $\sigma(\text{Stab}_{G}(x)) = \text{Stab}_{H}(y)$ in $\FOPLL$, using a membership test (Lemma~\ref{PermutationGroupsNC}). If such a $y$ exists, we set $\phi(x) = y$. The result now follows.
\end{proof}

\begin{lemma}[cf. {\cite[Section~4.2]{BCQ}, \cite[Lemma~5.2.6]{CodenottiThesis}}] \label{lem:5.2.6}
Let $G \leq \text{Sym}(\Omega), H \leq \text{Sym}(\Delta)$ be transitive permutation groups of degree $O(\log n)$. Let $\Omega_{1}, \ldots, \Omega_{\ell}$ and $\Delta_{1}, \ldots, \Delta_{\ell}$ be maximal systems of imprimitivity for $G$ and $H$, respectively. Let $G^{*}, H^{*}$ be the actions of $G$ and $H$, respectively, on the blocks. For any $\pi \in \text{PISO}(G^{*}, H^{*})$, we can list $\text{PISO}(G, H; \pi)$ by $\cc{AC}$ circuits of depth $O(\log|G| \log \log n + (\log \log n)^{O(1)})$ and size $|G|\poly(n)$.
\end{lemma}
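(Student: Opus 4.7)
The plan is to enumerate $\text{PISO}(G, H; \pi)$ by analyzing the local primitive actions on each block and then globalizing via the transitive $G$-action on the blocks. Since $G$ is transitive on $\{\Omega_1, \ldots, \Omega_\ell\}$, all restrictions $G(i)$ are permutationally isomorphic to a common primitive group $P$ of degree $k := m/\ell \in O(\log n)$, so a single $\FOPLL$ computation of $|P|$ via Lemma~\ref{PermutationGroupsNC} determines which of three regimes applies: (i) $P$ is not a giant; (ii) $P \in \{A_k, S_k\}$ with $K(i) = \{1\}$; (iii) $P \in \{A_k, S_k\}$ with $K(i) \supseteq \text{Alt}(\Omega_i)$. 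For $k \geq 5$ these exhaust the possibilities since the only normal subgroups of a giant are $\{1\}, A_k, S_k$, and the corner cases $k \leq 4$ or $k = 6$ are absorbed into bounded brute force. Before branching, I would verify in $\FOPLL$ via Lemma~\ref{PermutationGroupsNC}(h) that $H(\pi(i))$ is permutationally isomorphic to $G(i)$ for each $i$, returning $\emptyset$ otherwise.

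In regime (i), Lemma~\ref{lem:PrimitivePIso} lists $\text{PISO}(G(i), H(\pi(i)))$ in parallel across the $\ell$ pairs, each set of size at most $\exp(\tilde{O}(\sqrt k)) \leq \poly(n)$. Using the transitive $G$-action I fix representatives $g_i \in G$ with $\Omega_1^{g_i} = \Omega_i$ (computable in $\FOPLL$); any extension $\phi$ of $\pi$ is then determined by $\phi_1 := \phi|_{\Omega_1}$ together with an element of $K$, so the total candidate pool has size $\poly(n) \cdot |K| \leq |G|\poly(n)$, and each candidate is validated in $\FOPLL$ via Lemma~\ref{PermutationGroupsNC}(h). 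Regime (ii) is a single $\FOPLL$ call to Lemma~\ref{lem:5.2.5}, returning the at-most-one extension. For regime (iii), the subgroups $K^{(A)} := K \cap A_k^\ell$ and $L^{(A)} := L \cap A_k^\ell$ form subdirect products of $\prod_i \text{Alt}(\Omega_i)$ and $\prod_i \text{Alt}(\Delta_i)$ respectively, so Lemma~\ref{lem:5.2.2} lists their $\leq 2^\ell |K^{(A)}|$ extensions of $\pi$ within the advertised depth and size; multiplication by the at most $2^\ell \leq \poly(n)$ coset representatives for $K^{(A)}$ in $K$ then recovers all of $\text{PISO}(G, H; \pi)$.

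For depth and size accounting, the dominant $\log|G| \log\log n$ term in the depth arises from writing down $K$ (or $G$) via Wolf-style doubling in Lemma~\ref{lem:ListPrimitive}, which is the most expensive step in regime (iii). The $\FOPLL$ overhead from Lemma~\ref{PermutationGroupsNC} --- membership tests, stabilizer computations, and bijection-merging via Lemma~\ref{PermutationGroupsNC}(g) to keep generator counts $\poly(n)$ --- is absorbed into the $\poly(\log\log n)$ term. The output size in every regime is $|G|\poly(n)$: at most $|G|\poly(n)$ candidate permutations of $\Omega$, each of length $m \log k \in O((\log n)^2)$ bits.

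The main obstacle is regime (iii), where the number of extensions is genuinely $\Theta(|G|)$, so the algorithm cannot avoid $|G|\poly(n)$-size output and must \emph{construct} rather than exhaustively enumerate extensions. Here I rely crucially on the permutation-diagonal decomposition of Lemma~\ref{lem:5.2.2}, together with the observation that $K$ is written down only once via Lemma~\ref{lem:ListPrimitive}, which supplies the $\log|G|\log\log n$ depth factor in the final bound. A secondary difficulty in regime (i) is merging per-block bijections consistently with the global kernel action; I would handle this by fixing $\phi_1$ on the first block, propagating to the remaining blocks via the $g_i$, and performing a single $\FOPLL$ validation of each resulting candidate at the end.
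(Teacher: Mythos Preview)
Your three-regime split matches the paper's case analysis exactly, and regime (ii) is handled identically. However, your implementations in regimes (i) and (iii) diverge from the paper and both contain gaps.

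In regime (i), the paper does something simpler than your propagation scheme: it directly enumerates all $\ell$-tuples $(\sigma_i)_{i \in [\ell]}$ with $\sigma_i \in \text{PISO}(G(i), H(\pi(i)))$. Since $T$ is non-giant, each factor has size at most $c^k$ by Lemma~\ref{lem:PrimitivePIso}(a), so the total number of tuples is at most $(c^k)^\ell = c^{k\ell} = c^m = \poly(n)$---no $|K|$ factor appears. Each tuple determines a unique bijection $\phi : \Omega \to \Delta$, which is then tested for membership in $\text{PISO}(G,H)$ in $\FOPLL$. Your claim that an extension is ``determined by $\phi_1$ together with an element of $K$'' conflates elements of $G$ (permutations of $\Omega$) with bijections $\Omega \to \Delta$; while conjugation by $\kappa \in K$ does give permutational automorphisms of $G$, you have not argued that this parameterizes all of $\text{PISO}(G,H;\pi)$ once $\phi_1$ is fixed, and the direct enumeration sidesteps the issue entirely.

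In regime (iii), the paper does not multiply by coset representatives. Instead it uses the containment $\text{PISO}(G,H;\pi) \subseteq \text{PISO}([K,K],[L,L];\pi)$: any permutational isomorphism $G \to H$ extending $\pi$ must carry the characteristic subgroup $[K,K]$ to $[L,L]$, and since each $K(i) \in \{A_k, S_k\}$ forces $[K,K](i) = A_k$, the derived subgroups $[K,K], [L,L]$ are subdirect products of alternating groups to which Lemma~\ref{lem:5.2.2} applies. One lists $\text{PISO}([K,K],[L,L];\pi)$ and then filters each candidate for membership in $\text{PISO}(G,H;\pi)$ via Lemma~\ref{PermutationGroupsNC}. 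Your step of ``multiplication by the at most $2^\ell$ coset representatives for $K^{(A)}$ in $K$'' has a type mismatch: those representatives lie in $\text{Sym}(\Omega)$, whereas elements of $\text{PISO}(G,H;\pi)$ are bijections $\Omega \to \Delta$, and pre-composing a $K^{(A)}$-to-$L^{(A)}$ permutational isomorphism with an element of $K$ does not obviously yield an element of $\text{PISO}(G,H;\pi)$.
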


\begin{proof}
We follow the strategy of \cite[Lemma~5.2.6]{CodenottiThesis}. Suppose that for all $i, j$, $G(i), H(j)$ (notation as at the beginning of this section) are permutationally isomorphic to some primitive $T \leq S_{k}$. Otherwise, there is no $\phi \in \text{PISO}(G, H; \pi)$. We break into cases:
\begin{itemize}
\item \textbf{Case 1:} Suppose that $T$ is not a giant. We start by considering all possible lists of permutational isomorphisms $(\sigma_{i} \in \text{PISO}(G(i), H(\pi(i))))_{i \in [\ell]}$. Given such a list and $\pi$, there is a unique bijection $\phi : \Omega \to \Delta$ given by $\phi(\Omega_{i}) = \Delta_{\pi(i)}$, such that $\phi|_{\Omega_{i}} = \sigma_{i}$. By Lemma~\ref{PermutationGroupsNC}, we can check whether $\phi \in \text{PISO}(G, H)$ in $\FOPLL$.

Babai, Codenotti, and Qiao \cite[Lemma~1.2.6(a)]{CodenottiThesis} established that there exists a constant $c$ such that $|\text{PISO}(G(i), H(j))| \leq c^{k}$. Thus, the number of extensions is bounded by $(c^{k})^{\ell} = c^{O(\log n)} = \poly(n)$. By Lemma~\ref{lem:PrimitivePIso} we can list the at most $c^{\tilde O(\sqrt{k})} \leq c^{O(k)}$ elements of $\text{PISO}(G(i), H(\pi(i)))$ in $\cc{AC}^1$, and then we can list $\ell$-tuples of such elements with an additional $O(1)$ depth and total size at most $(c^{O(k)})^\ell \leq \poly(n)$.

\item \textbf{Case 2:} Suppose instead that $T$ is a giant. As $K \trianglelefteq G$, we have that $K(i) \trianglelefteq G(i)$. Similarly, $L(i) \trianglelefteq H(i)$. As $G(i)$ is permutationally isomorphic to $A_{k}$ or $S_{k}$, we have that $K(i)$ is either $S_{k}$, $A_{k}$, or $\{1\}$. If there exists some $i$ such that $K(i) = \{1\}$, then by \cite[Lemma~5.2.5]{CodenottiThesis}, there is at most one extension $\phi$ of $\pi$. By Lemma~\ref{lem:5.2.5}, we can construct $\phi$ in $\FOPLL$.

Suppose instead that $K(i) \neq \{1\}$. It was previously established in the proof of \cite[Lemma~5.2.6]{CodenottiThesis} that if $\phi \in \text{PISO}(G, H; \pi)$, then $\phi \in \text{PISO}([K,K], [L,L]; \pi)$ and $[K,K], [L,L]$ are subdirect products of $A_{k}^{m}$. For $k \geq 5$, $k \neq 6$, we have by \cite[Lemma~5.2.2]{CodenottiThesis} that there are at most $2^{O(\log n)} \cdot |[K,K]| \leq 2^{O(\log n)} \cdot |K|$ extensions.  By Lemma~\ref{lem:5.2.2}, we can list $\text{PISO}([K,K], [L,L]; \pi)$ using $\cc{AC}$ circuits of depth $O(\log|K| \log \log n + (\log \log n)^{O(1)})$ and size $|K|\poly(n)$. Then by Lemma~\ref{PermutationGroupsNC}, we can check in $\FOPLL$ whether each $\phi \in \text{PISO}([K,K], [L,L]; \pi)$ belongs to $\text{PISO}(K,L; \pi)$.
\end{itemize}

The result now follows.
\end{proof}

Following Babai, Codenotti, and Qiao, we will adapt \cite[Theorem~5.2.7]{CodenottiThesis}, which is a slightly stronger version of Theorem~\ref{thm:5.2.1} and will serve as the key to proving Theorem~\ref{thm:5.2.1}.

\begin{theorem}[cf. {\cite[Theorem~5.2.7]{CodenottiThesis}}] \label{thm:5.2.7}
Let $m \in O(\log n)$. Let $G, H$ be transitive permutation groups of degree $m$, given by their multiplication tables. Fix structure trees $T_{G}, T_{H}$ for $G, H$, respectively. The number of permutational isomorphisms respecting $T_{G}$ and $T_{H}$ is at most $|G| \cdot c^{2m}$. All permutational isomorphisms of $G$ and $H$ that respect $T_{G}$ and $T_{H}$ can be listed with an $\textsf{AC}$ circuit of depth $O((\log n) (\log \log n)^2)$ and size $\poly(n)$.
\end{theorem}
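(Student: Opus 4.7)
\medskip
\noindent\textbf{Proof proposal.} The plan is to induct on the depth $d$ of the common shape of $T_G$ and $T_H$, which is $O(\log m) = O(\log \log n)$ by the discussion in Section~\ref{sec:perm_prelim}. For each level $\ell \in \{0, 1, \ldots, d\}$, let $G^{(\ell)}$ (respectively $H^{(\ell)}$) denote the induced action of $G$ (respectively $H$) on the $\ell$-th layer of blocks of $T_G$ (respectively $T_H$). The base case $\ell=0$ is trivial: the root of a structure tree is a single block, so $G^{(0)}$ and $H^{(0)}$ are the trivial group, and $\text{PISO}(G^{(0)},H^{(0)})$ is a singleton. At the top-most nontrivial layer $\ell=1$, the groups $G^{(1)},H^{(1)}$ are primitive of degree at most $m$, and we apply either Lemma~\ref{lem:PrimitivePIso} (non-giant case) or, in the giant case, the argument embedded in Lemma~\ref{lem:5.2.6} to list $\text{PISO}(G^{(1)},H^{(1)})$ directly.

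For the inductive step, assume we have already listed all $\pi^{(\ell)} \in \text{PISO}(G^{(\ell)},H^{(\ell)})$ that respect the first $\ell$ layers of $T_G$ and $T_H$. In parallel over all such $\pi^{(\ell)}$, I would invoke Lemma~\ref{lem:5.2.6} applied to the two groups $G^{(\ell+1)},H^{(\ell+1)}$ with their maximal system of imprimitivity corresponding to the next finer layer; this yields $\text{PISO}(G^{(\ell+1)},H^{(\ell+1)};\pi^{(\ell)})$ for each $\pi^{(\ell)}$. Concatenating over the layers $\ell = 0,1,\ldots,d-1$ gives the desired list $\text{PISO}(G,H)$ restricted to permutational isomorphisms respecting $T_G$ and $T_H$.

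For the count, note that at each layer Lemma~\ref{lem:5.2.6} (via the case analysis in Lemmas~\ref{lem:5.2.5} and \ref{lem:5.2.2}) produces at most $2^{O(m_\ell)} \cdot |K^{(\ell)}|$ extensions of each $\pi^{(\ell)}$, where $m_\ell$ is the number of blocks at layer $\ell$ and $K^{(\ell)}$ is the kernel of the action of $G^{(\ell+1)}$ on the layer-$\ell$ block system. Since $\sum_{\ell} m_\ell \leq 2m$ (the layers refine a tree on $m$ leaves with branching $\geq 2$) and $\prod_{\ell} |K^{(\ell)}| \leq |G|$ (because the successive kernels form a normal series in $G$), the product of the layerwise extension counts is bounded by $|G| \cdot c^{2m}$ for a suitable constant $c$, establishing part~(a).

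For the depth and size bounds, recall that $G$ is given by its multiplication table, so $|G| \leq n$ and hence $\log |G| \leq \log n$. Lemma~\ref{lem:5.2.6} therefore realizes each layerwise extension with a circuit of depth $O((\log n)(\log \log n) + \poly(\log \log n))$ and size $|G|\poly(n) = \poly(n)$, applied in parallel across the (at most $|G| \cdot c^{2m} = \poly(n)$) choices of $\pi^{(\ell)}$. Composing these $d = O(\log \log n)$ layers sequentially yields total depth $O((\log n)(\log \log n)^2)$ and size $\poly(n)$, as claimed. The main obstacle is the bookkeeping in the count bound: making precise the telescoping $\prod_{\ell} |K^{(\ell)}| \leq |G|$ requires matching the kernels produced by Lemma~\ref{lem:5.2.6} at each layer with the corresponding stabilizers in $G$, and verifying that the $2^{O(m_\ell)}$ overhead from the permutation-diagonal decomposition of Lemma~\ref{lem:5.2.2} only accumulates additively in the exponent across layers rather than multiplicatively.
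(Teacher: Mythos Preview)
Your proposal is correct and follows essentially the same inductive approach as the paper: both proceed layer-by-layer down the structure tree (the paper phrases it as induction on the depth $d$, you as iteration over layers $\ell$), invoking Lemma~\ref{lem:5.2.6} at each step to extend the list of permutational isomorphisms from $G^{(\ell)}$ to $G^{(\ell+1)}$, and both multiply the per-layer depth $O((\log n)(\log\log n))$ by the tree depth $d = O(\log\log n)$ to obtain the final bound. Your sketch of the count bound via the telescoping $\prod_\ell |K^{(\ell)}| = |G|$ and $\sum_\ell m_\ell \leq 2m$ is actually more explicit than the paper's proof, which simply cites \cite[Theorem~5.2.7]{CodenottiThesis} for that part; the only minor wrinkle is that in the non-giant case of Lemma~\ref{lem:5.2.6} the per-layer overhead is $c^{n_{\ell+1}}$ rather than $2^{O(m_\ell)}$, but since $\sum_\ell n_\ell \leq 2m$ your conclusion still holds.
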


\begin{proof}
We proceed by induction on the depth $d$ of the structure trees. From the proof of Lemma~\ref{lem:EnumerateStructureTrees}, we have that the depth of a structure tree is at most $\lceil \log_{2}(m) \rceil \in O(\log \log n)$. Furthermore, Babai, Codenotti, and Qiao established that the number of permutational isomorphisms respecting $T_{G}$ and $T_{H}$ is $|G| \cdot c^{2m}$ \cite[Theorem~5.2.7]{CodenottiThesis}. 

The base case is when $d = 1$. Here, $G, H$ are primitive. If $G, H$ are not giants, then by Lemma~\ref{lem:PrimitivePIso}, we can list all permutational isomorphisms respecting $T_{G}$ and $T_{H}$ in $\cc{AC}^1$. If both groups are $A_{m}$ or $S_{m}$, then $\text{PISO}(G, H) = S_{m}$, and $m! \leq 2 \cdot |G|$. In this case, we may use the fact that $S_{m}$ is $2$-generated and Lemma~\ref{lem:ListPrimitive} to write down $\text{PISO}(G, H)$ using $\cc{AC}$ circuits of size $\poly(|G|)$ and depth $O(\log |G| \log \log n) = O(m \log m \log \log n) = O((\log n) (\log \log n)^2)$. If none of the above cases hold, then $\text{PISO}(G,H) = \emptyset$. 

Now consider structure trees of depth $d$. The last layer of a structure tree is a maximal system of imprimitivity. Let $\Omega := \{ \Omega_1, \ldots, \Omega_{\ell} \}$ be this system of imprimitivity for $G$, and let $\Delta := \{ \Delta_1, \ldots, \Delta_{\ell}\}$ be the corresponding system of imprimitivity for $H$. Let $G^{*}$ be the action of $G$ on the blocks of $\Omega$, and let $H^{*}$ be the action of $H$ on the blocks of $\Delta$. By \cite[Lemma~5.2.6]{CodenottiThesis}, $|\text{PISO}(G^{*}, H^{*})| \leq |G^{*}| \cdot c^{2m}$. By the inductive hypothesis, we can list them using $\cc{AC}$ circuits of depth $O( (\log n)((d-1)\log \log n + (\log \log n)^2) )$ and size $\poly(|G|) \leq \poly(n)$.

Now by \cite[Lemma~5.2.6]{CodenottiThesis}, we have that for each $\pi \in \text{PISO}(G^{*}, H^{*})$, $|\text{PISO}(G, H; \pi)| \leq |G^{*}| \cdot c^{2m} \in \poly(n)$. By Lemma~\ref{lem:5.2.6}, we can list each $\text{PISO}(G, H; \pi)$ in parallel using $\cc{AC}$ circuits of depth $O(\log|G| \log \log n + (\log \log n)^{O(1)})$ and size $|G|\poly(n)$. Babai, Codenotti, and Qiao previously established that $|\text{PISO}(G, H)| \leq |G| \cdot c^{2m} \in \poly(n)$. Thus, given $\text{PISO}(G^{*}, H^{*})$, we can list $\text{PISO}(G, H)$ using $\cc{AC}$ circuits of size $O(|G|^2\poly(n)) \leq \poly(n)$ and depth $O(\log |G| \log \log n + (\log \log n)^{O(1)} \leq O(\log n \log \log n)$.

As $d \leq \log_{2} m \in O(\log \log n)$, the above iterates at most $d$ times, and we can list $\text{PISO}(G, H)$ with an $\textsf{AC}$ circuit of depth $O( (\log n) (\log \log n)^2)$ and size $\poly(n)$.
\end{proof}

We now prove Theorem~\ref{thm:5.2.1}.

\begin{proof}[Proof of Theorem~\ref{thm:5.2.1}]
Babai, Codenotti, and Qiao \cite{BCQ} previously established that $|\text{PAut}(G)| \leq |G| \cdot c^{m}$. We now turn to proving (b). We construct a structure tree for $G$, and try all possible structure trees for $H$. As $m \in O(\log n)$, we have by Lemma~\ref{lem:EnumerateStructureTrees} that we can enumerate all $m^{O(\log m)} \subseteq (\log n)^{O(\log \log n)}$ structure trees in $\FOPLL$. The result now follows by Theorem~\ref{thm:5.2.7}.
\end{proof}

\section{Fitting-free group isomorphism in parallel} \label{sec:FF}

We now come to the isomorphism problem for Fitting-free groups. Precisely, we will establish the following.

\begin{theorem} \label{thm:FittingFree}
Let $G$ be a Fitting-free group of order $n$, and let $H$ be arbitrary. We can decide isomorphism between $G$ and $H$ with an $\textsf{AC}$ circuit of depth $O((\log^2 n) \cdot \poly(\log \log n))$ and size $\poly(n)$. 
\end{theorem}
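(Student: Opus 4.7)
The plan is to parallelize the Babai--Codenotti--Qiao isomorphism test for Fitting-free groups \cite{BCQ, CodenottiThesis} by combining it with the small-domain parallel subroutines established earlier in the paper. The workflow has three stages: (i) extract the structural data of $G$ and $H$ (socle, simple direct factor decomposition, permutation kernel); (ii) enumerate the candidate permutational isomorphisms between $G/\pker(G)$ and $H/\pker(H)$, which are transitive subgroups of $S_k$ for $k = O(\log n)$; and (iii) for each candidate outer action on the simple factors, reduce the remaining equivalence to a single instance of \algprobm{Twisted Code Equivalence} and solve it.

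For stage (i), I would first verify that $G$ has no non-trivial Abelian normal subgroup, and then compute $\Soc(G) = T_1 \times \cdots \times T_k$, where the $T_i$ are non-abelian simple. Since $|T_i| \geq 60$, we have $k = O(\log n)$, placing us in the small-domain regime of \Prop{prop:CosetIntersection} and \Thm{thm:TwistedCodeEquivalence}. The subgroup operations required---normal closures of singletons (to locate minimal normal subgroups), centralizers, and intersections---are Boolean combinations of multiplication-table lookups and subgroup-membership tests, and should fit in $\FOPLL$. From this data one reads off $\pker(G) \trianglelefteq G$ as the kernel of the $G$-conjugation action on $\{T_1,\dots,T_k\}$; hence $G/\pker(G)$ embeds into $S_k$. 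The analogous data is produced for $H$, and we reject immediately unless the multisets of isomorphism types of the simple factors match.

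For stage (ii), we apply \Thm{thm:TransitivePermIso} to $G/\pker(G)$ and $H/\pker(H)$, which are transitive permutation groups of degree $O(\log n)$; this lists all candidate permutational isomorphisms using an $\textsf{AC}$ circuit of depth $O((\log n)(\log\log n)^2)$ and polynomial size. For stage (iii), following \cite[Ch.~4--5]{CodenottiThesis}, each element of $G$ is encoded as a string of length $k$ over the alphabets $\Gamma_i = \Aut(T_i)$, recording the induced automorphisms on the simple factors after the outer permutation $\pi_g$. This presents $G$ and $H$ as codes of total length $k = O(\log n)$, and the remaining isomorphism question becomes a single instance of \algprobm{Twisted Code Equivalence}. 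Using $|\Aut(T)| \leq \poly(|T|)$ for non-abelian simple $T$ (a consequence of CFSG), we have $\log|\Gamma_{\max}| = O(\log n)$. By \Thm{thm:TwistedCodeEquivalence}, this instance is solved with an $\textsf{AC}$ circuit of depth $O((\log n)\cdot\log|\Gamma_{\max}|\cdot\poly(\log\log n)) = O(\log^2 n \cdot \poly(\log\log n))$ and polynomial size. Sequential composition with the earlier, cheaper stages is dominated by this term, giving the claimed bound.

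The main obstacle will be stage (i): carrying out the socle computation, extraction of simple direct factors, and construction of $\pker(G)$ in sufficiently low depth while operating on the Cayley table of $G$, whose ``permutation domain'' is the whole group of size $n$ rather than the small domain $[k]$ on which the $\FOPLL$ permutation-group toolkit of \Lem{PermutationGroupsNC} natively operates. A secondary subtlety is choosing canonical identifications $T_i \to T_{\sigma(i)}'$ for each matching pair of simple factors so that the twist data fed into the \algprobm{Twisted Code Equivalence} instance is well-defined and compatible across candidate outer permutations. Once those subgroup-level operations on the multiplication table are shown to fit within the depth budget---which is plausible since each reduces to Boolean combinations of multiplication-table lookups and subgroup-closure computations on collections of at most $n$ subgroups---the theorem follows by sequential composition of the depth bounds for each stage above.
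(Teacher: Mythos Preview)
Your high-level plan---parallelize the Babai--Codenotti--Qiao algorithm by feeding the small-domain subroutines into its pieces---is exactly the paper's plan, and your final depth accounting is correct. But the specific decomposition you outline in stages (ii)--(iii) has two genuine gaps, and your sense of where the difficulty lies is inverted.

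First, $G/\pker(G)$ is \emph{not} transitive on the $k$ simple factors in general: its orbits are precisely the minimal normal subgroups of $G$, and a Fitting-free group can have many of those (e.g.\ $G = A_5 \times A_6$ has $G/\pker(G)$ trivial on two points). So \Thm{thm:TransitivePermIso} cannot be applied to $G/\pker(G)$ directly. Second, your encoding in stage (iii) does not work: an element $g \notin \pker(G)$ permutes the simple factors nontrivially, so its action on $\Soc(G)$ is not a tuple in $\prod_i \Aut(T_i)$, and ``the induced automorphisms after the outer permutation $\pi_g$'' is not a well-defined codeword---the permutation $\pi_g$ varies with $g$, so you cannot strip it off uniformly and still get a code closed under the group structure.

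The paper resolves both issues simultaneously by indexing the code not by simple factors but by \emph{minimal normal subgroups} $N_{i,j}$, which $G$ does fix setwise; the alphabet $\Gamma_\ell$ at position $(i,j)$ is the restriction $G_{ij}^*$ of $G^*$ to $N_{i,j}$ (a Fitting-free group with a \emph{unique} minimal normal subgroup, hence transitive on its simple factors), and the twist group $W_\ell$ is $\text{Autds}^*(\Gamma_\ell;\Delta,\Delta)$. \Thm{thm:TransitivePermIso} is then invoked only inside \Prop{thm:6.4.1}, on these transitive pieces, as part of the reduction \Lem{lem:6.5.1}. The enumeration is not over permutational isomorphisms of $G/\pker(G)$ but over the at most $n^2$ diagonal products $\psi$ of $\Soc(H)$ (\Cor{cor:BCGQ4.1}, \Lem{lem:4.1}), an ingredient you do not mention.

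Finally, stage (i) is not the obstacle you anticipate: checking Fitting-freeness is $\textsf{AC}^0$ (\Lem{lem:IsFittingFree}), and decomposing $\Soc(G)$ into simple factors is already known to be in $\textsf{L}$ \cite{GrochowLevetWL}. The real work---and the source of the $\log^2 n$ factor---is entirely in the reduction to and solution of the \algprobm{Twisted Code Equivalence} instance, whose alphabets have size up to $n$.
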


\subsection{Preliminaries}

We begin by recalling some preliminaries. In \cite{BCGQ, BCQ}, the authors consider isomorphisms of $\Soc(G)$ respecting \textit{diagonal} products.

\begin{definition}[{\cite[Definition~2.2]{BCGQ}}]
Let $V_1, \ldots, V_r$ be isomorphic groups, where each $V_i \cong T$. A \textit{diagonal} of $(V_1, \ldots, V_r)$ is an embedding $\phi : T \to \prod_{i=1}^{r} V_i$ such that $\text{Im}(\phi)$ is a subdirect product of the $V_i$. More generally, suppose we have a system of groups $(V_{1,1}, \ldots, V_{1,k_{1}}), \ldots, (V_{r, 1}, \ldots, V_{r,k_{r}})$, where for every $i \leq r$ and every $j \leq k_{i}$, $V_{i,j} \cong T_{i}$. A \textit{diagonal product} of the system is an embedding:
\[
\phi_{1} \times \cdots \times \phi_{r} : T_1 \times \cdots \times T_r \to \prod_{j=1}^{k_{1}} V_{1,j} \times \cdots \times \prod_{j=1}^{k_{r}} V_{r,j},
\]
where each $\phi_{i}$ is a diagonal of $(V_{i, j})_{j=1}^{k_{i}}$. The \textit{standard diagonal} of $T^k$ is the map $\Delta: t \mapsto (t, \ldots, t)$. Similarly, the \textit{standard diagonal product} of $\prod_{i=1}^{r} T_{i}^{k_{i}}$ is the map $\Delta = \Delta_{1} \times \cdots \times \Delta_{r}$, where each $\Delta_{i}$ is the standard diagonal of $T_{i}^{k_{i}}$.
\end{definition}

We will be particularly interested in diagonals that respect the (unique) decomposition of $\Soc(G)$ into a direct product of non-Abelian simple groups.

\begin{definition}[{\cite[Definition~2.3]{BCGQ}}]
Let $X, Y$ be groups, where $X = \prod_{i=1}^{r} \prod_{j=1}^{k_{i}} V_{i,j}$, and $Y = \prod_{i=1}^{r} \prod_{j=1}^{k_{i}} U_{i,j}$, where for all $i \in [r], j \in [k_{i}]$, $U_{i,j} \cong V_{i,j} \cong T_{i}$. We say that an isomorphism $\chi : G \to H$ \textit{respects} the decompositions $\mathcal{V} = (V_{i,j})$ and $\mathcal{U} = (U_{i,j})$ if for every $i \in [r]$ and every $j \in [k_{i}]$, there exists $j' \in [k_{i}]$ such that $\chi(V_{i,j}) = U_{i,j'}$. We denote the set of isomorphisms that respect the decompositions $\mathcal{V}$ and $\mathcal{U}$ by $\text{ISOp}((X, \mathcal{V}), (Y, \mathcal{U}))$, where $p$ stands for product decomposition. If the decompositions are understood from context, we will write $\text{ISOp}(X,Y)$.
\end{definition}

\begin{definition}[{\cite[Definition~2.4]{BCGQ}}]
Let $X, Y$ be two groups with direct product decompositions $X = \prod_{i=1}^{r} \prod_{j=1}^{k_{i}} V_{i,j}$ and $Y = \prod_{i=1}^{r} \prod_{j=1}^{k_{i}} U_{i,j}$, where for all $i,j$, $U_{i,j} \cong V_{i,j} \cong G_{i}$. Let $\varphi$ be a diagonal product of $\mathcal{V} = (V_{i,j})$, and let $\psi$ be a diagonal product of $\mathcal{U} = (U_{i,j})$. We say that an isomorphism $\chi \in \text{ISOp}((X, \mathcal{V}), (Y, \mathcal{U}))$ \textit{respects} the diagonal products $\varphi, \psi$ if $\varphi \circ \chi = \psi$. We denote the set of diagonal product respecting automorphisms by $\text{ISOd}((X, \mathcal{V}), (Y, \mathcal{U}); \varphi, \psi)$.
\end{definition}

\begin{lemma}[{\cite[Lemma~2.2]{BCGQ}}]
Let $X, Y$ be two groups with direct product decompositions $X = \prod_{i=1}^{r} \prod_{j=1}^{k_{i}} V_{i,j}$ and $Y = \prod_{i=1}^{r} \prod_{j=1}^{k_{i}} U_{i,j}$, where for all $i,j$, $U_{i,j} \cong V_{i,j} \cong G_{i}$. Let $\varphi$ be a diagonal product of $\mathcal{V} = (V_{i,j})$, and let $\mathcal{D}$ be the set of all diagonal products of $\mathcal{U}$. Then:
\[
\text{ISOp}((X, \mathcal{V}), (Y, \mathcal{U})) = \bigcup_{\psi \in \mathcal{D}} \text{ISOd}((X, \mathcal{V}), (Y, \mathcal{U}); \varphi, \psi).
\]
\end{lemma}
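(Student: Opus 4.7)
The plan is a straightforward double-inclusion argument, where one direction is immediate from the definitions and the other amounts to checking that a certain canonical composition is in fact a diagonal product.

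The inclusion $\bigcup_{\psi \in \mathcal{D}} \text{ISOd}((X,\mathcal V),(Y,\mathcal U);\varphi,\psi) \subseteq \text{ISOp}((X,\mathcal V),(Y,\mathcal U))$ is immediate: by the preceding definition, every $\chi \in \text{ISOd}(\cdots;\varphi,\psi)$ already lies in $\text{ISOp}((X,\mathcal V),(Y,\mathcal U))$, since respecting a pair of diagonal products is an extra condition on top of respecting the underlying decompositions.

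For the reverse inclusion, fix $\chi \in \text{ISOp}((X,\mathcal V),(Y,\mathcal U))$. The natural candidate is to define $\psi := \chi \circ \varphi$ (read in whichever composition convention the paper uses so that the source is $T_1 \times \cdots \times T_r$ and the target is $Y$), and then the equation $\varphi \circ \chi = \psi$ holds by construction. The only nontrivial step is to verify that this $\psi$ actually lies in $\mathcal{D}$, i.e., is a diagonal product of $\mathcal U$. Writing $\varphi = \varphi_1 \times \cdots \times \varphi_r$ with each $\varphi_i \colon T_i \to \prod_{j=1}^{k_i} V_{i,j}$ a diagonal, I would use the fact that $\chi$ respects the decompositions to observe that for each fixed $i$, the map $\chi$ carries each factor $V_{i,j}$ isomorphically onto some $U_{i,j'}$, hence restricts to an isomorphism $\chi_i \colon \prod_{j=1}^{k_i} V_{i,j} \to \prod_{j=1}^{k_i} U_{i,j}$ that merely permutes the factors in the $i$th layer (and acts as an isomorphism $T_i \to T_i$ on each). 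Then $\psi_i := \chi_i \circ \varphi_i$ is an embedding $T_i \hookrightarrow \prod_{j=1}^{k_i} U_{i,j}$, and its image is a subdirect product because each projection $\pi_{j'}$ on the $U$-side factors through the corresponding surjective projection $\pi_j$ on the $V$-side via $\chi_i^{-1}$ restricted to the appropriate factor.

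Thus each $\psi_i$ is a diagonal of $(U_{i,j})_{j=1}^{k_i}$ and $\psi = \psi_1 \times \cdots \times \psi_r$ is a diagonal product of $\mathcal U$, so $\psi \in \mathcal{D}$ and $\chi \in \text{ISOd}((X,\mathcal V),(Y,\mathcal U);\varphi,\psi)$, completing the proof. The only step where one has to be slightly careful is the ``subdirect product'' check at the end; if one takes as a black box the easy fact that factor-permuting isomorphisms carry subdirect products to subdirect products, the entire argument is essentially a one-line unpacking of the definitions.
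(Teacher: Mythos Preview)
Your proposal is correct. Note, however, that the paper does not actually supply a proof of this lemma: it is quoted verbatim from \cite[Lemma~2.2]{BCGQ} and used as a black box, so there is no ``paper's own proof'' to compare against. Your double-inclusion argument, with $\psi$ taken to be the composite $\varphi \circ \chi$ (in the paper's left-to-right convention) and the verification that a decomposition-respecting isomorphism carries subdirect products to subdirect products, is exactly the standard unpacking one would expect and matches the spirit of the original source.
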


For Fitting-free groups, we can reduce to considering diagonal-respecting isomorphisms of the socle. Before we can make this precise, we must first recall the following lemma. 

\begin{lemma}[{\cite[Lemma 3.1]{BCGQ}, cf. \cite[\S 3]{CH03}}] \label{CharacterizeSemisimple}
Let $G$ and $H$ be groups, with $R \triangleleft G$ and $S \triangleleft H$ groups with trivial centralizers. Let $\alpha : G \to G^{*} \leq \Aut(R)$ and $\beta : H \to H^{*} \leq \Aut(S)$ be faithful permutation representations of $G$ and $H$ via the conjugation action on $R$ and $S$, respectively. Let $f : R \to S$ be an isomorphism. Then $f$ extends to an isomorphism $\hat{f} : G \to H$ if and only if $f$ is a permutational isomorphism between $G^{*}$ and $H^{*}$; and if so, $\hat{f} = \alpha f^{*} \beta^{-1}$, where $f^{*} :  G^{*} \to H^{*}$ is the isomorphism induced by $f$.
\end{lemma}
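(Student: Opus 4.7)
The plan is to exploit the conjugation map $f^{*}\colon \Aut(R)\to \Aut(S)$, $\phi\mapsto f\phi f^{-1}$, and observe that the statement ``$f$ is a permutational isomorphism from $G^{*}$ to $H^{*}$'' is exactly the statement $f^{*}(G^{*})=H^{*}$. The candidate extension is obtained by transporting $g\in G$ through $G\xrightarrow{\alpha} G^{*}\xrightarrow{f^{*}} H^{*}\xrightarrow{\beta^{-1}} H$, which is well-defined precisely when $f^{*}(G^{*})\subseteq H^{*}$ (using that $\beta$ is injective, thanks to the trivial-centralizer hypothesis on $S$).

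For the $(\Leftarrow)$ direction I would assume $f^{*}(G^{*})=H^{*}$ and set $\hat{f}(g):=\beta^{-1}(f\alpha(g)f^{-1})$. It is a homomorphism because $\alpha$, $\beta^{-1}$, and conjugation by $f$ each respect multiplication. The main verification is that $\hat{f}$ extends $f$, which reduces to the identity $f\alpha(r)f^{-1}=\beta(f(r))$ for $r\in R$; evaluating both sides on $s\in S$ gives $f(r f^{-1}(s) r^{-1})=f(r)\, s\, f(r)^{-1}$, which is immediate since $f$ is a homomorphism and $f(r)\in S\trianglelefteq H$. Bijectivity of $\hat{f}$ then follows from faithfulness of $\alpha$ and $\beta$ together with the hypothesis $f^{*}(G^{*})=H^{*}$.

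For the $(\Rightarrow)$ direction I would start from any extension $\hat{f}\colon G\to H$ of $f$ and compute, for $g\in G$ and $s\in S$ (noting that $g f^{-1}(s) g^{-1}\in R$ since $R\trianglelefteq G$),
\[
(f\alpha(g)f^{-1})(s) \;=\; f\bigl(g f^{-1}(s) g^{-1}\bigr) \;=\; \hat{f}\bigl(g f^{-1}(s) g^{-1}\bigr) \;=\; \hat{f}(g)\, s\, \hat{f}(g)^{-1} \;=\; \beta(\hat{f}(g))(s),
\]
where the second equality uses that $\hat{f}$ is a homomorphism agreeing with $f$ on $R$. This forces $f\alpha(g)f^{-1}=\beta(\hat{f}(g))\in H^{*}$ for every $g$, giving $f^{*}(G^{*})\subseteq H^{*}$; the reverse inclusion follows by the symmetric computation applied to $\hat{f}^{-1}$ and $f^{-1}$. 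The same computation simultaneously pins down $\hat{f}(g)=\beta^{-1}(f\alpha(g)f^{-1})$ and its uniqueness via injectivity of $\beta$, matching the claimed formula.

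I do not anticipate a substantial obstacle: the whole argument is organized around the single identity $f\alpha(r)f^{-1}=\beta(f(r))$ for $r\in R$, which is what transports the conjugation action of $G$ on $R$ to the conjugation action of $H$ on $S$ through $f$. The trivial-centralizer hypotheses enter only to make $\alpha$ and $\beta$ faithful, so that $\beta^{-1}$ is defined on $H^{*}$ and $\hat{f}$ is uniquely determined.
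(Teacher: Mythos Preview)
Your argument is correct. The key identity $f\alpha(r)f^{-1}=\beta(f(r))$ for $r\in R$ is indeed the heart of the matter, and you verify it properly; both directions and the uniqueness clause are handled cleanly, with the trivial-centralizer hypotheses used exactly where needed (faithfulness of $\alpha$ and $\beta$).

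There is nothing to compare against: the paper does not prove this lemma but merely quotes it from \cite[Lemma~3.1]{BCGQ} (cf.\ \cite[\S3]{CH03}). Your write-up is essentially the standard proof one finds in those sources.
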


Let $G, H$ be Fitting-free groups. Let $R = \Soc(G)$ and $S = \Soc(H)$. For diagonal products $\phi, \psi$ of $R, S$ respectively, we denote the set of isomorphisms respecting the diagonal products of the socle as:
\[
\text{ISOds}(G, H; \phi, \psi) = \{ \chi \in \text{ISO}(G, H) : \chi|_{\Soc(G)} \in \text{ISOd}(R, S; \varphi, \psi) \}.
\]
In particular, we have the following:

\begin{corollary}[{\cite[Corollary~4.1]{BCGQ}}] \label{cor:BCGQ4.1}
Let $G, H$ be Fitting-free, $\phi$ a diagonal product of $\Soc(G)$, and $\mathcal{D}$ the set of diagonal products of $\Soc(H)$. Then:
\[
\text{ISO}(G, H) = \bigcup_{\psi \in \mathcal{D}} \text{ISOds}(G, H; \varphi, \psi).
\]
\end{corollary}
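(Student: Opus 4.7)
The plan is to prove the two inclusions separately, with the nontrivial direction relying on (i) the fact that $\Soc(G)$ is a characteristic subgroup with trivial centralizer in a Fitting-free group, and (ii) the structural decomposition of the socle as a direct product of non-Abelian simple groups grouped by isomorphism type, together with the preceding lemma that already decomposes $\text{ISOp}$ as a union of $\text{ISOd}$'s over the choice of diagonal product on the target.

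First I would handle the easy inclusion $\bigcup_{\psi \in \mathcal{D}} \text{ISOds}(G,H;\varphi,\psi) \subseteq \text{ISO}(G,H)$, which is immediate from the definition of $\text{ISOds}$ as a subset of $\text{ISO}(G,H)$ consisting of those isomorphisms whose restriction to $\Soc(G)$ respects the given diagonal products.

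For the reverse inclusion, I would fix an arbitrary $\chi \in \text{ISO}(G,H)$ and argue that $\chi$ lies in some $\text{ISOds}(G,H;\varphi,\psi)$. Since the socle is characteristic in any group, $\chi$ restricts to an isomorphism $\chi|_{\Soc(G)} \colon \Soc(G) \to \Soc(H)$. Because $G$ and $H$ are Fitting-free, $\Soc(G)$ and $\Soc(H)$ are direct products of non-Abelian simple groups (each simple factor arising as a minimal normal subgroup), so I can group the factors by isomorphism type to write $\Soc(G) = \prod_{i=1}^{r}\prod_{j=1}^{k_i}V_{i,j}$ and $\Soc(H) = \prod_{i=1}^{r}\prod_{j=1}^{k_i}U_{i,j}$ with $V_{i,j} \cong U_{i,j} \cong T_i$ and $T_i \not\cong T_{i'}$ for $i \ne i'$. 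Any isomorphism between these products must send each simple direct factor $V_{i,j}$ to some simple direct factor of $\Soc(H)$ isomorphic to $T_i$, and hence to some $U_{i,j'}$; this shows $\chi|_{\Soc(G)} \in \text{ISOp}((\Soc(G),\mathcal{V}),(\Soc(H),\mathcal{U}))$. Applying the preceding lemma that expresses $\text{ISOp}$ as $\bigcup_{\psi \in \mathcal{D}} \text{ISOd}(\Soc(G),\Soc(H);\varphi,\psi)$ then yields $\psi \in \mathcal{D}$ with $\chi|_{\Soc(G)} \in \text{ISOd}(\Soc(G),\Soc(H);\varphi,\psi)$, and by definition of $\text{ISOds}$ this places $\chi$ in $\text{ISOds}(G,H;\varphi,\psi)$.

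The main obstacle, minor as it is, is the uniqueness argument that forces $\chi$ to respect the grouping of simple factors by isomorphism type; this is needed to land in $\text{ISOp}$ before one can invoke the preceding lemma. This is handled by the observation that the simple direct factors are determined (up to order) by the abstract group structure of the socle, so non-isomorphic types cannot be interchanged. Lemma~\ref{CharacterizeSemisimple} is not actually needed for the corollary itself (it enters only when one later wants to reconstruct $\chi$ from $\chi|_{\Soc(G)}$ via the conjugation representation), so I would not invoke it here. The remainder of the argument is bookkeeping to match the definitions of $\text{ISOds}$, $\text{ISOd}$, and $\text{ISOp}$.
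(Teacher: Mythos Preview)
The paper does not supply its own proof of this corollary; it is simply quoted from \cite{BCGQ}. Your argument is correct and is essentially the intended one: the easy inclusion is definitional, and for the other you use that $\Soc$ is characteristic, that in a Fitting-free group the socle decomposes uniquely as a direct product of non-Abelian simple factors (so any isomorphism of socles lands in $\text{ISOp}$), and then you invoke the preceding lemma (\cite[Lemma~2.2]{BCGQ}) to pass from $\text{ISOp}$ to some $\text{ISOd}$, hence to $\text{ISOds}$. Your remark that Lemma~\ref{CharacterizeSemisimple} is not needed here is also accurate.
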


\begin{lemma}[{\cite[Lemma~4.1]{BCGQ}}] \label{lem:4.1}
Let $H, \mathcal{D}$ be defined as in Corollary~\ref{cor:BCGQ4.1}. We have that $|\mathcal{D}| \leq |H|^{2}$ .
\end{lemma}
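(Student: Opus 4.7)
The plan is to enumerate $|\mathcal{D}|$ exactly and then bound it by $|H|^2$ using a classical fact about outer automorphism groups. Since $H$ is Fitting-free, $\Soc(H)$ is a direct product of non-abelian finite simple groups; gathering isomorphic factors we write $\Soc(H) = \prod_{i=1}^{r} \prod_{j=1}^{k_i} U_{i,j}$ with each $U_{i,j} \cong T_i$, where $T_1, \ldots, T_r$ are pairwise non-isomorphic non-abelian finite simple groups.

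First, I would reduce the count to a block-by-block count. A diagonal product of $\Soc(H)$ is determined by an independent choice of diagonal $\psi_i : T_i \to \prod_{j=1}^{k_i} U_{i,j}$ for each $i \in [r]$, so $|\mathcal{D}| = \prod_{i=1}^{r} |\mathcal{D}_i|$ where $\mathcal{D}_i$ is the set of diagonals of the $i$-th block. Next I would argue that each diagonal $\psi_i$ is uniquely specified by its component maps $\pi_j \circ \psi_i : T_i \to U_{i,j}$, and that each such component is in fact an isomorphism. Indeed, subdirectness of the image forces each $\pi_j \circ \psi_i$ to be surjective, and since $T_i$ is non-abelian simple (hence its only normal subgroups are $1$ and $T_i$) and the component is nontrivial, the kernel is trivial. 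Conversely, any tuple of isomorphisms $(\alpha_j : T_i \to U_{i,j})_{j=1}^{k_i}$ assembles into a diagonal, and the resulting map is automatically injective (any single $\alpha_j$ already distinguishes elements). Therefore
\[
|\mathcal{D}_i| = \prod_{j=1}^{k_i} |\Iso(T_i, U_{i,j})| = |\Aut(T_i)|^{k_i}, \qquad |\mathcal{D}| = \prod_{i=1}^{r} |\Aut(T_i)|^{k_i}.
\]

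For the final quantitative step, I would invoke the inequality $|\operatorname{Out}(T)| \leq |T|$ for every non-abelian finite simple group $T$---a loose consequence of Schreier's conjecture, known to hold via the classification of finite simple groups---so that $|\Aut(T_i)| = |T_i| \cdot |\operatorname{Out}(T_i)| \leq |T_i|^2$. Plugging in yields
\[
|\mathcal{D}| \leq \prod_{i=1}^{r} |T_i|^{2 k_i} = |\Soc(H)|^2 \leq |H|^2,
\]
where the last step uses $\Soc(H) \leq H$. The only non-elementary ingredient is the bound on $|\operatorname{Out}(T)|$; everything else is a direct unwinding of the definitions of \emph{diagonal} and \emph{diagonal product}. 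In fact, much tighter (polylogarithmic in $|T|$) bounds on $|\operatorname{Out}(T)|$ are available, so the quadratic upper bound $|H|^2$ is far from tight---which is precisely why appealing to CFSG here is painless rather than a real obstacle.
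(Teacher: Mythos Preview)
Your proof is correct and is essentially the standard argument; the paper does not prove this lemma at all but merely cites \cite[Lemma~4.1]{BCGQ}, whose proof proceeds exactly as you outline---count diagonals componentwise to get $|\mathcal{D}| = \prod_i |\Aut(T_i)|^{k_i}$, then bound each $|\Aut(T_i)| \leq |T_i|^2$ via $|\operatorname{Out}(T_i)| \leq |T_i|$.
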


Babai, Codenotti, and Qiao established the following lemma, relating permutational isomorphisms of $G/\text{PKer}(G)$ and $H/\text{PKer}(H)$ to isomorphisms of $G$ and $H$.

\begin{lemma}[{\cite[Lemma~6.3.10]{CodenottiThesis}}] \label{lem:6.3.10}
Let $G, H$ be Fitting-free groups given by their Cayley tables. Suppose that $\Soc(G) \cong \Soc(H)$. Let $P := G/\text{PKer}(G)$ and $Q := H/\text{PKer}(H)$. Let $\varphi, \psi$ be diagonal products of $G, H$ respectively. Then:
\begin{enumerate}[label=(\alph*)]
\item  Every isomorphism $\chi \in \text{ISOds}(G, H; \varphi, \psi)$ is determined by the permutational isomorphism it induces between $P$ and $Q$; and
\item Given a permutational isomorphism $f \in \text{PISO}(P, Q)$, we can check whether it arises as the action of some $\chi \in \text{ISOds}(G, H; \varphi, \psi)$, and if so, find that unique $\chi$, in polynomial time.
\end{enumerate}
\end{lemma}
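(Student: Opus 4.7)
The plan is to combine Lemma~\ref{CharacterizeSemisimple} (applied to $R=\Soc(G)$ and $S=\Soc(H)$) with the rigidity imposed by the diagonal products $\varphi,\psi$. Because $G$ is Fitting-free, the centralizer $C_G(\Soc(G))$ is trivial, so conjugation gives a faithful embedding $\alpha:G\hookrightarrow\Aut(\Soc(G))$, and likewise for $H$. Consequently any isomorphism $G\to H$ is completely determined by its restriction to the socle, and a socle isomorphism extends exactly when Lemma~\ref{CharacterizeSemisimple}'s criterion holds. So everything reduces to understanding how the induced permutational isomorphism $P\to Q$ controls that socle restriction.

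For part~(a), suppose $\chi,\chi'\in\text{ISOds}(G,H;\varphi,\psi)$ induce the same $f\in\text{PISO}(P,Q)$. Since $P$ (resp.\ $Q$) is by definition the action of $G$ (resp.\ $H$) on the set of simple direct factors of its socle, $f$ encodes a common bijection $\pi$ between those two sets of simple factors. On a single simple factor $V_{i,j}\subseteq\Soc(G)$, the diagonal-respecting condition reads $\chi|_{V_{i,j}}\circ\varphi_{i,j}=\psi_{i,\pi(j)}$; since $\varphi_{i,j}:T_i\to V_{i,j}$ is an isomorphism, this forces $\chi|_{V_{i,j}}=\psi_{i,\pi(j)}\circ\varphi_{i,j}^{-1}$, and the same formula forces $\chi'|_{V_{i,j}}$, so the two maps agree on all of $\Soc(G)$. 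Lemma~\ref{CharacterizeSemisimple} then yields $\chi=\chi'$ on $G$.

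Part~(b) is the constructive version of~(a). Given $f$, extract the bijection $\pi$ between the sets of simple socle factors and build a candidate $\chi_0:\Soc(G)\to\Soc(H)$ factorwise via $\chi_0|_{V_{i,j}}:=\psi_{i,\pi(j)}\circ\varphi_{i,j}^{-1}$; every composition and inverse here is polynomial-time computable in the Cayley-table model. Next, apply Lemma~\ref{CharacterizeSemisimple}'s criterion to check whether $\chi_0$ is a permutational isomorphism between the conjugation images $G^*\leq\Aut(\Soc(G))$ and $H^*\leq\Aut(\Soc(H))$, by testing element by element that conjugation by $\chi_0$ carries $G^*$ onto $H^*$; this is again polynomial time. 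If the check succeeds, the unique extension $\hat\chi$ is given by the explicit formula $\hat\chi=\alpha\chi_0^*\beta^{-1}$ from the same lemma; otherwise no $\chi\in\text{ISOds}(G,H;\varphi,\psi)$ induces $f$. The main obstacle is really just this verification step: while the socle restriction is forced once $f$ and $\varphi,\psi$ are fixed, whether it extends globally is a genuine compatibility condition between $G^*$ and $H^*$ that must be checked through Lemma~\ref{CharacterizeSemisimple}'s criterion, and uniqueness of the resulting extension when it exists is exactly the content of part~(a).
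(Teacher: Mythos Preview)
Your proposal is correct and follows essentially the same approach as the paper. The paper itself does not prove Lemma~\ref{lem:6.3.10} (it is cited from \cite{CodenottiThesis}), but your argument matches the construction in the paper's proof of the parallelized version, Lemma~\ref{lem:6.3.10Parallel}: build the forced socle isomorphism factorwise via $\chi_f|_{V_{i,j}}=\psi_{i,\pi(j)}\circ\varphi_{i,j}^{-1}$ and then invoke Lemma~\ref{CharacterizeSemisimple} to test extendability and recover the unique extension.
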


We now parallelize Lemma~\ref{lem:6.3.10}(b).

\begin{lemma} \label{lem:6.3.10Parallel}
Let $G, H$ be Fitting-free groups given by their Cayley tables. Suppose that $\Soc(G) \cong \Soc(H)$. Let $P := G/\text{PKer}(G)$ and $Q := H/\text{PKer}(H)$. Let $\varphi, \psi$ be diagonal products of $G, H$ respectively. Given a permutational isomorphism $f \in \text{PISO}(P, Q)$, we can check whether it arises as the action of some $\chi \in \text{ISOds}(G, H; \varphi, \psi)$, and if so, find that unique $\chi$, in \FOPLL.
\end{lemma}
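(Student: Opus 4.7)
The plan is to parallelize the polynomial-time procedure of \cite[Lemma~6.3.10(b)]{CodenottiThesis}, verifying at each step that the work fits into $\FOPLL$ when $G, H$ are given by their Cayley tables. The key observation is that every operation involved is either (i) a Cayley-table computation, which is $\AC^0$, (ii) a computation on the $O(\log n)$-degree permutation groups $P, Q$ acting on the simple factors of the socles, which is in $\FOPLL$ by \Lem{PermutationGroupsNC}, or (iii) a parallel search of polynomial size over $G$ or $H$, which adds only constant depth. Crucially, no recursion or enumeration of exponentially many candidates is needed: by Lemma~6.3.10(a), $\chi$ is already uniquely determined by $f$.

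First, I would construct the candidate restriction $\chi|_R : R \to S$ (with $R := \Soc(G)$ and $S := \Soc(H)$) from $f$ and the fixed diagonal products $\varphi, \psi$. The permutation of simple factors encoded by $f$, combined with $\varphi, \psi$, uniquely determines $\chi|_R$ via $\chi|_R = \psi \circ \sigma_f \circ \varphi^{-1}$, where $\sigma_f$ permutes the tuple coordinates according to the action of $f$ on the set of simple factors. For each $r \in R$ in parallel, this is a constant-depth sequence of table lookups, hence $\AC^0$. Second, I would apply \Lem{CharacterizeSemisimple} to extend $\chi|_R$ to $\chi : G \to H$. Because $G, H$ are Fitting-free, $C_G(R) = 1$ and $C_H(S) = 1$, so $G, H$ embed in $\Aut(R), \Aut(S)$ via conjugation; an extension exists (and is unique) iff $\chi|_R$ induces a permutational isomorphism of the images $G^* := \alpha(G)$ and $H^* := \beta(H)$. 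Concretely, for each $g \in G$ in parallel, compute the conjugation action $g^* : r \mapsto grg^{-1}$ ($\AC^0$), form the composition $\chi|_R \circ g^* \circ \chi|_R^{-1} : S \to S$ ($\AC^0$), and search in parallel over all $h \in H$ for the unique element whose conjugation action on $S$ matches; by trivial centralizer, at most one such $h$ exists. If found, set $\chi(g) := h$; if no such $h$ exists for some $g$, reject. Finally, verify in $\AC^0$ that the resulting table $\chi$ is a homomorphism by checking $\chi(g_1 g_2) = \chi(g_1)\chi(g_2)$ for every pair $g_1, g_2 \in G$.

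The main subtlety I expect is bookkeeping rather than depth: we must ensure the auxiliary inputs (the decomposition of $\Soc(G)$ into simple factors underlying $\varphi$, the faithful conjugation representation $\alpha : G \to G^* \leq \Aut(R)$, and the analogous data for $H$) are available in a form supporting $\AC^0$ table lookup. Each is computable upstream in $\FOPLL$, either by direct $\AC^0$ Cayley-table evaluation or via \Lem{PermutationGroupsNC} applied to the permutation action of degree $O(\log |G|)$ on the simple factors. With these in hand, the core construction and verification of $\chi$ from $f$ all lie in $\AC^0 \subseteq \FOPLL$, proving the claim.
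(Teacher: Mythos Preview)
Your proposal is correct and follows essentially the same approach as the paper: construct the candidate $\chi|_{\Soc(G)}$ from $f$ and the diagonal products, then use \Lem{CharacterizeSemisimple} to check whether it extends, all via Cayley-table lookups and parallel search over $G$ and $H$. One small technical point: your formula $\chi|_R = \psi \circ \sigma_f \circ \varphi^{-1}$ is not literally well-typed, since a diagonal product $\varphi$ is an \emph{embedding} $\prod_i T_i \hookrightarrow \prod_{i,j} V_{i,j}$, not a bijection onto $R$; the paper instead uses the coordinate projections $\varphi_{ij} \colon T_i \to V_{i,j}$ and $\psi_{ij} \colon T_i \to U_{i,j}$ (which \emph{are} isomorphisms) to set $\chi_f|_{V_{i,j}} = \psi_{i\ell} \circ \varphi_{ij}^{-1}$ where $f(V_{i,j}) = U_{i,\ell}$. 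This is exactly the content you intend, and it is indeed an $\AC^0$ table lookup once the decomposition into simple factors is in hand. Your explicit description of the extension step---for each $g \in G$ transport its conjugation action through $\chi|_R$ and search $H$ in parallel for the unique matching $h$---is in fact more detailed than the paper's one-line appeal to \Lem{PermutationGroupsNC}, and makes the $\AC^0$ bound for that step transparent.
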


\begin{proof}
We write:
\begin{align*}
&\Soc(G) = \prod_{i=1}^{r} \prod_{j=1}^{k_{i}} V_{i,j}, \text{ and } \,
\Soc(H) = \prod_{i=1}^{r} \prod_{j=1}^{k_{i}} U_{i,j},
\end{align*}

\noindent where for all $i, j$, $V_{i,j} \cong U_{i,j} \cong T_{i}$, where $T_{i}$ is non-Abelian simple. Let $\varphi = \prod_{i=1}^{r} \varphi_{i}$, where $\varphi_{i} : T_{i} \xhookrightarrow{} \prod_{j=1}^{k_{i}} V_{i,j}$ is a diagonal. Similarly define $(\psi_{i})_{i=1}^{r}$ for $H$.  Now for all $i, j$, let $\varphi_{ij} : T_{i} \to V_{i,j}$ be the projection of $\varphi_{i}$ to the $j$th coordinate of $\prod_{h=1}^{k_{i}} V_{i,h}$. Similarly, let $\psi_{ij} : T_{i} \to U_{i,j}$ be the projection of $\psi_{i}$ to the $j$th coordinate of $\prod_{h=1}^{k_{i}} U_{i,h}$. In particular, observe that $\varphi_{ij}$ is an isomorphism between $T_{i}$ and $V_{i,j}$, and similarly $\psi_{ij}$ is an isomorphism between $T_{i}$ and $U_{i,j}$.

We construct $\chi_{f} : \Soc(G) \to \Soc(H)$ as follows. For every $i, j$, let $\ell$ such that $f(V_{i,j}) = U_{i,\ell}$. Set $\chi_{f}|_{V_{i,j}} = \psi_{ij}^{-1} \circ \psi_{i\ell}$. Babai, Codenotti, and Qiao established that $\chi_{f} \in \text{ISOd}(\Soc(G), \Soc(H); \varphi, \psi)$. By Lemma~\ref{PermutationGroupsNC}, we can check in $\FOPLL$ whether $\chi_{f}$ extends to an isomorphism of $G$ and $H$.
\end{proof}

\subsection{Fitting-free groups with unique minimal normal subgroup}

We now turn to proving Theorem~\ref{thm:FittingFree}. We first show that if $H$ is not Fitting-free, we can easily distinguish $G$ from $H$.

\begin{lemma} \label{lem:IsFittingFree}
We can decide if a group $G$ is Fitting-free in $\textsf{AC}^{0}$.
\end{lemma}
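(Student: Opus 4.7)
The plan is to reduce Fitting-freeness to a bounded-quantifier first-order condition over the Cayley table. The central observation, which I would establish as the main step, is the following characterization: a finite group $G$ is Fitting-free if and only if for every $g \in G \setminus \{e\}$ there exists $z \in G$ such that $[g, g^z] \neq e$, where $g^z := z^{-1} g z$. Granting this, the lemma is immediate: the predicate ``$[g,g^z] = e$'' is decided by constantly many Cayley-table lookups, so the whole formula $\forall g \neq e\ \exists z\ [g, g^z] \neq e$ is computed by an unbounded-fan-in $\textsf{AND}$ of unbounded-fan-in $\textsf{OR}$s, each with $O(n)$ inputs, giving a $\textsf{DLOGTIME}$-uniform circuit of constant depth and polynomial size.

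For the characterization, the easy direction is: if $N \trianglelefteq G$ is Abelian and nontrivial and $g \in N \setminus \{e\}$, then every conjugate $g^z$ still lies in the Abelian group $N$, so $[g, g^z] = e$, violating the existence condition. For the converse, I would start from a nonidentity $g$ with $[g, g^z] = e$ for every $z \in G$. Using the conjugation identity $[g^x, g^y] = [g, g^{y x^{-1}}]^x$, I conclude that all conjugates of $g$ pairwise commute. Hence the normal closure $\langle g^G \rangle$ is generated by pairwise commuting elements and is therefore Abelian. Since $\langle g^G \rangle$ is a nontrivial Abelian normal subgroup of $G$, the Fitting subgroup $F(G)$ contains it and is nontrivial, so $G$ is not Fitting-free.

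With the characterization in hand, the circuit is routine: for each pair $(g,z)$ in parallel, compute $[g, g^z] = g^{-1} \cdot z^{-1} g^{-1} z \cdot g \cdot z^{-1} g z$ using a constant number of Cayley-table lookups, compare to the identity, take an unbounded $\textsf{OR}$ over $z$, then an unbounded $\textsf{AND}$ over the nonidentity $g$. I do not foresee any real obstacle here; the entire content of the lemma lies in noticing that Fitting-freeness is expressible as a single $\forall\exists$ sentence over the Cayley table, with no need to compute derived subgroups, the Fitting subgroup, or the solvable radical explicitly. The one step that deserves care in writing is the invocation of $[g^x, g^y] = [g, g^{y x^{-1}}]^x$ to propagate commutativity of $g$ with its conjugates to pairwise commutativity of the conjugates themselves.
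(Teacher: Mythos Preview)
Your proof is correct and is essentially the same approach as the paper's: both argue that $G$ is Fitting-free iff the normal closure $\langle g^{G} \rangle$ of every nonidentity $g$ is non-Abelian, and both test Abelianness of $\langle g^{G} \rangle$ by checking pairwise commutativity of the conjugates of $g$. The paper phrases this as $\forall x\,\exists g_{1}\,\exists g_{2}\,[g_{1}xg_{1}^{-1}, g_{2}xg_{2}^{-1}] \neq e$, while you use the commutator identity $[g^{x},g^{y}] = [g,g^{yx^{-1}}]^{x}$ to collapse this to $\forall g\,\exists z\,[g,g^{z}] \neq e$, saving one quantifier; you also justify explicitly why pairwise commutativity of a generating set forces the generated subgroup to be Abelian, a step the paper leaves implicit.
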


\begin{proof}
$G$ is Fitting-free precisely if it has no Abelian normal subgroups. For each group element $x \in G$, in parallel, we write down generators $\{ gxg^{-1} : g \in G \}$ for $\ncl_{G}(x)$. This step is $\textsf{AC}^{0}$-computable. Now $\ncl_{G}(x)$ is Abelian if and only if $g_{1}xg_{1}^{-1}, g_{2}xg_{2}^{-1}$ commute for all $g_{1}, g_{2} \in G$. Thus, we may test in $\textsf{AC}^{0}$ whether $\ncl_{G}(x)$ is Abelian. The result follows.
\end{proof}

Next, we show that if $G, H$ are Fitting-free groups with a unique minimal normal subgroup, then we can decide isomorphism between $G$ and $H$ in parallel. Babai, Codenotti, and Qiao established that in this case, $|\Aut(G)| \leq |G|^{O(1)}$ and that $\text{Iso}(G, H)$ can be listed in time $|G|^{O(1)}$ \cite[Cor.~1]{BCQ}=\cite[Theorem~6.4.1]{CodenottiThesis}. We parallelize the latter algorithm as follows.

\begin{proposition}[{cf. \cite[Cor.~1(b)]{BCQ}=\cite[Theorem~6.4.1(b)]{CodenottiThesis}}] \label{thm:6.4.1}
Let $G, H$ be Fitting-free groups of order $n$ with a unique minimal normal subgroup. We can list $\text{ISO}(G, H)$ with an $\textsf{AC}$ circuit of depth $O((\log n)(\log \log n)^2)$ and size $\poly(n)$.
\end{proposition}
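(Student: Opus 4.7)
The plan is to follow Codenotti's proof of \cite[Theorem~6.4.1]{CodenottiThesis} while carefully tracking the parallel cost of each step, exploiting the structural fact that under our hypotheses $\Soc(G) = T^{k}$ for some non-Abelian simple group $T$ and some $k \in O(\log n)$ (since $|T| \geq 60$ forces $60^{k} \leq n$). First I would use Lemma~\ref{lem:IsFittingFree} to confirm in $\textsf{AC}^{0}$ that both $G$ and $H$ are Fitting-free, then compute $\Soc(G), \Soc(H)$ and their decompositions into simple factors $V_{1},\ldots,V_{k}$ and $U_{1},\ldots,U_{k}$ (in low depth, using the $\textsf{DSPACE}(\log|G|)$ decomposition of \cite[Lem.~6.11 in the arXiv version]{GrochowLevetWL} together with $\FOPLL$-verification), confirm that each socle is the unique minimal normal subgroup of the corresponding group, verify $\Soc(G) \cong \Soc(H)$, and construct the induced transitive permutation groups $P := G/\text{PKer}(G) \leq S_{k}$ and $Q := H/\text{PKer}(H) \leq S_{k}$ of degree $k \in O(\log n)$.

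Next, invoking Corollary~\ref{cor:BCGQ4.1}, I fix one diagonal product $\varphi$ of $\Soc(G)$ and write
\[
\text{ISO}(G, H) = \bigcup_{\psi \in \mathcal{D}} \text{ISOds}(G, H; \varphi, \psi),
\]
where $\mathcal{D}$ is the set of diagonal products of $\Soc(H)$, so $|\mathcal{D}| \leq |H|^{2} \leq n^{2}$ by Lemma~\ref{lem:4.1}. In parallel over all $\psi \in \mathcal{D}$, I proceed in two stages. First I list $\text{PISO}(P, Q)$ using Theorem~\ref{thm:TransitivePermIso}, which (since $P, Q$ are transitive of degree $k \in O(\log n)$) runs by an $\textsf{AC}$ circuit of depth $O((\log n)(\log \log n)^{2})$ and size $\poly(n)$ and produces at most $|P| \cdot c^{2k} \in \poly(n)$ many permutational isomorphisms. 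Second, in parallel over each $f \in \text{PISO}(P, Q)$, I apply Lemma~\ref{lem:6.3.10Parallel} to decide in $\FOPLL$ whether $f$ is induced by some $\chi \in \text{ISOds}(G, H; \varphi, \psi)$ and, if so, output the unique such $\chi$. By Lemma~\ref{lem:6.3.10}(a), the collection of $\chi$ returned across all successful $(\psi, f)$ pairs is exactly $\text{ISO}(G, H)$.

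For the complexity analysis, the sequential composition is (i) the structural setup ($\FOPLL$), (ii) listing $\text{PISO}(P, Q)$ at depth $O((\log n)(\log \log n)^{2})$, and (iii) the per-$(\psi, f)$ extension check at $\FOPLL$ depth; since depths add, the total depth is $O((\log n)(\log \log n)^{2})$. The overall size is at most $|\mathcal{D}| \cdot |\text{PISO}(P, Q)| \cdot \poly(n) \leq n^{2} \cdot \poly(n) \cdot \poly(n) = \poly(n)$, matching the claim. The main obstacle I anticipate is not the invocation of the heavy machinery (Theorem~\ref{thm:TransitivePermIso} and Lemma~\ref{lem:6.3.10Parallel} drop in essentially verbatim), but rather implementing the preparatory structural computations---building $\Soc(G)$, identifying its simple factors and writing them down so their permutation by $G$ is explicit, then constructing $\text{PKer}(G)$ as the kernel of the $G$-action on those factors and exhibiting $P$ as a permutation group on the $k$ factors---uniformly and in low enough depth that they do not dominate the bound; once those bookkeeping tasks are verified to fit inside $\FOPLL$, the depth/size accounting above goes through directly.
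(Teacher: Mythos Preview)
Your proposal is correct and follows essentially the same approach as the paper: compute the socle and its simple factors via \cite{GrochowLevetWL}, form $P=G/\text{PKer}(G)$ and $Q=H/\text{PKer}(H)$ as transitive permutation groups of degree $O(\log n)$, fix a diagonal $\varphi$ and range over the $\leq n^{2}$ diagonals $\psi$ (Corollary~\ref{cor:BCGQ4.1}, Lemma~\ref{lem:4.1}), list $\text{PISO}(P,Q)$ via Theorem~\ref{thm:TransitivePermIso}, and test each $f$ for extension via Lemma~\ref{lem:6.3.10Parallel}. The paper's proof is slightly terser on the preparatory bookkeeping (it simply asserts that $\text{PKer}(G)$ and $G/\text{PKer}(G)$ are $\textsf{AC}^{0}$-computable once the socle factors are in hand), but your more explicit accounting of these steps and your observation that transitivity of $P,Q$ follows from uniqueness of the minimal normal subgroup are both in line with the paper's argument.
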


\begin{proof}
Grochow and Levet \cite{GrochowLevetWL} showed that we can identify the non-Abelian simple factors of the socle in $\textsf{L}$. Fix a diagonal $\varphi$ for $\Soc(G)$. Let $\mathcal{D}$ be the set of diagonals for $\Soc(H)$. In parallel, consider every possible diagonal $\psi \in \mathcal{D}$. By Corollary~\ref{cor:BCGQ4.1}, it suffices to find:
\[
\bigcup_{\psi \in \mathcal{D}} \text{ISOds}(G, H; \varphi, \psi).
\]
By Lemma~\ref{lem:4.1}, $|\mathcal{D}| \leq |H|^2$. For clarity of the argument, we will consider one such $\psi$. Now given the set of elements in $\Soc(G)$ and $\Fac(\Soc(G))$, we can easily compute $\text{PKer}(G)$ in $\textsf{AC}^{0}$. Furthermore, we can compute $G/\text{PKer}(G)$ in $\textsf{AC}^{0}$. Similarly, we can compute $H/\text{PKer}(H)$ in $\textsf{AC}^{0}$. By Theorem~\ref{thm:5.2.1}, we can list all $f \in \text{PISO}(P, Q)$ using an $\textsf{AC}$ circuit of depth $O((\log n)(\log \log n)^2)$ and size $\poly(n)$. By Lemma~\ref{lem:6.3.10Parallel}, for each such $f$, we can test in $\FOPLL$ if it extends to some $\chi \in \text{ISO}(G, H; \varphi, \psi)$. Thus, we have used an $\textsf{AC}$ circuit of depth $O((\log n)(\log \log n)^2)$ and size $\poly(n)$, as desired.
\end{proof}

\subsection{Embedding into the Automorphism Group of the Socle}
The mathematics in this section is the same as in \cite{BCGQ, BCQ}; the new result we need is Observation~\ref{obs:ComputePermutationRepresentation}, but to even state the result and how it will be used, we must recall much of the previously established mathematics.

Let $G, H$ be Fitting-free groups, and suppose that $\Soc(G) \cong \Soc(H)$. Consider the decompositions of $\Soc(G)$ and $\Soc(H)$ into direct products of minimal normal subgroups:
\begin{align*}
\Soc(G) = \prod_{i=1}^{d} \prod_{j=1}^{z_{i}} N_{i,j} \text{ and } \, \Soc(H) = \prod_{i=1}^{d} \prod_{i=1}^{z_{i}} M_{i,j}.
\end{align*}

Suppose that for all $i, j$, $N_{i,j} \cong M_{i,j} \cong K_{i}$, where $K_{i}$ is characteristically simple. That is, $K_{i} \cong T_{i}^{t_{i}}$ for some non-Abelian simple group $T_{i}$ and some $t_{i} \in \mathbb{N}$. For notational convenience, write $\mathcal{K} = \prod_{i=1}^{d} K_{i}^{z_{i}}$. Let $\gamma, \xi$ be diagonal products of $(N_{i,j}), (M_{i,j})$ respectively. Let $\alpha_{\gamma} \in \text{ISOd}(\Soc(G), \mathcal{K}; \gamma, \Delta)$ and $\beta_{\xi} \in \text{ISOd}(\Soc(H), \mathcal{K}; \xi, \Delta)$, where $\Delta$ is the standard diagonal product.	By Lemma~\ref{CharacterizeSemisimple}, the conjugation action of $G$ and $H$ on their socles yields the respective embeddings $\alpha_{\gamma}^{*} : G \xhookrightarrow{} \Aut(\mathcal{K})$ and $\beta_{\xi}^{*} : H \xhookrightarrow{} \Aut(\mathcal{K})$. Note that in the multipliction table model, it is easy to compute $\alpha_{\gamma}^{*}$ and $\beta_{\xi}^{*}$. 

\begin{observation} \label{obs:ComputePermutationRepresentation}
Let $G$ be a Fitting-free group, given by its multiplication table. Suppose we are also given the direct product decomposition of $\Soc(G)$ into non-Abelian simple factors, where each simple factor is specified by its elements. Fix $\alpha_{\gamma} \in \text{ISOd}(\Soc(G), \mathcal{K}; \gamma, \Delta)$. Then in $\textsf{AC}^{0}$, we can compute $\alpha_{\gamma}^{*}$.
\end{observation}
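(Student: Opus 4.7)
The plan is to unfold the definition of $\alpha_\gamma^*$ and show that evaluating it is just a bounded composition of table lookups. Recall that $\alpha_\gamma^*$ is the homomorphism $G \to \Aut(\mathcal{K})$ induced by conjugation of $G$ on $\Soc(G)$, transported through the isomorphism $\alpha_\gamma \colon \Soc(G) \to \mathcal{K}$. Explicitly, for $g \in G$ and $x \in \mathcal{K}$,
\[
\alpha_\gamma^*(g)(x) \;=\; \alpha_\gamma\bigl(g \cdot \alpha_\gamma^{-1}(x) \cdot g^{-1}\bigr).
\]
I will output $\alpha_\gamma^*$ as a table whose $(g,x)$-entry is this value; since conjugation by $G$ on the normal subgroup $\Soc(G)$ lies in $\Aut(\Soc(G))$ and $\alpha_\gamma$ is an isomorphism, each row is automatically an element of $\Aut(\mathcal{K})$, so no post-verification is needed.

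The ingredients are all available in constant depth. The multiplication table of $G$ is part of the input, so for any fixed pair $(a,b) \in G \times G$ we can read off $ab$ via a single table lookup (implemented, as standard, by an unbounded fan-in OR over all candidate values $z \in G$ of the conjunction ``the $(a,b)$ entry equals $z$''). Inverses $g \mapsto g^{-1}$ are therefore also obtained in constant depth by selecting the unique $h \in G$ with $gh$ equal to the identity, which is likewise $\textsf{AC}^0$. The bijection $\alpha_\gamma$ is fixed and hence may be regarded as a table with $|\Soc(G)| \leq n$ entries; both $\alpha_\gamma$ and $\alpha_\gamma^{-1}$ are then evaluated by a single lookup.

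Putting this together, for each pair $(g,x) \in G \times \mathcal{K}$ in parallel I carry out four constant-depth operations: (i) look up $s := \alpha_\gamma^{-1}(x) \in \Soc(G)$; (ii) look up $g s$ from the Cayley table; (iii) look up $(g s) g^{-1}$ from the Cayley table; (iv) look up $\alpha_\gamma$ of the result. The composition has $O(1)$ depth, and running $|G|\cdot|\mathcal{K}| \leq n^{2}$ such computations in parallel keeps the total size polynomial in $n$, so the whole procedure is in $\textsf{DLOGTIME}$-uniform $\textsf{AC}^0$.

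There is no real obstacle here beyond bookkeeping: the content of the observation is simply that, because $G$ is supplied by its Cayley table and $\alpha_\gamma$ is treated as a fixed bijection supplied as (or stored as) a table, every arithmetic operation needed to evaluate $\alpha_\gamma^*(g)(x)$ reduces to a bounded number of table lookups, each of which is $\textsf{AC}^0$-computable with unbounded fan-in.
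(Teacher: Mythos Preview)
Your proposal is correct and matches the paper's intent: the paper states this as an observation without proof, prefacing it only with the remark that ``in the multiplication table model, it is easy to compute $\alpha_{\gamma}^{*}$ and $\beta_{\xi}^{*}$.'' Your argument supplies exactly the obvious justification the paper leaves implicit, namely that evaluating $\alpha_\gamma^*(g)(x) = \alpha_\gamma(g\,\alpha_\gamma^{-1}(x)\,g^{-1})$ is a bounded sequence of table lookups once the Cayley table and $\alpha_\gamma$ are available.
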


Let $G^{*} := \alpha_{\gamma}^{*}(G)$ and $H^{*} := \beta_{\xi}^{*}(H)$. In particular, $\Soc(G^{*}) = \Soc(H^{*}) = \Inn(\mathcal{K})$. As the conjugation action of $G$ and $H$ fix their respective minimal normal subgroups, we have in fact that $G^{*}, H^{*} \leq \prod_{i=1}^{d} \Aut(K_{i})^{z_{i}}$. Let:
\[
\text{ISO}(G^{*}, H^{*}) = \{ \widehat{f} : f \in \text{Iso}(\Soc(G), \Soc(H)) \}.
\]

\noindent By Lemma~\ref{CharacterizeSemisimple}, we have that:
\[
\text{Iso}(G, H) = \alpha_{\gamma}^{*} \text{ISO}(G^{*}, H^{*}) (\beta_{\xi}^{*})^{-1}.
\]

\noindent By Corollary~\ref{cor:BCGQ4.1}, we will need to consider diagonal products of $\Soc(G)$ (respectively, $\Soc(H)$) that respect the decomposition of $\Soc(G)$ (respectively, $\Soc(H)$) into a direct product of non-Abelian simple groups. For diagonals $\varphi$ of $\Soc(G)$ and $\psi$ of $\Soc(H)$, let:
\[
\text{ISOds}^{*}(G^*, H^*; \varphi, \psi) = \{ \widehat{f} : f \in \text{ISOd}(\Soc(G), \Soc(H); \varphi, \psi)\}.
\]

\noindent Now let $\varphi$ be a diagonal product of $\Soc(G)$, where we write $\Soc(G)$ as a direct product of non-Abelian simple groups. Define $\psi$ analogously for $\Soc(H)$. Let $\Delta$ be the standard diagonal product of the socles into non-Abelian simple factors. We have the following:
\begin{align} \label{eq:6.4}
\text{ISOds}(G, H; \varphi,\psi) = \alpha_{\gamma}^{*}\text{ISOds}^{*}(G^*, H^*; \Delta, \Delta) (\beta_{\xi}^{*})^{-1}.
\end{align}

\subsection{Reduction to Twisted Code Equivalence} \label{sec:ReductionTwistedCodeEquivalence}
By Corollary~\ref{cor:BCGQ4.1} and Equation~(\ref{eq:6.4}), the problem reduces to finding $\text{ISOds}^{*}(G^{*}, H^{*}, \Delta, \Delta)$, where $\Delta$ is the standard diagonal product of the socles. We will recall from \cite[Section~5.3]{BCQ} and \cite[Lemma~6.5.1]{CodenottiThesis} how to reduce $\text{ISOds}^{*}(G^{*}, H^{*}, \Delta, \Delta)$ to \algprobm{Twisted Code Equivalence}, and show that this reduction can be parallelized.

\begin{lemma}[cf. {\cite[Lemma~6.5.1]{CodenottiThesis}, \cite[Section~5.3]{BCQ}}] \label{lem:6.5.1}
Given Fitting-free groups $G,H$ of order $n$, there are codes $\overline{G}, \overline{H}$ such that computing $\text{ISOds}^*(\overline{G}, \overline{H},\Delta, \Delta)$, where $\Delta$ is the standard diagonal, reduces to \algprobm{Twisted Code Equivalence} by uniform $\cc{AC}$ circuits of depth $O((\log n)(\log \log n)^2)$ and size $\poly(n)$. 

More precisely, with that complexity, $\overline{G}, \overline{H}$ and permutation groups $W_i$ can be computed so that
\[
\text{ISOds}^{*}(\overline{G}, \overline{H}; \Delta, \Delta) = \widehat{\text{EQ}}_{(W_{1}, \ldots, W_{r})}(\overline{G}, \overline{H}),
\]
where $\widehat{\text{EQ}}$ denotes taking the code equivalences $\pi$ (which are permutations of their domains) and lifting them (each lifts uniquely) to isomorphisms of the groups $\overline{G}, \overline{H}$.
\end{lemma}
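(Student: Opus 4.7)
The plan is to follow the reduction outlined in \cite[Section~5.3]{BCQ} and \cite[Lemma~6.5.1]{CodenottiThesis}, verifying that each step admits a parallel implementation within the claimed bounds. Taking $\overline{G} := G^*$ and $\overline{H} := H^*$, I would view each element of $G^*, H^* \leq \prod_{i=1}^{d} \Aut(K_i)^{z_i}$ as a codeword indexed by the pairs $(i,j)$ with $i \in [d], j \in [z_i]$, over the alphabet $\Gamma_i := \Aut(T_i)$ (using $K_i \cong T_i^{t_i}$ and absorbing the wreath-product $S_{t_i}$ action into the code-position permutation). Since the minimal normal subgroups $N_{i,j}$ are non-trivial, the total code length is $\leq \log_2 |\Soc(G)| = O(\log n)$, meeting the small-length regime of Theorem~\ref{thm:TwistedCodeEquivalence}. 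The twisting group $W_i$ acting on $\Gamma_i$ is $\Aut(T_i)$ acting on itself by conjugation, encoding precisely the ambiguity of extending an isomorphism of socles that respects the standard diagonal.

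Next, I would verify the asserted set-theoretic equality. In one direction, a twisted code equivalence $(\pi, (w_b))$ lifts to a map on $\mathcal{K}$ that permutes socle factors by $\pi$ and conjugates the $i$-th alphabet by $w_b$; by construction this restricts to an isomorphism $\Soc(G) \to \Soc(H)$ respecting the standard diagonal product, which by Lemma~\ref{CharacterizeSemisimple} lifts uniquely to an element of $\text{ISOds}^*(\overline{G}, \overline{H}; \Delta, \Delta)$. Conversely, any such isomorphism restricts to the socle, decomposes into a position permutation plus coordinatewise automorphism twists, and is thereby captured by a twisted code equivalence; this mirrors the argument in \cite[Lemma~6.5.1]{CodenottiThesis}, so the bulk of the work is in the complexity analysis rather than the structural content.

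For the parallel complexity analysis, the preprocessing steps are: (i) computing $\Soc(G), \Soc(H)$ and their decomposition into characteristically simple factors, which is in $\textsf{L}$ by Grochow--Levet \cite{GrochowLevetWL}; (ii) computing the embeddings $\alpha_\gamma^*, \beta_\xi^*$ and hence $\overline{G}, \overline{H}$, in $\textsf{AC}^0$ by Observation~\ref{obs:ComputePermutationRepresentation}; (iii) constructing each $W_i = \Aut(T_i)$ in multiplication-table form, together with a faithful action on $\Gamma_i$. The main obstacle is step (iii): since $T_i$ is $2$-generated non-Abelian simple of order $\leq n$, we have $|\Aut(T_i)| \leq n^2$, and I would list its elements via generator enumeration using Wolf's procedure as adapted in Lemma~\ref{lem:ListPrimitive}, within depth $O(\log n \cdot \log \log n)$. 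Assembling the alphabets, codes, and twisting groups, together with the position-permutation encoding of the wreath-product $S_{t_i}$ action (and the consistency checks that align the code positions with the socle decomposition), then requires an additional $O((\log n)(\log\log n))$ depth using Lemma~\ref{PermutationGroupsNC}; multiplying depths across the $O(\log\log n)$-deep organizational layer yields the advertised $O((\log n)(\log\log n)^2)$ depth, while all listed objects have size $\poly(n)$ because alphabets, codes, and twisting groups each have size $\poly(n)$.
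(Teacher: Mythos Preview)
Your encoding is not well-defined, and this is the crux of the gap. You propose positions indexed by $(i,j)$ (the minimal normal subgroups) with alphabet $\Gamma_i = \Aut(T_i)$, but the $(i,j)$-coordinate of $\sigma \in G^* \leq \prod_{i,j}\Aut(K_i)$ lies in $\Aut(K_i) = \Aut(T_i)\wr S_{t_i}$, not in $\Aut(T_i)$. Your suggestion to ``absorb the wreath-product $S_{t_i}$ action into the code-position permutation'' does not work: an element of $\Aut(T_i)\wr S_{t_i}$ is not a $t_i$-tuple over $\Aut(T_i)$, because the $S_{t_i}$ part permutes the factors rather than acting coordinatewise. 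There is no way to write the codeword for a general $\sigma \in G^*$ as a string over $\Aut(T_i)$ indexed by individual simple factors. Consequently your choice $W_i = \Aut(T_i)$ acting on itself by conjugation is also wrong: even if the encoding made sense, it would allow twisted equivalences that mix simple factors from distinct minimal normal subgroups, which do not correspond to isomorphisms of $G$.

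The paper's construction is genuinely different. The positions are the minimal normal subgroups $(i,j)$; the alphabet $\Gamma_\ell$ at position $(i,j)$ is the \emph{restriction group} $G_{ij}^*$ itself (a group with a unique minimal normal subgroup $K_i$), and the twisting group $W_\ell = \text{Autds}^*(\Gamma_\ell;\Delta,\Delta)$ consists of the standard-diagonal-preserving automorphisms of $\Gamma_\ell$, realized as a permutation group of degree $t_i$ via Lemma~\ref{lem:6.3.10}. This encoding makes the set equality hold by the argument in \cite[Lemma~6.5.1]{CodenottiThesis}. The dominant cost in the reduction is computing $\text{ISOds}^*(G_{ij}^*, H_{i'j'}^*;\Delta,\Delta)$ for all pairs and listing each $W_\ell$, both via Proposition~\ref{thm:6.4.1}, which is where the $O((\log n)(\log\log n)^2)$ depth comes from---not from listing $\Aut(T_i)$ as you suggest.
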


\begin{proof}
We follow the details from \cite[Lemma~6.5.1]{CodenottiThesis}. Let $G_{ij}^{*}$ be the restriction of $G^{*}$ to the $j$th copy of $K_{i}$, and $H_{ij}^{*}$ the restriction of $H^{*}$ to the $j$th copy of $K_{i}$. For every $i, j, i', j'$, we compute $\text{ISOds}^{*}(G_{ij}^{*}, H_{i'j'}^{*}; \Delta, \Delta)$. As $G_{ij}^{*}, H_{i'j'}^{*}$ each have a unique minimal normal subgroup, we can by Proposition~\ref{thm:6.4.1}, list the isomorphisms between them that preserve the standard diagonals, using an $\textsf{AC}$ circuit of depth $O((\log n)(\log \log n)^2)$ and size $\poly(n)$. Note that there are $O(\log^2 n)$ such pairs $((i,j), (i', j'))$, and so we can make all comparisons in parallel without adding to the circuit depth. 

Let $\Gamma_1, \ldots, \Gamma_r$ be representatives of these isomorphism types, which will be our alphabets. For each pair $(i, j)$, choose an arbitrary isomorphism $\chi_{ij}$ that preserve isomorphism between $G_{ij}^{*}$ and the corresponding representative $\Gamma_{h}$, and let $\delta_{ij}$ be the corresponding isomorphism for $H_{ij}^{*}$. Note that as finite simple groups are $2$-generated, we can select an isomorphism for one of the finite simple direct factors $T$ of $G_{ij}^{*}$ and $\Gamma_{i}$ in $\textsf{L}$ \cite{TangThesis}. Then by considering the conjugation action of $G$, we obtain in $\textsf{AC}^{0}$ an isomorphism $\chi_{ij} : G_{ij}^{*} \cong \Gamma_{i}$. Thus, we can construct $\chi_{ij}$ in $\textsf{L}$. Similarly, in $\textsf{L}$, we can construct the $\delta_{ij}$.

We create codes $\overline{G}, \overline{H}$ over $\Gamma_1, \ldots, \Gamma_r$ as follows. Let $\sigma \in G^{*}$, and let $\sigma_{ij} \in G_{ij}^{*}$ denote the restriction of $\sigma$ to $N_{ij}$. The string associated with $\sigma$ is $\overline{\sigma} = (\chi_{ij}(\sigma_{ij}))$. Now $\overline{G} := \{ \overline{\sigma} : \sigma \in G^{*}\}$. Define $\overline{H}$ analogously by setting $\overline{\pi} := (\delta_{ij}(\pi_{ij}))_{ij}$, for $\pi \in H$. At this stage, we are given the $(\chi_{ij})$ and $(\delta_{ij})$, as well as the $G_{ij}^{*}$ by their multiplication tables. Thus, we can write down each $\overline{\sigma}$ and each $\overline{\pi}$ in $\textsf{AC}^{0}$. 

For each $\ell \in [r]$, let $W_{\ell} = \text{Autds}^{*}(\Gamma_{\ell}; \Delta, \Delta) := \text{ISOds}^{*}(\Gamma_{\ell}, \Gamma_{\ell}; \Delta, \Delta)$ be the group of automorphisms that preserve the standard diagonal. By Lemma~\ref{lem:6.3.10}, each such automorphism is determined by the permutation it induces on the non-Abelian simple factors of the socle. In particular, if $\Gamma_{\ell}$ is acting on a copy of $K_{i} \cong T_{i}^{t_{i}}$ with $T_{i}$ a non-Abelian simple group, then $W_{\ell}$ has a faithful permutation of degree $t_{i}$. By \cite[Theorem~6.4.1(a)]{CodenottiThesis}, each $|W_{i}| \in \poly(n)$. Furthermore, by Proposition~\ref{thm:6.4.1}, we can list each $W_{i}$ with an $\textsf{AC}$ circuit of depth $O((\log n)(\log \log n)^2)$ and size $\poly(n)$.

Let $m_{\ell}$ denote the number of the $G_{ij}^{*}$ that are isomorphic to $\Gamma_{\ell}$. Without loss of generality, suppose that the number of $H_{ij}^{*}$ isomoprhic to $\Gamma_{\ell}$ is also $m_{\ell}$ (otherwise, $G^{*}$ and $H^{*}$ are not isomorphic). Then $\overline{G}, \overline{H} \leq \prod_{\ell=1}^{r} \Gamma_{\ell}^{m_{\ell}}$ are subgroups, and so we view them as group codes as well. 

The correctness of the construction and the equality stated in the lemma comes from \cite[Lemma~6.5.1]{CodenottiThesis} and \cite[Section~5.3]{BCQ}, and we have now established the stated complexity bounds on the reduction.
\end{proof}

\subsection{Algorithm}

We now prove Theorem~\ref{thm:FittingFree}. 

\begin{proof}
By Lemma~\ref{lem:IsFittingFree}, we can check in $\textsf{AC}^{0}$ whether $G, H$ are both Fitting-free. If $G$ is Fitting-free and $H$ is not, then we return that $G \not \cong H$. So assume now that both $G, H$ are Fitting-free. Grochow \& Levet \cite{GrochowLevetWL} showed that we can in $\textsf{L}$, identify the elements of $\Soc(G)$ and decompose $\Soc(G)$ into a direct product of non-Abelian simple factors. In particular, the procedure of Grochow \& Levet yields the non-Abelian simple factors by listing their elements. Given such a decomposition, we can in $\textsf{AC}^{0}$ obtain the minimal normal subgroups of $G$, by considering the conjugation action of $G$ on the non-Abelian simple factors of the $\Soc(G)$. By similar argument, we can decompose $\Soc(H)$ into a direct product of minimal normal subgroups of $H$.

We now turn to fixing a diagonal product $\varphi$ for $\Soc(G)$. We accomplish this by fixing a pair of generators for each non-Abelian simple factor of $\Soc(G)$. As non-Abelian simple groups are $2$-generated, we can identify such a generating sequence for each non-Abelian simple factor of $\Soc(G)$ in $\textsf{L}$ \cite{TangThesis}. Now by Lemma~\ref{lem:4.1}, there are at most $n^2$ diagonal products products of $\Soc(H)$. We may identify each such diagonal product in $\textsf{L}$. Let $\mathcal{D}$ be the set of diagonal products of $\Soc(H)$.

Let $G^{*}$ be the embedding of $G$ into $\Aut(\Soc(G))$, and define $H^{*}$ analogously. By Observation~\ref{obs:ComputePermutationRepresentation}, we can compute $G^{*}, H^{*}$ in $\textsf{AC}^{0}$. By Equation~(\ref{eq:6.4}), the problem now reduces to computing $\text{ISOds}^{*}(G^{*}, H^{*}; \Delta, \Delta)$, where $\Delta$ is the standard diagonal product of the socle. In particular, this reduction yields the $\Gamma_{i}, W_{i}$ ($i \in [r]$) and codes $\overline{G}, \overline{H}$, as defined in Section~\ref{sec:ReductionTwistedCodeEquivalence}. By Lemma~\ref{lem:6.5.1}, this reduction is computable using an $\textsf{AC}$ circuit of depth $O((\log n)(\log \log n)^2)$
and size $\poly(n)$.

Now by Theorem~\ref{thm:TwistedCodeEquivalence} and using the fact that each $|W_{i}| \in \poly(n)$, we can compute $\widehat{EQ}_{(W_1,  \ldots, W_r)}(\overline{G}, \overline{H})$ using an $\textsf{AC}$ circuit of depth $O((\log^2 n) \cdot \poly(\log \log n))$ and size $\poly(n)$. We now recover $\text{ISO}(G, H)$ by computing:
\[
\text{ISO}(G, H) := \bigcup_{\psi \in \mathcal{D}} \alpha_{\varphi}^{*} \text{ISOds}^{*}(G^{*}, H^{*}; \Delta, \Delta)(\beta_{\psi}^{*})^{-1}.
\]
In total, we have an $\textsf{AC}$ circuit of depth $O((\log^{2} n) \cdot \poly(\log \log n))$ and size $\poly(n)$ as desired.
\end{proof}

\section{Isomorphism of Fitting-free permutation groups is at least as hard as Linear Code Equivalence}

In this section, we will establish the following:

\begin{theorem}
There exists a uniform $\textsf{AC}^{0}$ many-one reduction from \algprobm{Linear Code Equivalence}, and hence \algprobm{Graph Isomorphism}, to isomorphism testing of Fitting-free groups, where the groups are given by generating sequences of permutations.
\end{theorem}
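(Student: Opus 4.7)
The plan is to compose three uniform $\AC^{0}$ many-one reductions: \algprobm{Linear Code Equivalence} reduces to a set-transporter problem for coordinate-permutation actions, which in turn reduces to the \algprobm{Conj-Group} problem, which finally reduces to isomorphism of Fitting-free permutation groups. For the first step, we identify codes $U, W \leq \mathbb{F}_{q}^{m}$ with their underlying sets of codewords (of size $q^{d}$) inside $\mathbb{F}_{q}^{m}$, and observe that $U$ and $W$ are permutationally equivalent iff some $\sigma \in S_{m}$ acting on $\mathbb{F}_{q}^{m}$ by coordinate permutation sends $U$ to $W$; generators for this action of $S_{m}$ are $\AC^{0}$-writable. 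For the second step, given a set-transporter instance $(G, S_{1}, S_{2})$ with $G \leq \text{Sym}(\Omega)$, we let $P_{i} := \text{Sym}(\Omega \setminus S_{i}) \leq \text{Sym}(\Omega)$ be the pointwise stabilizer of $S_{i}$; each $P_{i}$ is $2$-generated and $\AC^{0}$-writable, and since the set of non-fixed points of $P_{i}$ is exactly $\Omega \setminus S_{i}$ (provided $|\Omega \setminus S_{i}| \geq 2$), we have $\sigma(S_{1}) = S_{2}$ iff $P_{1}^{\sigma} = P_{2}$, yielding a \algprobm{Conj-Group} instance in $\text{Sym}(\Omega)$.

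The core step is the reduction from \algprobm{Conj-Group} to isomorphism of Fitting-free permutation groups. Given $(G, H_{1}, H_{2})$ with $G, H_{i} \leq \text{Sym}(\Omega)$, we construct Fitting-free permutation groups $K_{1}, K_{2}$ via a wreath-product-with-markers design. Fix two non-isomorphic non-Abelian finite simple groups $T_{G}, T_{H}$ (for instance $A_{5}$ and $A_{6}$), of constant size, and take the permutation domain $\Omega' := (\Omega \times T_{G}) \sqcup (\Omega \times T_{H})$. The socle of $K_{i}$ is $N := T_{G}^{|\Omega|} \times T_{H}^{|\Omega|}$, acting by regular translation on each fiber; the rest of $K_{i}$ is generated by a diagonal embedding of $G$ that permutes the $\Omega$-coordinate on both halves, together with an embedding of $H_{i}$ that permutes the $\Omega$-coordinate on the $T_{H}$-half only. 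Since $\Soc(K_{i})$ is a direct product of non-Abelian simple factors and is self-centralizing in $K_{i}$, no Abelian normal subgroup can exist, so $K_{i}$ is Fitting-free by construction. Correctness of the encoding uses the fact that $\Aut(T^{n}) \cong \Aut(T) \wr S_{n}$ for $T$ non-Abelian simple: because $T_{G} \not\cong T_{H}$, any abstract isomorphism $K_{1} \to K_{2}$ preserves the isotypic decomposition of $\Soc(K_{i})$ into $T_{G}$- and $T_{H}$-factors, and so induces permutations of the $\Omega$-indexing of each component; the asymmetric roles of $G$ (acting on both halves) and $H_{i}$ (acting on the $T_{H}$-half only) then force the recovered permutation to lie in $G$ and to conjugate $H_{1}$ into $H_{2}$.

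The hardest part of the argument will be verifying this decoding rigorously: showing that the permutations induced on the $T_{G}$- and $T_{H}$-isotypic components of the socle must coincide (yielding a single element $\pi \in S_{|\Omega|}$), and that $\pi$ is constrained to lie in $G$ rather than some larger overgroup. This will likely require adding an additional ``coupling'' socle component, or arguing via an intrinsic description of the diagonal $G$-subgroup of $K_{i}/\Soc(K_{i})$ as, e.g., the centralizer of the projection to one of the isotypic components. By contrast, the $\AC^{0}$-uniformity of the full pipeline is comparatively routine: $T_{G}$ and $T_{H}$ have constant size, generators of each $K_{i}$ are obtained by concatenating the input generators for $G$ and $H_{i}$ with fixed permutations coming from the multiplication tables of $T_{G}$ and $T_{H}$, and the symmetric-group gadgets in the earlier reductions are $2$-generated and locally constructible.
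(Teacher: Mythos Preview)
Your pipeline has a fatal size blowup at the very first step. You propose to reduce \algprobm{Linear Code Equivalence} to a set-transporter problem by identifying a $d$-dimensional code $U \leq \mathbb{F}_q^m$ with its set of $q^d$ codewords inside the ambient set $\Omega = \mathbb{F}_q^m$. But codes are given by generator matrices, so the input has size $\Theta(dm\log q)$, while $|\Omega| = q^m$ and $|U| = q^d$ are both exponential in the input. Writing down the permutation domain, let alone generators for the coordinate-permutation action of $S_m$ on $\mathbb{F}_q^m$, is not polynomial-time, and certainly not $\AC^0$. Everything downstream inherits this blowup: your $P_i = \text{Sym}(\Omega \setminus S_i)$ lives in $\text{Sym}(q^m)$.

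Separately, Step~3 is not a proof but a sketch with an acknowledged hole: you need the permutations induced on the $T_G$- and $T_H$-isotypic components to coincide, and you yourself say this ``will likely require adding an additional coupling socle component.'' As written, nothing in the construction forces $\pi_G = \pi_H$; conjugating the diagonal copy of $G$ by $(\pi_G,\pi_H)$ lands in the subgroup generated by diagonal-$G$ and $H_2$-on-the-second-factor without forcing the two coordinates to agree.

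The paper's route is far more direct and avoids both problems. It uses the coordinate-wise sign map $\pi\colon S_5^N \to (\mathbb{Z}/2\mathbb{Z})^N$ and, for a code $C \leq \mathbb{F}_2^N$, sets $G_C := \pi^{-1}(C) \leq S_5^N \leq S_{5N}$. Then $\Soc(G_C) = A_5^N$, so $G_C$ is Fitting-free, and a generating set of size $2N + \dim C$ is immediate: two generators per $A_5$ factor, plus one permutation per basis vector $v$ of $C$ (put a transposition $(12)$ in each coordinate where $v$ has a $1$). Since $\Aut(A_5^N) = S_5^N \rtimes S_N$ and conjugation by $S_5^N$ fixes $\pi(G_C) = C$, any abstract isomorphism $G_C \cong G_{C'}$ is induced by some $\alpha \in S_N$, which is exactly a code equivalence $C\alpha = C'$. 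No set-transporter or \algprobm{Conj-Group} intermediary is needed, and the output permutation degree is $5N$, linear in the code length.
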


\begin{proof}
Petrank \& Roth \cite{PR97} give a reduction from \algprobm{Graph Isomorphism} to \algprobm{Linear Code Equivalence} over $\F_2$. If the starting graphs have $n$ vertices and $m$ edges, the resulting codes are $m$-dimensional subspaces of $\F_2^{3m+n}$. The key to our reduction is thus a reduction from \algprobm{Linear Code Equivalence} over $\F_2$ to (abstract) isomorphism of Fitting-free permutation groups.

Let $\pi \colon S_5^N \to (\mathbb{Z}/2\mathbb{Z})^N$ be the coordinate-wise sign homomorphism $(\sigma_1,\dotsc,\sigma_N) \stackrel{\pi}{\mapsto} (\text{sgn}(\sigma_1),\dotsc,\text{sgn}(\sigma_N))$. Given a code $C \leq \mathbb{F}_2^N$, define a group $G_C := \pi^{-1}(C)$, the preimage of $C \leq \F_2^N$ under $\pi$. We first show how to construct a small generating set of permutations for $G_C$. Since $G_C \leq S_5^N$, we will build our permutations as elements of $S_5^N$, thought of as a subgroup of $S_{5N}$ in the natural way. For each of the $N$ normal copies of $A_5$, we include a size-2 generating set of $A_5$ as part of our generating set of $G_C$. Let $v_1, \dotsc, v_k \in \F_2^N$ be a basis for $C$ over $\F_2$; for the purposes of the reduction, we are given $C$ by a generator matrix, and we can take $v_1, \dotsc, v_k$ to be the rows of that matrix. For each such vector $v_i$, let $\sigma_i =((\sigma_i)_1, \dotsc, (\sigma_i)_N) \in S_5^N$ be the permutation in $S_5^N$ whose $j$-th coordinate is
\[
(\sigma_i)_j = 
\begin{cases}
(12) & (v_i)_j = 1 \\
\text{id} & (v_i)_j = 0.
\end{cases}
\]
It is readily verified that the generating set just described consists of $2N+k$ elements, and that they generate $G_C$. Furthermore, $G_C$ is independent of the choice of basis (or even spanning set) for $C$: if another spanning set is used in building the generating set, it will still generate the same (more than just isomorphic) subgroup of $S_5^N$.

Finally, we show that if $C' \leq \F_2^N$ is another code, then $G_C \cong G_{C'}$ iff $C$ and $C'$ are equivalent linear codes. Suppose $G_C \cong G_{C'}$. Then by \cite[Lemma~2.1]{BCGQ}, the isomorphism is induced by an element of $\Aut(A_5) \wr S_N = S_5^N \rtimes S_N$, acting on $\Soc(G_C) = A_5^N$. We note that conjugating by $S_5^N$ does not change $\pi(G_C) = C$. The permutation action of $S_N$ on the factors of $\Soc(G_C)$ induces the same permutation on the factors of $S_5^N$, which corresponds exactly to permuting the columns of the code $C$. Thus, if $\alpha \in S_N$ gives an isomorphism $G_C \to G_{C'}$ as just described, then $C \alpha$ is the same code as $C'$.

Conversely, if $AC \alpha = C'$, reading the preceding paragraph backwards we get that $\alpha$ gives an isomorphism $G_C \to G_{C'}$.
\end{proof}

The previous reduction lets us show that a $c^n$ algorithm for \algprobm{Linear Code Equivalence} over $\F_2$ in fact follows from the polynomial-time algorithm for isomorphism of Fitting-free groups $G$ when $G = \text{PKer}(G)$ \cite{BCGQ}. Rather than Babai's $c^n$ algorithm for  \algprobm{Linear Code Equivalence} \cite{BCGQ} being seen as separate, this shows a $c^n$ algorithm (albeit for a worse $c$---240 versus Babai's 2) over $\F_2$ in fact already followed from the algorithm for \algprobm{Group Isomorphism} for the corresponding groups.

\begin{corollary}
If isomorphism of Fitting-free groups of order $N$ can be solved in time $t(N)$, then \algprobm{Linear Code Equivalence} for codes in $\F_2^n$ can be solved in time $t(240^n)$.
\end{corollary}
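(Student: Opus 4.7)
The plan is to pipe the reduction from the preceding theorem directly into the hypothesized $t(N)$-time isomorphism algorithm. Given an instance $(C, C')$ of \algprobm{Linear Code Equivalence} in $\F_2^n$, I would first apply the construction $C \mapsto G_C = \pi^{-1}(C)$ from the proof of the preceding theorem, producing two Fitting-free subgroups of $S_5^n$ together with an explicit generating set of size $O(n)$. This step is $\textsf{AC}^0$ (and so certainly polynomial time) by the preceding theorem, which also gives the correspondence $G_C \cong G_{C'}$ iff $C \sim C'$ on which correctness rests.

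The key numerical step is the order bound: since $G_C \leq S_5^n$, we have $|G_C| \leq |S_5|^n = 120^n$, which is comfortably at most $240^n$. If the hypothesized algorithm expects its input in the Cayley (multiplication-table) model rather than as a generating set of permutations, I would first enumerate the elements of $G_C$ and write out its $|G_C|^2 \leq 120^{2n}$ multiplication table via the doubling procedure recalled in the proof of Lemma~\ref{lem:ListPrimitive}. This preprocessing, together with the polynomial overhead of the reduction itself, is absorbed into $t(240^n)$ under the mild assumption $t(N) \geq \text{poly}(N)$, which is forced anyway since merely reading a Cayley table of a group of order $N$ takes time $\Omega(N^2)$.

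Finally, invoke the $t$-time algorithm on $(G_C, G_{C'})$ and return its verdict. No substantive obstacle arises here: the group-theoretic content (Fitting-freeness of $G_C$, and the bijection between code equivalences and abstract isomorphisms $G_C \to G_{C'}$) was already established in the preceding theorem, and the only new ingredient is the elementary observation $|G_C| \leq 120^n \leq 240^n$. The slack between $120^n$ and $240^n$ costs nothing and conveniently accommodates the input-model conversion described above.
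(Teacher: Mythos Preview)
Your proposal is correct and follows essentially the same approach as the paper: apply the reduction $C \mapsto G_C$ from the preceding theorem and bound $|G_C|$. Your order bound is in fact slightly cleaner than the paper's---you observe $G_C \leq S_5^n$ so $|G_C| \leq 120^n$, whereas the paper computes $|G_C| = |\ker \pi| \cdot |C|$ and bounds this by $240^n$ (note $\ker \pi = A_5^n$, so the exact order is $60^n \cdot 2^k \leq 120^n$, making the $240^n$ in the statement generous); your added remarks on input-model conversion are a reasonable elaboration that the paper leaves implicit.
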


\begin{proof}
Given a code $C \leq \F_2^n$ of dimension $k$, the group constructed in the proof above has order $N = 120^n \times 2^k \leq 120^n \times 2^n$.
\end{proof}

\section{Descriptive complexity of Fitting-free groups}

In this section, we will establish the following.

\begin{theorem} \label{thm:DescriptiveComplexity}
If $G$ is a Fitting-free group of order $n$, then $G$ is identified by some $\textsf{FO}$ formula using $O(\log \log n)$ variables.
\end{theorem}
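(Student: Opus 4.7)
The plan is to exhibit a winning Spoiler strategy in the count-free Version~I pebble game on $(G, H)$, for every $H \not\cong G$, using only $O(\log \log n)$ pebble pairs. By the game-logic correspondence recalled in Section~\ref{sec:Logics}, this yields a formula in $\mathcal{L}^{I}_{O(\log \log n)}$ identifying $G$ up to isomorphism. The strategy will combine three structural ingredients already developed in the paper: the decomposition of $\Soc(G)$, the embedding of $G$ into $\Aut(\Soc(G))$, and the structure tree of the induced permutation action on simple factors.

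The structural facts I will use are: (i) $\Soc(G) = T_1 \times \cdots \times T_k$ for non-Abelian simple $T_j$, with $k \leq \log_{60} n$ since each $|T_j| \geq 60$; (ii) every non-Abelian simple group is $2$-generated; (iii) by Lemma~\ref{CharacterizeSemisimple}, $G$ is determined by its conjugation embedding into $\Aut(\Soc(G))$; (iv) the quotient $P := G/\text{PKer}(G)$ is a permutation group of degree $k = O(\log n)$, which by Lemma~\ref{lem:EnumerateStructureTrees} admits a structure tree of depth $d = O(\log k) = O(\log \log n)$.

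Spoiler's strategy is built from two interleaved ingredients. To \emph{address} any chosen simple factor $T_j$, Spoiler walks down the structure tree of $P$ one level at a time, committing $O(1)$ pebble pairs per level to witness a minimal block containing the current target; this mirrors the block-refinement recursion used in the proof of Lemma~\ref{lem:EnumerateStructureTrees}. Since each pebble can be freed and re-used once its block has been certified by a consistent Duplicator response, at any moment the number of active pebble pairs is bounded by $d + O(1) = O(\log \log n)$. Once a factor is addressed, Spoiler places two further pebbles on a generating pair of that factor: $2$-generation of non-Abelian simple groups then forces Duplicator's response to exhibit an abstractly isomorphic copy, and any mismatch in its multiplication table or in its conjugation action by previously-addressed elements is exposed in $O(1)$ subsequent rounds, which require no new pebbles.

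The main obstacle---and the reason the theorem requires new work beyond Brachter's $\mathsf{FO} + \mathsf{C}$ result \cite{BrachterThesis}---will be verifying that no step of the strategy relies on counting. Multiplicities $a_i$ of isomorphism types among the simple factors of $\Soc(G)$ must be recovered from the orbit structure of $P$ on the factors (visible implicitly in the structure tree) rather than from cardinality measurements; the rigidity of non-Abelian simple groups ensures that pebbling two generators pins down an entire factor without counting; and compatibility of the conjugation embedding across addressed factors can be checked by existential quantification over $O(1)$ additional group elements. The technical heart of the argument will be confirming that these implicit, count-free encodings suffice at every level of the tree recursion, and that the bookkeeping of freed and re-used pebbles can be carried out within the $O(\log \log n)$ budget.
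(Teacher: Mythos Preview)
Your plan has a real gap at the pebble-reuse step. You propose to address simple factors one at a time, placing $O(1)$ pebbles per level of the structure tree and two more on a generating pair, then freeing them to move to the next factor. But once a pebble is lifted, Duplicator is no longer bound by it: nothing prevents Duplicator from being globally inconsistent---for instance, mapping two distinct factors $T_a, T_b$ of $\Soc(G)$ to the \emph{same} factor of $\Soc(H)$ at different stages of your walk, or violating the conjugation action of $G$ that links factors. Detecting the full multiset of simple isomorphism types and the coherence of the $G$-action across them requires simultaneous information about all $k = \Theta(\log n)$ factors, and your ``certification'' idea does not explain how the $O(\log\log n)$ surviving pebbles retain this. (A secondary issue: structure trees as in Lemma~\ref{lem:EnumerateStructureTrees} are defined for \emph{transitive} permutation groups; $P = G/\text{PKer}(G)$ need not act transitively on the simple factors---its orbits are the minimal normal subgroups of $G$.)

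The paper avoids reuse entirely by pebbling elements that encode all $k$ factors at once. Using Babai--Kantor--Lubotsky $7$-generation (Theorem~\ref{thm:BabaiKantorLubotsky}), Spoiler pebbles seven full-weight products $x_j = \prod_{i=1}^{k} s_{i,j}$ together with $\lceil \log_2 k\rceil$ ``binary-search'' elements $p_1,\dots,p_\ell$ whose supports bisect the index set (Definition~\ref{def:InitialPebbling}). The weight function---detectable by count-free WL via the Rank Lemma~\ref{lem:RankLemma}---then lets Spoiler recover the projection of each $x_j$ onto any individual $S_i$ without ever pebbling inside $S_i$ (Lemmas~\ref{lem:Ordering}--\ref{lem:PreservesOrder}). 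These $7+\ell = O(\log\log n)$ pebbles are never lifted; they pin down an isomorphism $\varphi:\Soc(G)\to\Soc(H)$ on all factors simultaneously, after which one further pebble on some $g\in G$ and Lemma~\ref{CharacterizeSemisimple} expose any failure of $\varphi$ to extend. (The paper also notes an even shorter route: by Wiegold/Th\'evenaz the socle itself is $O(\log k)$-generated, so one may simply pebble such a generating set.)
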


\noindent We contrast this with the previous work of Grochow \& Levet \cite{GrochowLevetWL}, who established that there exists an infinite family of Abelian groups that are not identified by $\textsf{FO}$ formulas with $o(\log n)$ variables. In light of the equivalence between $\textsf{FO}$ and count-free Weisfeiler--Leman \cite{ImmermanLander1990, CFI, GrochowLevetWL}, it suffices to establish the following.

\begin{theorem}[cf. {\cite[Lemma~5.4.4]{BrachterThesis}}] \label{thm:FittingFreeWL}
Let $G$ be a Fitting-free group of order $n$. Then the count-free $O(\log \log n)$-WL Version I will distinguish $G$ from any non-isomorphic group $H$, using $O(\log \log n)$ rounds.
\end{theorem}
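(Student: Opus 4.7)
The plan is to leverage the equivalence from Section~\ref{sec:WLPrelims} between count-free $k$-dimensional WL Version~I after $r$ rounds and Duplicator winning the first $r$ rounds of the count-free $(k+1)$-pebble game. It therefore suffices to exhibit a Spoiler strategy which, given a Fitting-free $G$ of order $n$ and any $H \not\cong G$, uses at most $O(\log \log n)$ pebble pairs and wins within $O(\log \log n)$ rounds.

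In a short initial phase, Spoiler spends $O(1)$ pebbles and rounds to force Duplicator to respect the socle-level structure of $G$. Using the fact (cited in Section~\ref{sec:FF}) that $\Soc(G)$ together with its decomposition into non-Abelian simple factors is identifiable in $\textsf{L}$, a constant number of pebbles refines the coloring so that Duplicator must map $\Soc(G)$ to $\Soc(H)$ and preserve the decomposition into simple factors. In particular, if $H$ is not Fitting-free, then $\Soc(H)$ has an Abelian direct factor while $\Soc(G)$ does not; Spoiler exposes the discrepancy by pebbling a pair of non-commuting elements in one group whose forced images must commute in the other, which is a pebble-game analogue of the $\textsf{AC}^{0}$ check in \Lem{lem:IsFittingFree}. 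From here on I may assume that both groups are Fitting-free and that the socle decomposition is respected.

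Write $\Soc(G) = T_1 \times \cdots \times T_r$ with each $T_i$ non-Abelian simple; then $r \leq \log_{60} n \in O(\log n)$, and $G$ embeds into $\Aut(\Soc(G))$ with $G/\pker(G)$ acting as a subgroup of $S_r$ by permuting the simple factors. The heart of the argument is a divide-and-conquer Spoiler strategy guided by a structure tree for $G/\pker(G)$, in the spirit of \Lem{lem:PrimitivePIso}, \Thm{thm:5.2.7}, and \Thm{thm:5.2.1}. By \Lem{lem:EnumerateStructureTrees}, a structure tree on a degree-$r$ transitive permutation group has depth $O(\log r) = O(\log \log n)$. At each internal node Spoiler pebbles a constant number of elements that witness the block partition and the primitive action at that level, recurses into a single subtree, and reuses pebbles on returning; the $2$-generatedness of non-Abelian simple groups, together with the giant-versus-non-giant case analysis of \Lem{lem:PrimitivePIso}, ensures that $O(1)$ pebbles per level suffice. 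The total budget is therefore $O(\log \log n)$ pebbles and $O(\log \log n)$ rounds.

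The main obstacle I expect is the removal of counting. Brachter's proof of the analogous bound for counting WL uses multiplicities of color classes at each refinement step to pin down block sizes and orbit cardinalities. In the count-free game Spoiler observes only \emph{sets} of colors, so every such multiplicity comparison must be replaced by a set-based distinction. My plan is to show that each counting query in Brachter's argument can be simulated by pebbling a constant number of distinct representatives from the relevant color classes, exploiting that at every recursive level we are dealing with a permutation group of degree only $O(\log n)$ acting on simple factors that are themselves $2$-generated. Propagating this replacement uniformly across the $O(\log \log n)$ levels of the structure-tree recursion, while keeping both the pebble count and the round count within $O(\log \log n)$, is the delicate bookkeeping I expect to be the hardest part of the write-up.
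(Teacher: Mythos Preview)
Your proposal has a genuine gap at its foundation: you are conflating computational complexity bounds with expressibility in the count-free pebble game. The fact that $\Soc(G)$ and its simple-factor decomposition are identifiable in $\textsf{L}$ (or that Fitting-freeness is checkable in $\textsf{AC}^{0}$) does \emph{not} imply that a constant number of pebbles forces Duplicator to respect them. Circuits and logspace machines may count, iterate, and use random access; Spoiler in the count-free game can only place pebbles on group elements and check marked-isomorphism of the pebbled configurations. You need a concrete combinatorial invariant that Spoiler can expose with few pebbles, and you have not supplied one.

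The structure-tree recursion has the same defect one level up. Structure trees live over the permutation domain of $G/\pker(G)$, whose points are the simple factors of the socle; but pebbles are placed on elements of $G$. You assert that at each node ``Spoiler pebbles a constant number of elements that witness the block partition and the primitive action,'' yet you give no mechanism by which an element of $G$ witnesses a block of an action on $\{T_1,\dotsc,T_r\}$, nor how a single recursive branch (with pebble reuse) suffices when Duplicator might cheat in an unexplored branch. Lemma~\ref{lem:PrimitivePIso} and Theorem~\ref{thm:5.2.7} are algorithms that \emph{list} permutational isomorphisms; they do not hand you first-order formulas.

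The paper's proof is entirely different and avoids structure trees. It introduces a combinatorial invariant, the \emph{weight} $\wt(s)$ of a socle element (the number of simple factors on which $s$ is nontrivial), and uses the Rank Lemma (Lemma~\ref{lem:RankLemma}) to show Spoiler can detect weight mismatches with $O(1)$ pebbles and $O(\log\log n)$ rounds. Spoiler then pebbles $7 + \lceil \log_2 k\rceil$ specific socle elements: the seven come from the Babai--Kantor--Lubotsky generators (Theorem~\ref{thm:BabaiKantorLubotsky}) packed across all factors, and the $\lceil \log_2 k\rceil$ additional elements $p_1,\dotsc,p_\ell$ implement a binary-search indexing of the factors via their weight interactions (Definition~\ref{def:InitialPebbling}, Lemma~\ref{lem:Ordering}). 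This forces Duplicator's responses to determine a single isomorphism $\varphi$ of the socles; Lemma~\ref{CharacterizeSemisimple} then says $\varphi$ extends to $G$ in at most one way, and since $G\not\cong H$ Spoiler can pebble a witness where the extension fails. No counting is needed because the weight function, together with the binary-search pebbles, already separates individual factors set-theoretically.
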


In his dissertation, Brachter \cite[Lemma~5.4.4]{BrachterThesis} established the analogous result for the counting variant of the more powerful WL Version II, without controlling for rounds. We will essentially show that Brachter's pebbling strategy from the counting game can be implemented with the corresponding count-free game. In order to obtain $O(\log \log n)$ rounds with count-free WL Version I, we will also take advantage of the following result due to Babai, Kantor, and Lubotsky. 

\begin{theorem}[{\cite{BabaiKantorLubotsky}}] \label{thm:BabaiKantorLubotsky}
There exists an absolute constant $C$, such that for every finite simple group $S$, there exists a generating set $g_1, \ldots, g_7 \in S$ such that every element of $S$ can be realized as a word over $\{g_1, \ldots, g_7\}$ of length at most $C \log |S|$.
\end{theorem}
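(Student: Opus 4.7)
The plan is to invoke the Classification of Finite Simple Groups (CFSG) and handle each family of finite simple groups separately, in each case exhibiting a bounded generating set (of size at most $7$) and a scheme for expressing arbitrary group elements as short words. The sporadic simple groups are only finitely many, so for those any fixed generating set gives a bounded diameter, and the bound $C \log |S|$ holds vacuously after adjusting $C$. It remains to handle the two infinite families: the alternating groups $A_n$ ($n \geq 5$) and the finite simple groups of Lie type.

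For the alternating groups, I would first set aside small $n$ (finitely many, handled directly) and, for $n \geq 5$, build generators that support a divide-and-conquer construction of arbitrary even permutations. The idea is to take a fixed $3$-cycle $\sigma$ together with a "shift" permutation $\tau$ (for instance an $n$-cycle, or a product of two cycles of appropriate parity), so that conjugates of $\sigma$ by powers of $\tau$ reach every adjacent triple, and an arbitrary permutation can be written as a product of $O(n)$ such conjugates. The key refinement is to produce those conjugates themselves in $O(\log n)$ word length, for example by precomputing $\tau^{2^k}$ in $O(k)$ multiplications via repeated squaring using a small auxiliary set of generators. Summed over the $O(n)$ transpositions/$3$-cycles in a given permutation, this yields word length $O(n \log n) = O(\log |A_n|)$ as required.

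For the groups of Lie type, I would exploit the Bruhat decomposition $G = \bigsqcup_{w \in W} B w B$ together with the root-subgroup structure. The generating set would consist of: (i) elements generating the maximal torus $T$, (ii) a standard generating set for the Weyl group $W$ (a bounded number of simple reflections), and (iii) generators of a small collection of root subgroups $U_{\alpha}$. Given $g \in G$, one writes $g = u_1 \, t \, n_w \, u_2$ with $u_i \in U$, $t \in T$, $n_w$ a Weyl group lift, and then bounds each factor's word length. For each root subgroup $U_\alpha \cong (\mathbb{F}_q, +)$, one uses a single generator whose powers combined with torus conjugation generate the whole additive group: this is the classical "doubling" trick—multiplying a root element by a torus element of order $p^i$ rescales its coefficient, and these scalings together with additions (root additions within a fixed $U_\alpha$) produce any element of $\mathbb{F}_q$ in $O(\log q)$ word length. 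One then sweeps across all positive roots using Chevalley commutator relations to reduce from the full $U$ to the $U_\alpha$'s attached to simple roots. The total word length is then bounded by a constant (from $W$ and $T$) plus $|\Phi^+| \cdot O(\log q)$, which is $O(\log |G|)$.

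The main obstacle will be the unipotent radical. Two subtleties must be handled uniformly across all Lie types and all pairs $(n, q)$: first, using only a \emph{bounded} number of root-subgroup generators to reach every $U_\alpha$ via Chevalley commutators requires selecting generators adapted to the Dynkin diagram and showing the iterated commutator depth stays $O(1)$; second, the doubling argument inside a single $U_\alpha$ must be executed with words of length $O(\log q)$ even when the additive group of $\mathbb{F}_q$ is not cyclic as a module over $\langle t\rangle$. The delicate case-work lies in low rank and small field size where root-data tricks degenerate—here one falls back on direct constructions within, e.g., $\mathrm{PSL}_2(q)$, and uses these as building blocks. Combining the contributions and verifying that the same bound $7$ on the number of generators suffices simultaneously for every family is the technical crux.
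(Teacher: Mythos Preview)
The paper does not give a proof of this statement: Theorem~\ref{thm:BabaiKantorLubotsky} is quoted verbatim from Babai--Kantor--Lubotsky \cite{BabaiKantorLubotsky} and used as a black box in Section~9, with no proof or even proof sketch supplied. So there is nothing in the paper to compare your proposal against.

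That said, your outline is a fair summary of the strategy actually used in \cite{BabaiKantorLubotsky}: the proof there does proceed via CFSG, dispatches the sporadic groups trivially, handles $A_n$ by a divide-and-conquer/repeated-squaring argument yielding $O(n\log n)$ word length, and treats the groups of Lie type via Bruhat decomposition and root subgroups with torus-conjugation tricks to control field arithmetic. The places you flag as delicate (bounding the number of root-subgroup generators uniformly, the non-cyclicity of $\mathbb{F}_q$ as an additive group, and the low-rank/small-$q$ degeneracies) are exactly where the work in the original paper lies, and your sketch does not resolve them---but since the present paper makes no attempt to reprove the result either, that is not a deficiency relative to the paper. If you intend to supply an independent proof, you would need to fill in those details carefully; otherwise, citing \cite{BabaiKantorLubotsky} as the paper does is the appropriate course.
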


The Classification of Finite Simple Groups provides that every finite simple group is generated by at most two elements. However, there are no bounds in general of the length of the words for such generating sets. 

In light of Theorem~\ref{thm:FittingFreeWL} and the parallel implementation of count-free WL due to Grohe and Verbitsky \cite{GroheVerbitsky}, we also obtain the following result.

\begin{corollary} \label{cor:quasiFOLL}
Let $G$ be a Fitting-free group of order $n$, and let $H$ be arbitrary. We can decide isomorphism between $G$ and $H$ in $\exists^{\log n \log \log n}\textsf{FOLL}$, which can be simulated by $\textsf{quasiFOLL}$ circuits of size $n^{O(\log \log n)}$ (by trying all $n^{O(\log \log n)}$ possibilities in parallel).
\end{corollary}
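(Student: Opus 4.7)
My plan is to combine Theorem~\ref{thm:FittingFreeWL} directly with the Grohe--Verbitsky parallel implementation of count-free Weisfeiler--Leman Version I recalled in Section~\ref{sec:WLPrelims}. By Theorem~\ref{thm:FittingFreeWL}, count-free $k$-WL Version I with $k \in O(\log \log n)$ distinguishes the Fitting-free group $G$ from every non-isomorphic group $H$ using $r \in O(\log \log n)$ rounds, so it suffices to implement this bounded WL procedure and compare the resulting sets of stable colors on $G^{k}$ and $H^{k}$.

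As recalled in Section~\ref{sec:WLPrelims}, both the initial coloring and each refinement step of count-free WL Version I are $\textsf{DLOGTIME}$-uniform $\textsf{AC}^{0}$-computable, with size polynomial in the number of $k$-tuples of group elements. For $k \in O(\log \log n)$ this count is $n^{k} = n^{O(\log \log n)}$. Stacking the initial coloring with $r \in O(\log \log n)$ refinement layers, and appending a constant-depth comparison of the two sets of stable colors, yields a $\textsf{DLOGTIME}$-uniform $\textsf{AC}$ circuit of depth $O(\log \log n)$ and size $n^{O(\log \log n)}$. This is exactly the claimed $\textsf{quasiFOLL}$ circuit of size $n^{O(\log \log n)}$.

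For the $\exists^{\log n \log \log n}\textsf{FOLL}$ formulation, I would observe that the top level of the above circuit is naturally a disjunction over $O(\log n \log \log n)$-bit witnesses---for instance, a candidate distinguishing $k$-tuple $\bar{g} \in G^{k}$, whose bit-length is $k \log n \in O(\log n \log \log n)$, together with its purported distinguishing color class. Viewing this top-level disjunction as an existential quantifier and the remaining per-branch computation as a polynomial-size, $O(\log \log n)$-depth $\textsf{FOLL}$ verifier gives an $\exists^{\log n \log \log n}\textsf{FOLL}$ algorithm; the parenthetical simulation in the corollary statement is then the dual, standard conversion of trying all $2^{O(\log n \log \log n)} = n^{O(\log \log n)}$ witnesses in parallel and OR-ing the results, recovering the $\textsf{quasiFOLL}$ circuit above.

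The main---and essentially only---technical step is bookkeeping: verifying that $r$-many $\textsf{AC}^{0}$ refinement layers stack to $\textsf{FOLL}$ depth, and that $n^{k}$-size subcircuits remain within the quasi-$\textsf{FOLL}$ size bound. No further group-theoretic input beyond Theorem~\ref{thm:FittingFreeWL} is required.
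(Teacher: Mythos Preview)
Your direct route to the $\textsf{quasiFOLL}$ bound---running $k$-WL Version~I with $k\in O(\log\log n)$ on the full space of $n^{k}$ tuples---is sound and gives the size $n^{O(\log\log n)}$, depth $O(\log\log n)$ circuit without passing through the $\exists$-formulation. That part is a legitimate alternative to the paper's argument.

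The gap is in your derivation of the $\exists^{\log n\log\log n}\textsf{FOLL}$ bound. Guessing a single $k$-tuple $\bar g\in G^{k}$ (and a purported color) does \emph{not} leave a polynomial-size verifier: to check that $\bar g$ receives a color not attained by any $\bar h\in H^{k}$, you must know $\chi_{r}(\bar g)$ and $\chi_{r}(\bar h)$ for every $\bar h$, and computing even one such round-$r$ color forces you to track colors of all $n^{k}$ tuples through the refinement (each round's color of $\bar g$ depends on the previous round's colors of $kn$ other tuples, which recursively fan out to essentially the whole tuple space). So the ``per-branch'' computation is still $n^{\Theta(k)}=n^{\Theta(\log\log n)}$ size, not $\poly(n)$, and you have not landed in $\textsf{FOLL}$ after the existential guess.

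The paper avoids this by extracting a finer statement from the \emph{proof} of Theorem~\ref{thm:FittingFreeWL} rather than just its conclusion: Spoiler's strategy there begins by pebbling $O(\log\log n)$ specific socle elements (the $x_{1},\dots,x_{7}$ and $p_{1},\dots,p_{\ell}$ of Definition~\ref{def:InitialPebbling}), after which only $O(1)$ additional pebbles are ever in play. Equivalently, after \emph{individualizing} those $O(\log\log n)$ elements, $O(1)$-dimensional count-free WL for $O(\log\log n)$ rounds already assigns every element a unique color. The existential guess is then just those $O(\log\log n)$ elements ($O(\log n\log\log n)$ bits), and the verifier runs constant-dimensional WL, which is genuinely $\poly(n)$-size per round and hence $\textsf{FOLL}$. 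Your argument would need this individualize-then-constant-dimension observation to reach the stated $\exists^{\log n\log\log n}\textsf{FOLL}$ bound.
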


\begin{proof}[Proof (Sketch).]
The proof of Theorem~\ref{thm:FittingFreeWL} shows that there exist $O(\log \log n)$ group elements that, after individualizing, $O(1)$-dimensional count-free WL Version I will assign a unique color to each group element after $O(\log \log n)$ rounds. We use $O(\log n \log \log n)$ existentially-quantified non-deterministic bits to guess $O(\log \log n)$ such group elements to individualize (this is the $\exists^{\log n \log \log n}$). We then run $O(1)$-dimensional count-free WL Version I for $O(\log \log n)$ rounds, which is $\textsf{FOLL}$-computable \cite{GroheVerbitsky}.
\end{proof}

Collins and Levet \cite{CollinsLevetWL} previously asked if the isomorphism problem for Fitting-free groups was decidable using a $\textsf{quasiFOLL}$ circuit of size $n^{O(\log \log n)}$, in tandem with $O(\log n)$ universally quantified co-nondeterministic bits and a single $\textsf{Majority}$ gate. Our result answers this question, and in fact shows that we do not need the co-nondeterministic bits or the $\textsf{Majority}$ gate.

\subsection{Preliminaries}

We will now recall some preliminaries from \cite{GLDescriptiveComplexity, CollinsLevetWL}.

\begin{definition}[cf. {\cite[Definition~4.9]{GLDescriptiveComplexity}}]
Let $G$ be a Fitting-free group. Let $\Soc(G) = S_{1} \times \cdots \times S_{k}$, where each $S_{i}$ is a non-Abelian simple normal subgroup of $\Soc(G)$. For any $s \in \Soc(G)$, write $s = s_{1} \cdots s_{k}$, where each $s_{i} \in S_{i}$. Define the \textit{weight} of $s$, denoted $\wt(s)$, as the number of $i$'s such that $s_{i} \neq 1$. For $s \not \in \Soc(G)$, we define $\wt(s) = \infty$.
\end{definition}

As the $S_{i}$ are the unique subsets of $\Soc(G)$ that are simple, normal subgroups of $\Soc(G)$, the decomposition of $s = s_{1} \cdots s_{k}$ is unique up to the order of the factors. Thus, the definition of weight is well-defined.

Grochow and Levet \cite{GrochowLevetWL} generalized weight to the notion of rank.

\begin{definition}[{\cite[Definition~4.1]{GrochowLevetWL}}]
Let $C \subseteq G$ be a subset of a group $G$ that is closed under taking inverses. We define the $C$-\textit{rank} of $g \in G$, denoted $\rk_{C}(g)$, as the minimum $m$ such that $g$ can be written as a word of length $m$ in the elements of $C$. If $g$ cannot be so written, we define $\rk_{C}(g) = \infty$.
\end{definition}

\begin{lemma}[{\cite[Rank Lemma~4.3]{GrochowLevetWL}}] \label{lem:RankLemma}
Let $C \subseteq G$ be a subset of $G$ that is closed under taking inverses. Suppose that if $x \in C$ and $y \not \in C$, then Spoiler can win in the count-free Version I pebble game from the initial configuration $x \mapsto y$, using $k$ pebbles and $r$ rounds. If $\rk_{C}(g) \neq \rk_{C}(h)$, then Spoiler can win with $k+1$ pebbles and $r + \log(d) + O(1)$ rounds, where $d = \text{diam}(\text{Cay}(\langle C \rangle, C)) \leq |\langle C \rangle| \leq |G|$.
\end{lemma}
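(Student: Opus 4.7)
My plan is a classical midpoint induction on $a := \rk_C(g)$, assuming WLOG that $a < b := \rk_C(h)$. The base case will be $a = 1$: then $g \in C$ and, since $b > 1$, $h \notin C$, so the hypothesis of the lemma gives a Spoiler win with $k$ pebbles in $r$ rounds. The round-count overhead will come entirely from the inductive step, where I halve the $G$-side rank at every level, so the recursion depth will be at most $\lceil\log_2 a\rceil \leq \lceil\log_2 d\rceil$.

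For the inductive step, I would fix a minimum-length $C$-factorization $g = c_1 \cdots c_a$ and set $g_m := c_1 \cdots c_{\lceil a/2\rceil}$, so that $\rk_C(g_m) \leq \lceil a/2\rceil$ and $\rk_C(g_m^{-1}g) \leq \lfloor a/2\rfloor$. Spoiler will expend one additional pebble pair to place $g_m$ in $G$; Duplicator must then respond with some $h_m \in H$. The triangle inequality for $\rk_C$ will give
\[
\rk_C(h_m) + \rk_C(h_m^{-1}h) \;\geq\; \rk_C(h) \;=\; b \;>\; a \;\geq\; \rk_C(g_m) + \rk_C(g_m^{-1}g),
\]
so at least one of $\rk_C(h_m) > \rk_C(g_m)$ and $\rk_C(h_m^{-1}h) > \rk_C(g_m^{-1}g)$ must be strict. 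Spoiler then recurses on whichever side exhibits the strict inequality; in either branch the $G$-side rank has at most halved. In the ``left-half'' case Spoiler simply picks up the stale $(g,h)$ pebble pair and reuses it as the next midpoint pebble. In the ``right-half'' case Spoiler spends $O(1)$ additional rounds to install a pebble on $g_m^{-1}g$; while $g, g_m$ (and their counterparts) are still pebbled, the ternary relation $R(g_m, g_m^{-1}g, g)$ forces Duplicator to respond with exactly $h_m^{-1}h$, and Spoiler can then pick up the stale pebbles on $(g, h)$ and $(g_m, h_m)$ to continue.

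Summing the cost across the at most $\lceil\log_2 d\rceil$ levels of the recursion will contribute $O(\log d)$ rounds, and the base case contributes the hypothesized $r$ rounds, yielding the claimed $r + \log d + O(1)$ bound. \textbf{The main obstacle} will be verifying the pebble budget in the right-half case: momentarily one wants $(g,h)$, the midpoint $(g_m, h_m)$, and the quotient $(g_m^{-1}g, h_m^{-1}h)$ all pebbled so that the multiplication relation can pin Duplicator's response. I expect the careful part to be scheduling the placements and removals so that the simultaneous pebble count never exceeds $k+1$; the standard trick is to reuse the pebble freed from the previous level's ``current pair'' as the next-level midpoint, amortizing the bookkeeping into the $O(1)$ per-level overhead that is already absorbed into the $O(\log d)$ term.
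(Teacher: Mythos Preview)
The paper does not supply its own proof of this lemma; it is quoted from \cite{GrochowLevetWL} and used as a black box, so there is nothing in-paper to compare against. Your midpoint binary-search argument is the standard proof of this style of rank lemma and is correct in outline.

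Two small remarks. First, you should also dispose of the edge case $a=0$ (i.e.\ $g$ is the identity), which is immediate since the identity is pinned by the relation $R(x,x,x)$. Second, your flagged ``main obstacle'' is real: in the right-half branch you genuinely need three pebbles on the board simultaneously (on $g$, $g_m$, and $g_m^{-1}g$) so that the relation $R(g_m,\,g_m^{-1}g,\,g)$ forces Duplicator's response; there is no two-pebble scheduling that achieves this, because once either $(g,h)$ or $(g_m,h_m)$ is lifted the multiplication constraint evaporates. What the argument actually delivers is $\max(k,3)$ pebbles. This coincides with $k+1$ whenever $k\ge 2$, which holds in every invocation of the lemma in this paper (indeed the hypothesis on $C$ already costs several pebbles in all applications), so the discrepancy is harmless---but it cannot be ``amortized into the $O(1)$ per-level overhead'' as you suggest.
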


Collins and Levet showed that the count-free WL Version I can quickly distinguish the weight one elements. We state this result in the pebble game characterization.

\begin{lemma}[see {\cite[Lemma~6.10]{CollinsLevetWL}}] \label{lem:CL6.10}
Let $G, H$ be Fitting-free groups. Let $S$ be a non-Abelian simple direct factor of $\Soc(G)$. Suppose that $g \in S$ and $\wt(h) > 1$. Then Spoiler can win in the count-free Version I pebble game, starting from the initial configuration $g \mapsto h$, using $O(1)$ pebbles and $O(\log \log n)$ rounds.\footnote{The statement of \cite[Lemma~6.10]{CollinsLevetWL} assumes only $g \in S$ and $h \not \in \Soc(H)$. However, their proof shows that if $g \in S$ and $\wt(h) > 1$, then Spoiler can win with $O(1)$ pebbles and $O(\log \log n)$ rounds.}
\end{lemma}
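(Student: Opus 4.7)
My plan follows the strategy of \cite[Lem.~6.10]{CollinsLevetWL}, which as the paper's footnote observes extends uniformly from the subcase $h \notin \Soc(H)$ to $\wt(h) > 1$. First, I dispose of $g = 1$: since $\wt(h) > 1$ forces $h \ne 1$, Spoiler pebbles the products $g \cdot g$ versus $h \cdot h$ and wins in one round, using the relation $g \cdot g = g$ while $h \cdot h \ne h$. Hereafter assume $g \in S \setminus \{1\}$, so $\wt(g) = 1$.

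My main tool is the Rank Lemma (Lemma~\ref{lem:RankLemma}) applied to $C := g^{G} \cup (g^{G})^{-1}$, the $G$-conjugacy class of $g$ closed under inverses. Because $g \in S$ (a non-Abelian simple direct factor of $\Soc(G)$), $\langle C \rangle = \ncl_G(g) = M_g$ is the minimal normal subgroup of $G$ containing $g$, realized as the direct product of the $G$-conjugates of $S$. By Liebeck--Shalev-style covering bounds on non-trivial conjugacy classes of finite simple groups, the diameter $d$ of $\mathrm{Cay}(M_g, C)$ satisfies $d = O(\log |M_g|) \le O(\log n)$, so $\log d = O(\log \log n)$. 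The base case of Lemma~\ref{lem:RankLemma}---distinguishing $x \in C$ from $y$ outside the analogous class $C' := h^H \cup (h^H)^{-1}$ in $H$---is implementable in $O(1)$ pebbles and $O(1)$ rounds by pebbling a putative conjugator $a$ and enforcing $x \cdot a = a \cdot g$.

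The rank-mismatch witness is produced via the Babai--Kantor--Lubotzky short generating set (Theorem~\ref{thm:BabaiKantorLubotsky}): individualize $\bar g = (g_1, \ldots, g_7)$ for $S$, and binary-search the length-$O(\log n)$ word representing $g$ over $\bar g$; this costs $O(\log\log n)$ rounds while reusing $O(1)$ pebbles for intermediate products, and forces Duplicator to place $\bar b \subseteq H$ with $h = w_g(\bar b) \in K := \langle \bar b \rangle$. Since $K$ is a homomorphic image of the simple group $S$, it is itself isomorphic to $S$ (as $h \ne 1$ rules out the trivial quotient). Now $h \in K \cap \Soc(H)$ combined with $\wt(h) \ge 2$ forces $K \subseteq \Soc(H)$ and rules out $K$ being contained in any single simple factor, so $K$ must embed diagonally across at least two simple factors of $\Soc(H)$; such a diagonal $K$ is not normal in $H$, in contrast to $S \triangleleft G$. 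Spoiler then pebbles a conjugator $a \in G$ and the element $a g a^{-1} \in S$, and a second binary-search certification of $a g a^{-1}$ as a word in $\bar g$ ($O(\log\log n)$ rounds, $O(1)$ pebbles) forces Duplicator's $a' h (a')^{-1}$ to lie in $K$; for a suitable $a$ (whose existence is guaranteed by the Liebeck--Shalev-bounded diameter), no such $a'$ exists, triggering the rank discrepancy that Lemma~\ref{lem:RankLemma} converts into a Spoiler win.

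The main obstacle is ensuring that Duplicator's $\bar b$ is actually forced to generate a subgroup isomorphic to $S$, rather than a strict quotient or an inconsistent set of elements satisfying only the single word relation $h = w_g(\bar b)$. I address this by verifying, in addition to $g = w_g(\bar g)$, a constant number of further short relators that together present $S$ over the $\bar g$ generators (finite simple groups are known to admit presentations of polylogarithmic total length in suitable generating sets). Each extra relator costs one more $O(\log\log n)$-round binary search with $O(1)$ reused pebbles, so the overall cost remains $O(1)$ pebbles and $O(\log\log n)$ rounds, establishing the lemma.
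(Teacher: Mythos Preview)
The paper does not supply its own proof of this lemma; it is quoted from \cite{CollinsLevetWL} with the footnote observing that the Collins--Levet argument already covers the case $\wt(h)>1$. So the relevant question is whether your reconstruction is correct on its own, and there are genuine gaps.

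The central error is the assertion ``$S \triangleleft G$.'' A simple direct factor $S$ of $\Soc(G)$ is normal in $\Soc(G)$, but $G$ acts by conjugation permuting the simple factors, so in general $S$ is \emph{not} normal in $G$ (take $G = S_5 \wr S_2$, where the two $A_5$ factors are swapped). Your contrast ``diagonal $K$ is not normal in $H$, in contrast to $S \triangleleft G$'' therefore collapses. The correct contrast is $S \triangleleft \Soc(G)$ versus $K \not\triangleleft \Soc(H)$, but exploiting that requires Spoiler to restrict conjugators to the socle, and you have not shown how to enforce ``$a \in \Soc(G)$ versus $a' \in \Soc(H)$'' without circularly invoking the very weight-detection you are proving.

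Relatedly, the pebbling direction is off. You have Spoiler pebble $a \in G$ with $aga^{-1} \in S$ and then assert ``for a suitable $a$ \ldots\ no such $a'$ exists.'' But Duplicator chooses $a'$ \emph{after} seeing $a$, so for this line to succeed you would need an $a$ such that \emph{no} $a' \in H$ satisfies $a'h(a')^{-1} = w(\bar b)$, which you have not established. To exploit non-normality of $K$ one should pebble the bad conjugator on the $H$ side---but then Duplicator may respond with $a \notin N_G(S)$, so $aga^{-1} \notin S$ and your binary-search certification fails. Smaller issues: your opening Rank-Lemma setup with $C = g^G \cup (g^{-1})^G$ is never actually deployed; the case $\wt(h)=\infty$ (i.e.\ $h\notin\Soc(H)$) is silently dropped when you write ``$h \in K \cap \Soc(H)$''; and the appeal to ``a constant number of further short relators'' presenting $S$ needs a precise citation (the GKKL short presentations work, but you should say so and verify the relator count is $O(1)$ uniformly).
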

  
We now obtain the following immediate consequence.

\begin{lemma}
Let $G, H$ be Fitting-free groups. Suppose that $g \in G, h \in H$ with $\wt(g) \neq \wt(h)$. Then in the Version I pebble game, starting from the initial configuration $g \mapsto h$, Spoiler can win using $O(1)$ pebbles and $O(\log \log n)$ rounds.
\end{lemma}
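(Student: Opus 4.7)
The plan is to invoke the Rank Lemma (Lemma~\ref{lem:RankLemma}) with $C$ taken to be the set of weight-one elements (on both the $G$- and $H$-sides). Observe that $C$ is closed under inverses, $\langle C \rangle = \Soc(G)$, and for every $x \in G$ one has $\rk_C(x) = \wt(x)$, valued in $\N \cup \{\infty\}$, with infinite rank precisely on $G \setminus \Soc(G)$ (symmetrically on the $H$-side). In particular the hypothesis $\wt(g) \neq \wt(h)$ is equivalent to $\rk_C(g) \neq \rk_C(h)$, so once I verify the base case required by Lemma~\ref{lem:RankLemma} and bound the relevant Cayley diameter, the conclusion will follow.

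First I will dispatch the base case: for $x \in C$ (so $\wt(x) = 1$) and $y \notin C$ (so $\wt(y) \neq 1$), Spoiler wins from the configuration $x \mapsto y$ with $O(1)$ pebbles in $O(\log \log n)$ rounds. The sub-case $\wt(y) > 1$ (including $\wt(y) = \infty$) is exactly Lemma~\ref{lem:CL6.10}. The only remaining sub-case is $\wt(y) = 0$, i.e.\ $y = 1_H$ while $x \neq 1_G$. Here Spoiler simply re-pebbles $x$ on the $G$-side: whatever $y' \in H$ Duplicator plays, either $y' \neq 1$ and the two $G$-pebbles coincide while the $H$-pebbles do not (violating condition (b)), or $y' = 1$, in which case $x \cdot x \neq x$ while $1 \cdot 1 = 1$ (violating condition (a)). This resolves the base case in one additional round with one additional pebble pair.

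Second, I will bound the diameter $d$ of $\mathrm{Cay}(\Soc(G), C)$. Every $s \in \Soc(G)$ factors as a product of one weight-one element per non-trivial simple-factor coordinate, so $d$ is at most the number of non-Abelian simple factors of $\Soc(G)$, which in turn is at most $\log_{60} n$ since every non-Abelian simple group has order at least $60$. Hence $\log d = O(\log \log n)$, and the same bound holds on the $H$-side. Feeding the base-case constants $(O(1), O(\log \log n))$ and this diameter estimate into Lemma~\ref{lem:RankLemma} yields that Spoiler wins from $g \mapsto h$ using $O(1)$ pebbles and $O(\log \log n) + \log d + O(1) = O(\log \log n)$ rounds, as required.

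The only subtlety I anticipate is the $\wt(y) = 0$ sub-case of the base, which is not covered by Lemma~\ref{lem:CL6.10} as stated and therefore requires the short ad-hoc argument above; everything else reduces to a mechanical application of the Rank Lemma once one identifies $\rk_C$ with $\wt$.
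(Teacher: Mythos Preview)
Your proof is correct and follows essentially the same approach as the paper's: apply the Rank Lemma with $C$ the set of weight-one elements, invoke Lemma~\ref{lem:CL6.10} for the base case, and use that the Cayley diameter is $O(\log n)$. In fact you are more careful than the paper, which glosses over both the $\wt(y)=0$ sub-case of the base (your ad-hoc argument handles it, and indeed Spoiler already wins at the initial configuration since $x^2 \neq x$ while $1^2 = 1$) and the explicit diameter bound.
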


\begin{proof}
Let $C_{G}$ be the set of weight one elements in $G$, and define $C_{H}$ analogously. By Lemma~\ref{lem:CL6.10}, if $g \in C_{G}$ and $h \not \in C_{H}$, then Spoiler can win with $O(1)$ pebbles and $O(\log \log n)$ rounds. The hypotheses of Lemma~\ref{lem:RankLemma} are now satisfied. Thus, if $\wt(g) \neq \wt(h)$, then by Lemma~\ref{lem:RankLemma}, Spoiler can win with $O(1)$ pebbles and $O(\log \log n)$ rounds. The result now follows.
\end{proof}

\subsection{Proof of Theorem~\ref{thm:FittingFreeWL}}

We now turn to proving Theorem~\ref{thm:FittingFreeWL}. 

\begin{definition}[Pebbling Strategy] \label{def:InitialPebbling}
Let $G = \prod_{i=1}^{k} S_{i}$ be a group of order $n$, where each $S_{i}$ is non-Abelian simple. For each $i \in [k]$, let $(x_{i,1}, \ldots, x_{i,7})$ be the generating sequence prescribed by Theorem~\ref{thm:BabaiKantorLubotsky}. Now for each $j \in [7]$, Spoiler pebbles $x_{j} := \prod_{i=1}^{k} x_{i,j}$. Let $\ell := \lceil \log_{2}(k) \rceil$. Suppose that Spoiler places the following pebbles:
\begin{align*}
&p_{1} := \prod_{i=1}^{k/2} x_{i}, \\
&p_{2} := \left( \prod_{i=1}^{k/4} x_{i} \right) \cdot \left( \prod_{i=k/2+1}^{3k/4} x_{i} \right), \\
&p_{3} = \left( \prod_{a=1}^{\frac{k}{8}}  x_a\right)\left( \prod_{a=\frac{k}{4} + 1}^{\frac{3k}{8}}  x_a\right)\cdots \left( \prod_{a=\frac{3k}{4} + 1}^{\frac{7k}{8}}  x_a\right),\\
\end{align*}

\noindent and so on. More generally, for each $j \in [\ell]$, Spoiler pebbles:
\begin{align*}
p_{j} := \prod_{i=1}^{j} \left(\prod_{m=1+(i-1) \cdot k/2^{i-1}}^{k/2^{i}+(i-1)k/2^{j}} x_{m}\right).
\end{align*}
\end{definition}

\begin{lemma} \label{lem:Ordering} 
Let $G = \prod_{i=1}^{k} S_{i}$ and $H = \prod_{i=1}^{k} S_{i}'$ be groups of order $n$, where each $S_{i}, S_{i}'$ is non-Abelian simple (note that $S_{i}$ and $S_{i}'$ need not be isomorphic). Let $\ell := \lceil \log_{2}(k) \rceil$. Suppose that Spoiler begins by employing the pebbling strategy from Definition~\ref{def:InitialPebbling}. We denote Duplicator's responses by:
\begin{itemize}
    \item for each $j \in [7]$, $x_{j}' \in H$ in response to $x_{j} \in G$, and
    \item for each $q \in [\ell]$, $p_{q}'$ in response to $p_{q}$.    
\end{itemize}

\noindent Suppose that $\wt(x_{j}) = \wt(x_{j}')$ for all $j \in [7]$; and that for all $q \in [\ell]$, $\wt(p_{q}) = \wt(p_{q}')$. Then for any $a,b \in [k]$ with $a \neq b$, there exists $m \in [\ell]$ such that $\wt(p_{m} \cdot x_{a}^{-1}) \neq \wt(p_{m} \cdot x_{b}^{-1})$.
\end{lemma}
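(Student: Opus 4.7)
The plan is to interpret each $p_m$ in terms of a bit-indexed subset of $[k]$, and then reduce the claim to a simple observation about binary representations. Concretely, by unpacking Definition~\ref{def:InitialPebbling} one checks that $p_m = \prod_{i \in I_m} x_i$, where $I_m \subseteq [k]$ is the set of indices $i$ such that, writing $i-1$ as an $\ell$-bit binary integer, the $m$-th most significant bit is $0$. In particular $|I_m| = k/2$, and the family $\{I_m\}_{m=1}^{\ell}$ separates distinct elements of $[k]$: for $a \neq b$, the binary representations of $a-1$ and $b-1$ disagree in some position $m$, so exactly one of $a,b$ lies in $I_m$.

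Next I would compute the weights. Since each $x_i$ is a non-identity element of the simple direct factor $S_i$, and distinct $S_i$'s commute inside $\Soc(G)$, the canonical decomposition of $p_m$ in $\prod_i S_i$ has $x_i$ in the $i$-th coordinate for $i \in I_m$ and the identity elsewhere. Multiplying by $x_a^{-1}$ alters only the $a$-th coordinate: if $a \in I_m$ it becomes the identity, decreasing the weight by one; if $a \notin I_m$ it becomes $x_a^{-1} \neq 1$, increasing the weight by one. Hence
\[
\wt(p_m \cdot x_a^{-1}) = |I_m| - 1 \text{ if } a \in I_m, \qquad \wt(p_m \cdot x_a^{-1}) = |I_m| + 1 \text{ otherwise.}
\]

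Combining the two steps, for any $a \neq b$ I choose $m \in [\ell]$ at a bit position where the binary representations of $a-1$ and $b-1$ differ. Then exactly one of $a,b$ lies in $I_m$, and so $\{\wt(p_m x_a^{-1}),\, \wt(p_m x_b^{-1})\} = \{|I_m|-1,\, |I_m|+1\}$, which are distinct (they differ by $2$); this gives the required separating index $m$.

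The step that actually needs care is the first: unpacking the nested products in Definition~\ref{def:InitialPebbling} to verify that $p_m$ really equals $\prod_{i \in I_m} x_i$ for the ``$m$-th bit is zero'' subset $I_m$. This is purely combinatorial bookkeeping of index ranges of the form $[1+(i-1)k/2^{i-1},\, k/2^i + (i-1)k/2^{i-1}]$, and a short induction on $m$ should suffice. I would also confirm in passing that each $x_i \neq 1$ (inherited from its construction via the generators of Theorem~\ref{thm:BabaiKantorLubotsky}), which is what makes the $\pm 1$ weight calculation valid.
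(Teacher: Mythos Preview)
Your proof is correct and rests on the same binary-search idea the paper uses; the only difference is presentation. The paper proves the claim by induction on $k$: if $\wt(p_1 x_a^{-1}) = \wt(p_1 x_b^{-1})$ then $a,b$ lie in the same half of $[k]$, and one recurses on that half using $p_2,\ldots,p_\ell$. You instead unpack the recursion explicitly, identifying $p_m$ with the bit-$m$-equals-zero subset $I_m \subseteq [k]$ and observing directly that distinct indices differ in some bit. Your version is arguably cleaner, since it avoids the inductive bookkeeping and makes the $\pm 1$ weight change completely explicit; the paper's inductive phrasing, on the other hand, sidesteps having to verify the closed-form description of $I_m$ against the (somewhat awkwardly indexed) general formula for $p_j$ in Definition~\ref{def:InitialPebbling}. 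Neither proof actually uses the hypotheses on Duplicator's responses, which is fine: the conclusion is purely a statement about weights in $G$.
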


The idea is intuitively that we can use the $p_i$ to do binary search to find out which factor $x_a$ and $x_b$ live in.

\begin{proof}
The proof is by induction on $k$, the number of direct factors of $G$ and $H$. When $k = 1$, $G$ and $H$ are both non-Abelian simple. So the claim is vacuously true. Fix $k \geq 1$, and suppose that for all $1 \leq h \leq k$, if $1 \leq a < b \leq h$, then there exists some $m \in [\ell]$ such that $\wt(p_{m} \cdot x_{a}^{-1}) \neq \wt(p_{m} \cdot x_{b}^{-1})$.

Consider now the $k+1$ case. If $\wt(p_{1} \cdot x_{a}^{-1}) \neq \wt(p_{1} \cdot x_{b}^{-1})$, then we are done. Otherwise, we have that either $S_{a}, S_{b} \in \{S_{1}, \ldots, S_{k/2}\}$ or $S_{a}, S_{b} \in \{ S_{k/2 +1}, \ldots, S_{k}\}$. Without loss of generality, suppose that $S_{a}, S_{b} \in \{S_{1}, \ldots, S_{k/2}\}$. Let $G' := \prod_{i=1}^{k/2} S_{i}$ and $H' := \prod_{i=1}^{k/2} S_{i}'$. We thus apply the inductive hypothesis to $G', H'$, to deduce that there exists $2 \leq m \leq \ell$ such that $\wt(p_{m} \cdot x_{a}^{-1}) \neq \wt(p_{m} \cdot x_{b}^{-1})$. The result now follows. 
\end{proof}

\begin{lemma} \label{lem:PreservesOrder}
Let $G = \prod_{i=1}^{k} S_{i}$ and $H = \prod_{i=1}^{k} S_{i}'$ be groups of order $n$, where each $S_{i}, S_{i}'$ is non-Abelian simple (note that $S_{i}$ and $S_{i}'$ need not be isomorphic). Let $\ell := \lceil \log_{2}(k) \rceil$. Suppose that Spoiler begins by employing the pebbling strategy from Definition~\ref{def:InitialPebbling}. We denote Duplicator's responses by:
\begin{itemize}
    \item For each $j \in [7]$, Duplicator pebbles $x_{j}' \in H$ in response $x_{j} \in G$, and 
    \item For each $q \in [\ell]$, Duplicator pebbles $p_{q}'$ in response to $p_{q}$.    
\end{itemize}

\noindent Suppose that $\wt(x_{j}) = \wt(x_{j}')$ for all $j \in [7]$; and that for all $q \in [\ell]$, $\wt(p_{q}) = \wt(p_{q}')$. For each $i \in [k]$ and each $j \in [7]$, let $s_{i,j}'$ be the unique element in $S_{i}'$ s.t. $\wt(x_{j}' \cdot (s_{i,j}')^{-1}) < \wt(x_{j}')$. Suppose that any of the following conditions hold:
\begin{enumerate}[label=(\alph*)]

\item For some $a \neq b$, Spoiler pebbles $g \in S_{a}$ and Duplicator responds by pebbling $h \in S_{b}'$ (or vice-versa),

\item Suppose that the map $(s_{i,1}, \ldots, s_{i,7}) \mapsto (s_{i,1}', \ldots, s_{i,7}')$ does not extend to an isomorphism of $S_{i} \cong S_{i}'$.

\item For some $i \in [k]$, $S_{i} \not \cong S_{i}'$.

\item Consider the same assumptions as in (b). Let $\varphi_{i} : S_{i} \cong S_{i}'$ be the isomorphism induced by the map $(s_{i,1}, \ldots, s_{i,7}) \mapsto (s_{i,1}', \ldots, s_{i,7}')$. Suppose that Spoiler pebbles some $g \in S_{i}$ and Duplicator responds by pebbling $h \in S_{i}'$. Now suppose that $\varphi_{i}(g) \neq h$. 
\end{enumerate} 

Then Spoiler can win with $O(1)$ additional pebbles and $O(\log \log n)$ additional rounds.
\end{lemma}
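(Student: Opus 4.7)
The plan is to handle each of the four cases (a)--(d) by exploiting the ``coordinate system'' installed by Spoiler's opening moves. The elements $p_1, \ldots, p_\ell$ act (via \Lem{lem:Ordering}) as $\ell = \lceil \log_2 k \rceil$ binary-search coordinates for the direct factors of $\Soc(G)$, while $x_1, \ldots, x_7$ together with \Thm{thm:BabaiKantorLubotsky} allow every element of each non-Abelian simple factor $S_i$ to be realized as a word of length $O(\log |S_i|) \le O(\log n)$ in the $s_{i,j}$'s. All four cases will ultimately produce a weight-mismatched pair of elements in $G$ vs.\ $H$, at which point \Lem{lem:CL6.10} and the Rank Lemma (\Lem{lem:RankLemma}) finish the game in $O(\log\log n)$ additional rounds using $O(1)$ additional pebbles.

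For case (a), Spoiler invokes \Lem{lem:Ordering} to locate an $m \in [\ell]$ for which $p_m$ is supported on exactly one of $S_a, S_b$; equivalently, $\wt(p_m \cdot g^{-1})$ and $\wt(p_m' \cdot h^{-1})$ differ. Since $p_m$ and $g$ are already pebbled, Spoiler uses $O(1)$ additional pebbles and rounds to force the product $p_m \cdot g^{-1}$ into play (by pebbling that specific element and using the multiplication relation to bind Duplicator's response to $p_m' \cdot h^{-1}$). The resulting weight discrepancy then reduces to \Lem{lem:CL6.10} and \Lem{lem:RankLemma}, finishing in $O(\log\log n)$ further rounds.

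Case (c) reduces to case (b), because $S_i \not\cong S_i'$ immediately entails that no map on generators extends to an isomorphism. Cases (b) and (d) both yield a word $w$ in the seven formal generators, of length at most $C \log |S_i| = O(\log n)$ by \Thm{thm:BabaiKantorLubotsky}, such that $w(s_{i,1}, \ldots, s_{i,7})$ and $w(s_{i,1}', \ldots, s_{i,7}')$ differ as group elements (in case (d), $w$ is chosen so that $w(s_{i,1}, \ldots, s_{i,7}) = g$, whence $w(s_{i,1}', \ldots, s_{i,7}') = \varphi_i(g) \ne h$). Spoiler then plays on $w$ recursively via a balanced binary-multiplication tree: at each internal node he claims $w = w_L \cdot w_R$, pebbles the values of $w_L$ and $w_R$, and by the multiplication relation forces Duplicator's responses; he then descends into whichever half exhibits the discrepancy. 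The tree has depth $O(\log\log n)$, pebbles are recycled at each descent so only $O(1)$ are in play at a time, and at each leaf we have an atomic generator $s_{i,j}$ that can be compared against the already-pebbled $x_j$ via the strategy of case (a).

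The principal obstacle is the correct implementation of the binary-multiplication tree inside the count-free Version I pebble game. Specifically, one must verify that forced responses via the multiplication relation suffice to pin down each intermediate product without any appeal to counting, that pebbles are genuinely recycled across levels of the tree, and that ``isolating'' an individual component $s_{i,j}$ from the composite $x_j = \prod_i x_{i,j}$ can be accomplished with only $O(1)$ pebbles and $O(\log\log n)$ rounds---presumably by combining $x_j$ with suitable products of the $p_m$'s and exploiting the case-(a)-style weight arguments to zero out all but the $i$-th coordinate.
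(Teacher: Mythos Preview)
Your overall architecture matches the paper's, and your treatment of (b)--(d) is essentially what the paper does: the paper pebbles the seven generators $s_{i,1},\ldots,s_{i,7}$, forces Duplicator (via part (a)) to respond with $s_{i,1}',\ldots,s_{i,7}'$, and then cites the marked-isomorphism argument of Collins--Levet, which is exactly the binary-multiplication-tree descent you sketch. So there is no disagreement there beyond the paper outsourcing what you spell out.

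The gap is in your handling of (a). You claim that once $g\in S_a$ and $h\in S_b'$ with $a\neq b$ are on the board, \Lem{lem:Ordering} yields an $m$ with $\wt(p_m g^{-1})\neq\wt(p_m' h^{-1})$. But \Lem{lem:Ordering} is a statement entirely inside $G$; it says nothing about the support pattern of $p_m'$ in $H$. The hypothesis only gives $\wt(p_q)=\wt(p_q')$, which does not pin down \emph{which} factors $p_m'$ is supported on. Duplicator could have chosen the $p_q'$ so that their support realizes a permutation of $[k]$ under which $a$ corresponds to $b$; in that situation $\wt(p_m g^{-1})=\wt(p_m' h^{-1})$ for every $m$, and your direct weight comparison fails.

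The paper avoids this by not comparing $g$ to the $p_m$'s directly. Instead, in (a) Spoiler first pebbles $s_{a,1},\ldots,s_{a,7}$; Duplicator's responses $h_1',\ldots,h_7'$ are then analyzed in three subcases. The decisive one is Case~3: once the $h_j'$ are forced (up to the other cases) to equal $s_{a,j}'\in S_a'$, the simple product $g\, s_{a,1}$ has weight~$1$ while $h\, h_1'$ has weight~$2$ since $h\in S_b'$ and $h_1'\in S_a'$ live in different factors. This weight-$1$ versus weight-$2$ discrepancy is what actually detects the factor mismatch, and it does not depend on the unknown support structure of the $p_m'$. Your argument for (a) is missing this intermediate pebbling step and the resulting product-of-two-pebbled-elements comparison.
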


\begin{proof}
\noindent 
\begin{enumerate}[label=(\alph*)]
\item Spoiler begins by pebbling $s_{a,1}, \ldots, s_{a,7}$. Let $h_{1}', \ldots, h_{7}'$ be Duplicator's response. We have several cases to consider. 
\begin{itemize}
    \item \textbf{Case 1:} If $h_{1}' \neq s_{a,1}'$, then by Lemma~\ref{lem:Ordering}, there exists $m \in [\ell]$ such that $\wt(p_{m} \cdot s_{a,1}^{-1}) \neq \wt(p_{m}' \cdot (h_{1}')^{-1})$. So by Lemma~\ref{lem:RankLemma}, Spoiler can win with $O(1)$ additional pebbles and $O(\log \log n)$ additional rounds.

    \item \textbf{Case 2:} Suppose instead that  the map $(g, s_{a,1}, \ldots, s_{a,7}) \mapsto (h, h_{1}', \ldots, h_{7}')$ does not extend to an isomorphism. Collins and Levet \cite[Section~3]{CollinsLevetWL} previously established that, Spoiler can win with $O(1)$ additional pebbles and $O(\log \log n)$ additional rounds.

    \item \textbf{Case 3:} Finally, suppose that neither Cases 1-2 hold. So the map $(g, s_{a,1}, \ldots, s_{a,7}) \mapsto (h, h_{1}', \ldots, h_{7}')$ does extend to an isomorphism. However, $h \in S_{b}'$, while $h_{1}' = s_{a,1}' \in S_{a}'$. So $\wt(hh_{1}') = 2$. On the other hand, $g, s_{a,1} \in S_{a}$, and so $\wt(gs_{a,1}) = 1$. By Lemma~\ref{lem:RankLemma}, Spoiler now wins with $O(1)$ additional pebbles and $O(\log \log n)$ additional rounds. 
    
\end{itemize}

\item Spoiler begins by pebbling $s_{i,1}, \ldots, s_{i,7}$. By part (a) and Lemma~\ref{lem:Ordering}, we may assume that Duplicator responds by pebbling $s_{i,1}', \ldots, s_{i,7}'$. Otherwise, Spoiler wins with $O(1)$ additional pebbles and $O(\log \log n)$ additional rounds. Now suppose that the map $(s_{i,1}, \ldots, s_{i,7}) \mapsto (s_{i,1}', \ldots, s_{i,7}')$ does not extend to an isomorphism of $S_{i} \cong S_{i}'$. Collins and Levet \cite[Section~3]{CollinsLevetWL} previously established that, Spoiler can win with $O(1)$ additional pebbles and $O(\log \log n)$ additional rounds.

\item This follows immediately from (b).

\item This follows immediately from (b). \qedhere
\end{enumerate}
\end{proof}

\noindent We will now prove Theorem~\ref{thm:FittingFreeWL}.

\begin{proof}[Proof of Theorem~\ref{thm:FittingFreeWL}]
Collins and Levet \cite[Lemma~6.6]{CollinsLevetWL} established that if $G$ is Fitting-free and $H$ is not Fitting-free, then Spoiler can win with $O(1)$ pebbles and $O(1)$ rounds. So suppose that $H$ is Fitting-free.    
 
Let $\Fac(\Soc(G)) = \{ S_{1}, \ldots, S_{k} \}$ and $\Fac(\Soc(H)) = \{ S_{1}', \ldots, S_{k'}'\}$. We now claim that if $k \neq k'$, then Spoiler can win with $O(1)$ additional pebbles and $O(1)$ additional rounds. Suppose that, without loss of generality, $k > k'$. In this case, Spoiler pebbles some element $g \in G$ such that $\wt(g) = k$. Regardless of which $h \in H$ Duplicator pebbles in response, $\wt(h) \neq \wt(g)$. So by Lemma~\ref{lem:RankLemma}, Spoiler wins with $O(1)$ additional pebbles and $O(\log \log n)$ additional rounds.

Let $\ell := \lceil \log_{2}(k) \rceil$. Spoiler now applies the pebbling strategy of Definition~\ref{def:InitialPebbling} to $\Soc(G)$, to pebble $x_{1}, \ldots, x_{7}, p_{1}, \ldots, p_{\ell}$.  Let $x_{1}', \ldots, x_{7}', p_{1}', \ldots, p_{\ell}'$ be Duplicator's response. By Lemma~\ref{lem:RankLemma}, we may assume that:
\begin{itemize}
    \item for all $j \in [7]$, $\wt(x_{j}) = \wt(x_{j}')$, and 
    \item for all $h \in [\ell]$, $\wt(p_{h}) = \wt(p_{h}')$.
\end{itemize}

\noindent Otherwise, Spoiler can win with $O(1)$ additional pebbles and $O(\log \log n)$ additional rounds. By Lemma~\ref{lem:PreservesOrder}, we may assume that, for all $i \in [k]$, $S_{i} \cong S_{i}'$. 

For each $i \in [k]$ and each $j \in [7]$, let $s_{i,j}'$ be the unique element in $S_{i}'$ s.t. $\wt(x_{j}' \cdot (s_{i,j}')^{-1}) < \wt(x_{j}')$. By Lemma~\ref{lem:PreservesOrder}, we have that for each $i \in [k]$, the map $(s_{i,1}, \ldots, s_{i,7}) \mapsto (s_{i,1}', \ldots, s_{i,7}')$ extends to an isomorphism of $S_{i} \cong S_{i}'$. Otherwise, Spoiler wins with $O(1)$ additional pebbles and $O(1)$ additional rounds.

For each $i \in [k]$, let $\varphi_{i} : S_{i} \to S_{i}'$ by the isomorphism induced by the map $(s_{i,1}, \ldots, s_{i,7}) \mapsto (s_{i,1}', \ldots, s_{i,7}')$. Let $\varphi : \Soc(G) \to \Soc(H)$ be the unique isomorphism satisfying $\varphi|_{S_{i}} = \varphi_{i}$, for all $i \in [k]$. By Lemma~\ref{CharacterizeSemisimple}, $\varphi$ extends in at most one way to an isomorphism $\widehat{\varphi}$ of $G$ and $H$. In particular, as $G \not \cong H$, there exists $g \in G$, such that for all $h \in H$, there exists $s_{gh} \in \Soc(G)$ such that $\varphi(gs_{gh}g^{-1}) \neq h\varphi(s_{gh})h^{-1}$. Grochow and Levet \cite[Theorem~4.17]{GLDescriptiveComplexity} showed that there exists such an $s_{gh}$ satisfying $\wt(s_{gh}) = 1$.

Spoiler begins by pebbling $g \in G$. Let $h$ be Duplicator's response. Let $s_{gh} \in \Soc(G)$ such that $\varphi(gs_{gh}g^{-1}) \neq h\varphi(s_{gh})h^{-1}$. Without loss of generality, suppose that $s_{gh} \in S_{1}$. Spoiler pebbles $s_{gh}$. Let $s'$ be Duplicator's response. By Lemma~\ref{lem:CL6.10}, we may assume that $\wt(s_{gh}) = \wt(s') = 1$; otherwise, Spoiler wins with $O(1)$ additional pebbles and $O(\log \log n)$ additional rounds. By Lemma~\ref{lem:PreservesOrder}, we have that $s' \in S_{1}'$ and $\varphi_{i}(s_{gh}) = s'$; otherwise, Spoiler wins with $O(1)$ additional pebbles and $O(\log \log n)$ additional rounds. 

Finally, suppose that $\varphi(gs_{gh}g^{-1}) \neq hs'h^{-1}$. Spoiler pebbles $s_{1,1}, \ldots, s_{1,7}$; and by Lemma~\ref{lem:PreservesOrder}, wins with $O(1)$ additional pebbles and $O(\log \log n)$ additional rounds.
\end{proof}

\begin{remark}
After developing the above proof, we learned that in fact the socle is $O(\log \log n)$-generated, as any direct product of $t$ non-Abelian simple groups is $O(\log t)$-generated \cite{wiegold, THEVENAZ1997352}. Thus an alternative proof is to pebble a generating set of the socle of size $O(\log \log n)$, and then directly apply the last three paragraphs of the above proof. However, we have kept the above proofs in this section intact as the results throughout the rest of the section may have further applications in other groups where the direct product decomposition is unique as subsets (not just up to isomorphism). Additionally, our proof yields bounds of $\exists^{\log n \log \log n}\textsf{FOLL}$ (Corollary~\ref{cor:quasiFOLL}). 

It is unclear whether count-free WL could be used to achieve these bounds, simply by pebbling $O(\log \log n)$ generators. We first consider WL Version II, which differs from WL Version I only in the initial coloring, where $\bar{u} = (u_1, \ldots, u_{k}), \bar{v} = (v_1, \ldots, v_k)$ receive the same initial color in WL Version II if and only if the map $\bar{u} \mapsto \bar{v}$ extends to an isomorphism of $\langle u_1, \ldots, u_k \rangle$ and $\langle v_1, \ldots, v_k \rangle$ \cite{WLGroups}. The refinement step for WL Versions I and II is identical. Note that the initial coloring of WL Version II is $\textsf{L}$-computable \cite{TangThesis}. Thus, we can achieve bounds of $\exists^{\log n \log \log n} \textsf{L}$ for isomorphism testing of Fitting-free groups using WL Version II. Grochow and Levet \cite[Theorem~6.1]{GrochowLevetWL} established the analogous result using $\textsf{quasiSAC}^{1}$ circuits of size $n^{\Theta(\log \log n)}$ by analyzing the individualization and refinement paradigm on WL Version II. However, a careful analysis of their proof already yields the $\exists^{\log n \log \log n}\textsf{L}$ bound. Note that $\textsf{L}$ and hence, $\exists^{\log n \log \log n}\textsf{L}$, can compute $\algprobm{Parity}$, while $\exists^{\log n \log \log n}\textsf{FOLL}$ cannot compute \algprobm{Parity}. It is open whether $\textsf{FOLL} \subseteq \textsf{L}$, and hence open whether $\exists^{\log n \log \log n}\textsf{FOLL} \subseteq \exists^{\log n \log \log n}\textsf{L}$.

Alternatively, we can use $O(\log n)$ rounds of count-free WL Version I to decide if the map $\bar{u} \mapsto \bar{v}$ extends to an isomorphism of $\langle u_1, \ldots, u_k \rangle$ and $\langle v_1, \ldots, v_k \rangle$. We will briefly sketch the argument. Let $f : \langle u_1, \ldots, u_k \rangle \to \langle v_1, \ldots, v_k \rangle$ be the induced map on the generated subgroups. Suppose that $\langle u_1, \ldots, u_k \rangle \not \cong \langle v_1, \ldots, v_k \rangle$. So there exists a minimum word $\omega(w_1, \ldots, w_k) = w_{i_{1}} \cdots w_{i_{\ell}}$ where $f(\omega(u_1, \ldots, u_k)) \neq \omega(v_1, \ldots, v_k)$. Spoiler pebbles $g := \omega(u_1, \ldots, u_k)$, and let $h$ be Duplicator's response. By the minimality of $\omega(w_1, \ldots, w_k)$, $h \neq v_{i_{1}} \cdots v_{i_{\ell}}$. At the next two rounds, Spoiler pebbles $g' := u_{i_{1}} \cdots u_{i_{\ell/2}}$ and $g'' := u_{i_{\ell/2 +1}} \cdots u_{i_{\ell}}$. Let $h', h''$ be Duplicator's response. Necessarily, either $h' \neq v_{i_{1}} \cdots v_{i_{\ell/2}}$ or $h'' \neq v_{i_{\ell/2+1}} \cdots v_{i_{\ell}}$. Without loss of generality, suppose that $h' \neq v_{i_{1}} \cdots v_{i_{\ell/2}}$. Spoiler now iterates, starting from $g' \mapsto h'$, reusing pebbles on $g, g''$. Spoiler wins with $O(1)$ additional pebbles and $O(\log n)$ additional rounds.

Note that for fixed $k$, each round of count-free WL Version I is $\textsf{AC}^{0}$-computable. Thus, count-free WL Version I can detect in $\textsf{AC}^{1}$ whether for prescribed generators $\bar{g}$ of $\Soc(G)$ and $\bar{h}$ of $\Soc(H)$, the map $\bar{g}\mapsto \bar{h}$ extends to an isomorphism of $\Soc(G) \cong \Soc(H)$. If so, then only $O(1)$ additional rounds are required to decide whether $G \cong H$. In total, this yields bounds of $\exists^{\log n \log \log n}\textsf{AC}^{1}$, to guess generators of $\Soc(G)$ (resp. $\Soc(H)$) to individualize, and then an $\textsf{AC}^{1}$ circuit to run count-free WL Version I. This is worse than $\exists^{\log n \log \log n}\textsf{FOLL}$. 
\end{remark}

\section{Conclusion}

We established several results concerning the complexity of identifying Fitting-free groups. In the multiplication table model, we showed that isomorphism testing between a Fitting-free group $G$ and an arbitrary group $H$ is solvable in $\textsf{AC}^{3}$. When the groups are given by generating sequences of permutations, we established a reduction from \algprobm{Linear Code Equivalence} (and hence, \algprobm{Graph Isomorphism}) to isomorphism testing of Fitting-free groups where $G = \text{PKer}(G)$. 

Finally, we showed that Fitting-free groups are identified by $\textsf{FO}$ formulas with $O(\log \log |G|)$ variables. In contrast, Grochow and Levet \cite{GrochowLevetWL} previously established that there exists an infinite family of Abelian groups that are not identified by $\textsf{FO}$ formulas using $o(\log n)$ variables. 

Our work leaves open several problems.

\begin{question}
Let $G$ be a Fitting-free group, and let $H$ be an arbitrary group. Suppose $G, H$ are given by their multiplication tables. Can isomorphism between $G$ and $H$ be decided in $\textsf{L}$? 
\end{question}

Isomorphism testing of many natural families of graphs, including graphs of bounded treewidth \cite{ElberfeldSchweitzer} and graphs of bounded genus \cite{ElberfeldKawarabayashiGenus}, are $\textsf{L}$-complete. While \algprobm{GpI} is not $\textsf{L}$-complete under $\textsf{AC}^{0}$-reductions \cite{ChattopadhyayToranWagner, CGLWISSAC}, establishing an upper bound of $\textsf{L}$ for isomorphism testing of Fitting-free groups nonetheless remains a natural target. 

While \algprobm{Graph Isomorphism} is solvable in quasipolynomial time \cite{BabaiGraphIso}, the best bound for \algprobm{Linear Code Equivalence} is $(2+o(1))^{n}$ \cite{BCGQ} (\cite{BBBDLLW} give randomized and quantum algorithms that are slightly better, but still $(c + o(1))^n$ time for some $2 > c > 1$). In light of Theorem~\ref{thm:GpIPermutationGroups}, \algprobm{Linear Code Equivalence} remains a barrier to improving the complexity for \algprobm{GpI} in the setting of permutation groups, and solving it in time $2^{o(n)}$ remains an important challenge.

\begin{question}
Establish an analogue of Theorem~\ref{thm:DescriptiveComplexity} for a logic whose distinguishing power is strictly greater than $\textsf{FO}$. Furthermore, in the same logic, exhibit an infinite family of solvable groups $(G_{m})_{m \geq 1}$ and a constant $C$, where any formula identifying $G_{m}$ requires at least $|C| \log |G_{m}|$ variables.
\end{question}

Recall that $\textsf{FO}$ corresponds to count-free WL, and $\textsf{FO} + \textsf{C}$ (first-order logic with counting quantifiers) corresponds to the classical counting WL (see e.g., \cite{CFI, ImmermanLander1990, WLGroups}). Lower bounds against the classical counting WL remain an intriguing open question in the setting of groups. It would be of interest to investigate in the setting of groups, measures of descriptive complexity whose distinguishing power sit between $\textsf{FO}$ and $\textsf{FO} + \textsf{C}$. 

\section*{Acknowledgements}

ML wishes to thank Peter Brooksbank, Edinah Gnang, Takunari Miyazaki, and James B. Wilson for helpful discussions. Parts of this work began at Tensors Algebra-Geometry-Applications (TAGA) 2024. ML wishes to thank Elina Robeva, Christopher Voll, and James B. Wilson for organizing this conference. ML was partially supported by CRC 358 Integral Structures in Geometry and Number Theory at Bielefeld and Paderborn, Germany; the Department of Mathematics at Colorado State University; James B. Wilson's NSF grant DMS-2319370; travel support from the Department of Computer Science at the College of Charleston; and SURF Grant SU2024-06 from the College of Charleston. 

DJ was partially supported by the Department of Computer Science at the College of Charleston, SURF Grant SU2024-06 from the College of Charleston, and ML startup funds.

JAG was partially supported by NSF CAREER award CCF-2047756.

\bibliographystyle{alphaurl}
\bibliography{references}

\end{document}